\renewcommand{\sectionautorefname}{Section}%
\newcommand\thickbar[1]{\accentset{\rule{.55em}{.6pt}}{#1}}
\tikzset{
  commutative diagrams/.cd,
  arrow style = tikz,
  diagrams    = {>=stealth},
  row sep     = large, 
  column sep  = huge
}
\newcommand{\klstar}{\sharp}  			%
\newcommand{\istar}{\dagger}  			%
\newcommand{\iistar}{\ddagger}  			%
\newcommand{\rstar}{\ast}  			%
\newcommand{\supp}{\oname{supp}}
\newcommand{\plus}{+}
\newcommand{\zero}{0}
\newcommand{\cpp}{\oplus}
\newcommand{\dc}{\mspace{\medmuskip};\mspace{.5\medmuskip}}
\renewcommand{\inl}{\oname{in_0}}
\renewcommand{\inr}{\oname{in_1}}
\newcommand{\inll}{\oname{in_{00}}}
\newcommand{\inlr}{\oname{in_{01}}}
\newcommand{\inrl}{\oname{in_{10}}}
\providecommand{\mto}{\mapsto}
\newcommand{\Mtx}{\catname{Mtx}}
\newcommand{\Fre}{\mathfrak{F}} %
\newcommand{\Frep}{\Fre} %
\newcommand{\Frepq}{\Fre^{?}} %
\renewcommand{\id}{1} %
\newcommand{\out}{\mathsf{out}}
\newcommand{\nat}{\mathbb{N}}
\newcommand{\Tame}[1][\BC]{\thickbar{#1}}
\newcommand{\Tst}{\BC^{?}}
\providecommand{\wt}[2]{\mfix{while}{\mathbin{}#1}{do}{\mathbin{}#2}}
\newcommand{\dec}{\mathinner{\raisebox{-1pt}{\scalebox{1.35}{$\diamond$}}}}
\newcommand{\Der}{\mathfrak{D}}
\newcommand{\Bis}{\@ifstar{\@@Bis}{\@Bis}}
\newcommand{\@Bis}{\CB}
\newcommand{\@@Bis}{\mathrel{\CB}}
\providecommand{\autorefs}[1]{{\def\sectionautorefname{Sections}\autoref{#1}}}
\let\div\undefined
\newcommand{\div}{\delta}
\newcommand{\superimpose}[2]{{%
  \ooalign{%
    \hfil$\m@th#1\@firstoftwo#2$\hfil\cr
    \hfil$\m@th#1\@secondoftwo#2$\hfil\cr
  }%
}}
\newlength{\dhatheight}
\renewcommand{\hat}[1]{%
\mathchoice{
  \settoheight{\dhatheight}{\ensuremath{#1}}%
  \addtolength{\dhatheight}{.65\dhatheight}%
  \mathpalette\superimpose{
    {\kern.15pt\scalebox{.8}[.65]{\ensuremath{\widehat{\phantom{\rule{8pt}{\dhatheight}}}}}}%
    {\ensuremath{#1}}
  }
}{
  \settoheight{\dhatheight}{\ensuremath{#1}}%
  \addtolength{\dhatheight}{.6\dhatheight}%
  \mathpalette\superimpose{
    {\kern.15pt\scalebox{.8}[.65]{\ensuremath{\widehat{\phantom{\rule{8pt}{\dhatheight}}}}}}%
    {\ensuremath{#1}}
  }
}{
	\widehat #1
}{
	\widehat #1
}
}
\newcommand{\axname}[1]{\text{\upshape\bfseries\textsf{#1}}\xspace}
\newcommand{\FIX}{\axname{Fixpoint}}
\newcommand{\NAT}{\axname{Naturality}}
\newcommand{\DIN}{\axname{Dinaturality}}
\newcommand{\UNI}{\axname{Uniformity}}
\newcommand{\COD}{\axname{Codiagonal}}
\newcommand{\Dec}{\BC^{\raisebox{-1pt}{\scalebox{.95}{$\diamond$}}}}
\newenvironment{ffigure}{
\begin{figure}[t]\begin{center}
\setlength{\fboxsep}{-5pt}
}{
\end{center}\end{figure}
}
\newcommand*\linenomathpatch[1]{%
  \cspreto{#1}{\linenomath}%
  \cspreto{#1*}{\linenomath}%
  \csappto{end#1}{\endlinenomath}%
  \csappto{end#1*}{\endlinenomath}%
}
\renewcommand{\c}{\colon}
\newcommand{\bs}{\mathsf{b}}
\newcommand{\fs}{\mathsf{f}}
\newcommand{\us}{\mathsf{u}}
\newcommand{\lrule}[3]{\textbf{#1}~~\frac{#2}{#3}}
\let\dir@frac\frac
\newcommand{\inv@frac}[2]{\dir@frac{#2}{#1}}
\renewcommand{\frac}{\@ifstar{\inv@frac}{\dir@frac}}
\newcommand{\boolnot}{\mathrel{\texttt{\upshape{\raisebox{0.5ex}{\,\texttildelow}}}}\kern-1.5pt}
\newcommand{\boolor}{\mathrel{\texttt{\upshape|\kern-2.95pt|}}}   %
\newcommand{\booland}{\mathrel{\texttt{\upshape\&\kern-.5pt\&}}} %
\renewcommand{\tt}{\oname{t\kern-.45pt t}}
\newcommand{\ff}{\oname{f\kern-.78pt f}}
\newcommand{\ifd}[3]{\mfix{\underline{if}}{\mathbin{}#1}{\underline{then}}{\mathbin{}#2}{\underline{else}}{\mathbin{\,}#3}}
\newcommand{\while}[2]{\mfix{\underline{while}}{\mathbin{\,}#1\mathbin{}}{\underline{do}}{\mathbin{\,}#2}}
\renewcommand{\paragraph}[1]{\medskip\noindent{\bfseries\rmfamily #1.}}
\title{A Unifying Categorical View of Nondeterministic Iteration and Tests}
\renewcommand{\c}{\colon}
\author{Sergey Goncharov}{Friedrich-Alexander-Universität Erlangen-Nürnberg, Germany}{sergey.goncharov@fau.de}{https://orcid.org/0000-0001-6924-8766}{German
  Research Foundation (DFG) project 501369690, Icelandic Research Fund project 228684-052}
\author{Tarmo Uustalu}{Reykjavik University, Iceland, and Tallinn University of Technology, Estonia}{tarmo@ru.is}{https://orcid.org/0000-0002-1297-0579}{Icelandic Research Fund project 228684-052}
\authorrunning{S.~Goncharov and T.~Uustalu}
\keywords{Kleene iteration, Elgot iteration, Kleene algebra, coalgebraic resumptions}
\begin{document}\allowdisplaybreaks
\maketitle

\begin{abstract}
We study Kleene iteration in the categorical context. A celebrated completeness result
by Kozen introduced Kleene algebra (with tests) as a ubiquitous tool for lightweight reasoning about 
program equivalence, and yet, numerous variants of it
came along afterwards to answer the demand for more refined flavors of semantics, such 
as stateful, concurrent, exceptional, hybrid, branching time, etc. We detach \emph{Kleene iteration}
from \emph{Kleene algebra} and analyze it from the categorical perspective. The notion, we 
arrive at is that of \emph{Kleene-iteration category} (with coproducts and tests),
which we show to be general and robust in the sense of compatibility with programming 
language features, such as exceptions, store, concurrent behaviour, etc. We attest 
the proposed notion w.r.t.\ various yardsticks, most importantly, by characterizing 
the free model as a certain category of (nondeterministic) \emph{rational trees}.

\end{abstract}

\section{Introduction}

Axiomatizing notions of iteration both algebraically and categorically is a 
well-established topic in computer science where two schools of thought can be 
distinguished rather crisply: the first one is based on the inherently nondeterministic 
\emph{Kleene iteration}, stemming from the seminal work of Stephen Kleene~\cite{Kleene56} and deeply rooted 
in automata and formal language theory; the second one stems from another seminal 
work -- by Calvin Elgot~\cite{Elgot75} -- and is based on another notion of iteration,
we now call \emph{Elgot iteration}. The most well-known instance of Kleene iteration is the 
one that is accommodated in the algebra of regular expressions where~$a^\rstar$ represents $n$-fold 
compositions $a\cdots a$ and $n$ nondeterministically ranges over all naturals.
More abstractly, Kleene iteration is an operation of the following type:
\begin{align*}%
\lrule{}{p\c X\to X}{p^\rstar\c X\to X}
\end{align*} 
Intuitively, we think of $p$ as a program whose inputs and outputs range over $X$, 
and of~$p^\rstar$ as a result of composing $p$ nondeterministically many times with itself. 
Elgot iteration, in contrast, is agnostic to nondeterminism,
but crucially relies on the categorical notion of binary coproduct, and thus can only be implemented 
in categorical or type-theoretic setting. Concretely, the typing rule for Elgot iteration 
is
\begin{align}\label{eq:elg-iter}\tag{$\istar$}
\lrule{}{p\c X\to Y+X}{p^\istar\c X\to Y}
\end{align} 
That is, given a program that receives an input from $X$, and can output either to
$Y$ or to~$X$,~$p^\istar$ self-composes $p$ precisely as long as $p$ 
outputs to $X$.

A profound exploration of both versions 
of iteration and their axiomatizations in the categorical context, more precisely, in the 
context of \emph{Lavwere theories}, has been done by Bloom and \'Esik in a series 
of papers and subsumed in their monograph~\cite{BloomEsik93}. One outcome of this 
work is that in the context of Lawvere theories, in presence of nondeterminism,
Kleene iteration and Elgot iteration are essentially equivalent -- the ensuing 
theory was dubbed \emph{iteration grove theory}~\cite{BloomEsikEtAl93}. 
The existing analysis still does not cover certain aspects,
which we expressly address in our present work, most importantly, the following.
\begin{itemize}%
  \item Lavwere theories are only very special categories, while iteration is a common
  ingredient of semantic frameworks, which often involve it directly via an ambient category with coproducts,
  and not via the associated Lavwere theory. 
  \item Previous results on the equivalence of Elgot iteration and Kleene iteration do not address 
  the connection between control mechanisms involved in both paradigms:
  Elgot iteration fully relies on coproducts for making decisions whether to continue 
  or to end the loop, while Kleene iteration for the same purpose uses an additional mechanism of \emph{tests}~\cite{Kozen97a},
  which are specified axiomatically and thus yield a higher degree of flexibility. 
  \item A key feature of Kleene iteration of Kleene algebra,
  are the quasi-equational laws, which can be recast~\cite{Goncharov23} 
  to a form of the versatile and powerful \emph{uniformity principle}~(e.g.~\cite{SimpsonPlotkin00}).
  The latter is parameterized by a class of well-behaved elements, which in Kleene 
  algebra coincide with the algebra's entire carrier. However, in many situations, this class
  has to be restricted, which calls for axiomatizing it, analogously to tests.
\end{itemize}
Here, we seek a fundamental, general and robust categorical notion of 
Kleene iteration, which addresses these issues, is in accord with Elgot iteration
and the corresponding established laws for it (Elgot iteration operators that satisfy
these laws are called \emph{Conway operators}). In doing so, we depart from the 
laws of Kleene algebra, and relax them significantly. Answering the question how 
to do this precisely and in a principled way is the main insight of our work.

Let us dwell briefly on the closely related issues of generality and robustness. 
The laws of Kleene algebra, as originally axiomatized by Kozen~\cite{Kozen94}, capture a very 
concrete style of semantics, mirrored in the corresponding free model, which is 
the algebra of regular events, i.e.\ the algebra of regular sets of strings over 
a finite alphabet of symbols, with iteration rendered as a least fixpoint. 
Equations validated by this model are thus shared by the whole class of Kleene 
algebras. By regarding Kleene algebra terms as programs, the interpretation over 
the free model can be viewed as \emph{finite trace semantics} of linear-time nondeterminism. 
A standard example of properly more fine-grained -- branching-time -- nondeterminism is (bisimulation-based) process algebra, 
which fails the Kleene algebra's law of distributivity from the left:
\begin{align}\label{eq:ldist}
p\dc(q\plus r) =&\; p\dc q\plus  p\dc r.
\end{align}
Similarly, if we wanted to allow our programs to raise exceptions, the laws of Kleene 
algebra would undesirably force all exceptions to be equal:
\begin{align*}
\oname{raise} e_1 = \oname{raise} e_1\dc\zero = \zero = \oname{raise} e_2\dc\zero = \oname{raise} e_2.
\end{align*}
Here, we combine the law $p\dc\zero = \zero$ of Kleene algebra with the equation
$\oname{raise}_i\dc p = \oname{raise}_i$ that alludes to the common
programming knowledge that raising an exception exits the program instantly
and discards any subsequent fragment $p$. The resulting equality $\oname{raise} e_1 = \oname{raise} e_2$
states that raising exception $e_1$ is indistinguishable from raising exception~$e_2$.

We can interpret these and similar examples as evidence that the axioms of Kleene 
algebra are not sufficiently robust under extensions by programming language features. 
More precisely, Kleene algebras can be scaled up to \emph{Kleene monads}~\cite{GoncharovSchroderEtAl09},
and thus reconciled with Moggi's approach to computational effects~\cite{Moggi91a}. An important
ingredient of this approach are \emph{monad transformers}, which allow for combining
effects in a principled way. For example, one uses the \emph{exception monad transformer} 
to canonically add exception raising to a given monad. The above indicates that 
Kleene monads are not robust under this transformer.

Finally, even if we accept all iteration-free implications of Kleene algebra, these will not 
jointly entail the following identity:
\begin{align}\label{eq:star-idmp}%
\id^\rstar = \id,
\end{align}
which is however entailed by the Kleene algebra axioms. One setting where~\eqref{eq:star-idmp}
is undesirable is domain theory, which insists on distinguishing \emph{deadlock}
from \emph{divergence}, in particular, \eqref{eq:star-idmp} is failed by interpreting 
programs over the \emph{Plotkin powerdomain}~\cite{Plotkin76}. Intuitively, 
\eqref{eq:star-idmp} states that, if a loop \emph{may} be exited, it \emph{will} eventually be 
exited, while failure of \eqref{eq:star-idmp} would mean that the left program \emph{may} diverge, while the 
right program \emph{must} converge, and this need not be the same. Let us call the corresponding 
variant of Kleene algebra, failing~\eqref{eq:star-idmp}, \emph{may-diverge Kleene algebra}. However, it is not 
a priori clear how the axioms of may-diverge Kleene algebras must look like, given that 
\eqref{eq:star-idmp} is not a Kleene algebra axiom, but a consequence of the 
assumption that Kleene iteration is a least fixpoint. Hence, in may-diverge 
Kleene algebras Kleene iteration is not a least fixpoint (w.r.t.\ the order, induced by $\plus$). 

The notion we develop and present here is that of \emph{Kleene-iteration category (with tests) (KiC(T))}.
It is designed to address the above issues and to provide a uniform general 
and robust framework for Kleene iteration in a category. We argue in various ways 
that KiC(T) is in a certain sense the most basic practical notion of Kleene iteration,
most importantly by characterizing its free model, as a certain category of (nondeterministic)
rational trees.

\paragraph{Related work}
The (finite or $\omega$-complete) partially additive categories (PACs) by
Arbib and Manes \cite{ArbibManes80} and the PACs with effects of Cho \cite{Cho15} are
similar in spirit to KiCs in that they combine structured homsets and
coproducts, but significantly more special; in particular they support relational, 
sets of traces and similar semantics, but not branching time semantics. 
A PAC is a category with coproducts
enriched in partial commutative monoids (PMC). The PMC structure of
homsets and the coproducts are connected by axioms that make the PMC
structure unique. In an $\omega$-complete PAC, these axioms also
ensure the presence of an Elgot iteration operator, which is computed as a least fixpoint. A PAC with effects
comes with a designated effect algebra object; this defines a wide
subcategory of total morphisms, with coproducts inherited from the
whole category. Effectuses \cite{Jacobs15} achieve the same as PACs
with effects, but
starting with a category of total morphisms and then adding partial
morphisms. Cockett~\cite{Cockett07} recently proposed a notion of iteration in a category, based on restriction 
categories, and analogous to Elgot iteration \eqref{eq:elg-iter}, but avoiding
binary coproducts in favor of a suitably axiomatized notion of disjointness for morphisms.

In the strand of Kleene algebra, various proposals were made with utilitarian motivations
to weaken or modify the Kleene algebra laws, and thus to cope with process algebra~\cite{FokkinkZantema94}, branching 
behaviour~\cite{Moller07}, probability~\cite{McIverRabehajaEtAl11}, statefulness~\cite{GrathwohlKozenEtAl14}, 
graded semantics~\cite{GomesMadeiraEtAl17}, without however aiming to identify the conceptual 
core of Kleene iteration, which is our objective here. A recent move within this 
tendency is to eliminate nondeterminism altogether, with \emph{guarded
Kleene algebras} \cite{SmolkaFosterEtAl20}, which replace nondeterministic choice and
iteration with conditionals and while-loops. This is somewhat related to our analysis of 
tests and iteration via while-loops, but largely orthogonal to our main objective
to stick to Kleene iteration as nondeterministic operator in the original sense. 
Our aim to reconcile Kleene algebra, (co)products and Elgot iteration is rather 
close to that of Kozen and Mamouras~\cite{KozenMamouras13}.

Our characterization of the free KiCT in a way reframes the original Kozen's characterization
of the free Kleene algebra~\cite{Kozen94}. We are not generalizing this result though, essentially because we 
work in categories with coproducts, while a true generalization would only be achieved 
via categories without any extra structure (noting that algebras are single-object categories). This distinction becomes particularly important
in the context of branching~time semantics, which we also cover by allowing a controlled use of 
programs that fail distributivity from the left~\eqref{eq:ldist}. An axiomatization 
for such semantics has been proposed by Milner~\cite{Milner84} and was shown to be
complete only recently~\cite{Grabmayer22}. Again, we are not generalizing this 
result, since the definability issues, known to be the main obstruction for 
completeness arguments there, are not effective in presence of coproducts. 

\paragraph{Plan of the paper} We review minimal notations and 
conventions from category theory in \autoref{sec:prelim}. We then introduce 
idempotent grove and Kleene-Kozen categories in \autoref{sec:KK} to start off. 
In \autoref{sec:dec}, we formally compare two control mechanisms in categories: 
decisions and tests. In \autorefs{sec:kleene},~\ref{sec:while}, we establish 
equivalent presentations of nondeterministic iteration as Kleene iteration, as 
Elgot iteration and as while-iteration. In \autoref{sec:free} we construct a free
model for our notion of iteration, and then come to conclusions in \autoref{sec:conc}.

\section{Notations and Conventions}\label{sec:prelim}
We assume familiarity with the basics of category 
theory~\cite{Mac-Lane71,Awodey10}. In a category~$\BC$,~$|\BC|$ will denote the 
class of objects and $\BC(X,Y)$ will denote the set of morphisms from~$X$ to~$Y$. The judgement 
$f\c X\to Y$ will be regarded as an equivalent to $f\in\BC(X,Y)$ if $\BC$ is clear
from the context.
We tend to
omit indexes at natural transformations for readability. A subcategory~$\BD$ of 
$\BC$ is called \emph{wide} if $|\BC|=|\BD|$. We will use diagrammatic composition 
$\dc$ of morphisms throughout, i.e.\ given $f\c X\to Y$ and $g\c Y\to Z$, $f\dc g\c X\to Z$. 
We will denote by $\id_X$, or simply $\id$ the identity morphism on~$X$. 

\paragraph{Coproducts} In this paper, by calling $\BC$ ``a category with coproducts'' we will always mean that 
$\BC$ has \emph{selected binary coproducts}, i.e.\ that a bi-functor $\cpp\c\BC\times\BC\to\BC$ exists
such that $X\cpp Y$ is a coproduct of $X$ and $Y$. 
In such a category, we write 
$\inl\c X\to X\cpp Y$ and $\inr\c Y\to X\cpp Y$
for the left and right coproduct injections correspondingly. We will occasionally 
condense $\oname{in_{i}}\dc\oname{in_{j}}$ to $\oname{in_{i\,j}}$ for the sake of succinctness. 
\paragraph{Monads} A monad~$\BBT$ on~$\BC$ is determined by a \emph{Kleisli triple}~$(T,\eta, (-)^\klstar)$,
consisting of a map $T\c{|\BC|\to|\BC|}$, a family of morphisms 
$(\eta_X\c X \to TX)_{X\in|\BC|}$ and \emph{Kleisli lifting} sending each~$f\c X \to T Y$ to~$f^\klstar\c TX \to TY$ and
obeying \emph{monad laws}: %
\begin{align*}
\eta^{\klstar}=\id, && 
\eta\dc f^{\klstar}=f,  && 
(g\dc f^{\klstar})^{\klstar}=g^{\klstar}\dc f^{\klstar}.
\end{align*}
It follows that~$T$ extends to a functor, $\eta$ extends to a natural transformation -- \emph{unit},
${\mu = \id^\klstar}\c TTX\to TX$ extends 
to a natural transformation -- \emph{multiplication}, and that $(T,\eta,\mu)$ 
is a monad in the standard sense~\cite{Mac-Lane71}. We will generally use blackboard capitals (such as~$\BBT$) to refer to monads 
and the corresponding Roman letters (such as~$T$) to refer to their functor parts.
Morphisms of the form~$f\c X\to TY$ are called \emph{Kleisli morphisms} and form the \emph{Kleisli
category}~$\BC_{\BBT}$ of $\BBT$ under \emph{Kleisli composition} $f,g\mto f\dc g^\klstar$ 
with identity~$\eta$. If $\BC$ has binary coproducts then so does $\BC_\BBT$: the 
coproduct injections are Kleisli morphisms of the form $\inl\dc\eta\c X\to T(X\cpp Y)$,
$\inr\dc\eta\c Y\to T(X\cpp Y)$.

\paragraph{Coalgebras} Given an endofunctor $F\c\BC\to\BC$, a pair $(X\in |\BC|,c\c X\to FX)$
is called an $F$-coalgebra. Coalgebras form a category under the following notion
of morphism: $h\c X\to X'$ is a morphism from $(X,c)$ to $(X',c')$ if $h\dc c' = c\dc Fh$.
A terminal object in this category is called a \emph{final coalgebra}. We reserve 
the notation $(\nu F,\out)$ for a selected final coalgebra if it exists. A well-known 
fact (Lambek's lemma) is that $\out$ is an isomorphism.

For a coalgebra $(X, c\c X\to FX)$ on $\Set$ a relation $\Bis\subseteq X\times X$ is 
a \emph{(coalgebraic) bisimulation} if it extends to a coalgebra $(\Bis,b\c \Bis\to F\Bis)$, such 
that the left and the right projections from $\Bis$ to $X$ are coalgebra morphisms;
$x\in X$ and $y\in X$ are \emph{bisimilar} if $x\Bis*y$ for some bisimulation $\Bis$;
the coalgebra $(X, c\c X\to FX)$ is \emph{strongly extensional} \cite{TuriRutten98} 
if bisimilarity entails equality. Final coalgebras are the primary example of 
strongly extensional coalgebras.

\section{Idempotent Grove and Kleene-Kozen Categories}\label{sec:KK}
A monoid is precisely a single-object category. Various algebraic structures
extending monoids can be generalized to categories along this basic observation
(e.g.\ a group is a single-object groupoid, a quantale is a single-object quantaloid, etc.). 
In this section, we consider two classes of categories for nondeterminism and Kleene 
iteration, 
which demonstrate our principled 
categorical approach of working with algebraic structures.
\begin{definition}[Idempotent Grove Category, cf.~\cite{BensonTiuryn89,BloomEsikEtAl93}]
Let us call a category $\BC$ an \emph{idempotent grove category}
if the hom-sets of $\BC(X,Y)$ are equipped with the structure $(0,+)$ of bounded 
join-semilattice such that, 
for all $p\in\BC(Y,Z)$ and $q, r\in\BC(X,Y)$, 
\begin{align}\label{eq:dist_plus}
\qquad\qquad \zero\dc p =&\; \zero,& (q\plus r)\dc p =&\; q\dc p\plus r\dc p. &&
\intertext{
  In such a category, we call a morphism $p\in\BC(X,Y)$ \emph{linear} if it satisfies, for all $q,r\in\BC(Y,Z)$, 
} 
\label{eq:dist_plus'}
\qquad\qquad p\dc\zero =&\; \zero,& p\dc(q\plus r) =&\; p\dc q\plus  p\dc r. && %
\end{align}
An \emph{idempotent grove category with coproducts} is an idempotent grove 
category with selected binary coproducts and with $\inl$ and $\inr$ linear.
\end{definition}
Given $p,q\in\BC(X,Y)$, let $p\leq q$ if $p\plus q=q$. This 
yields a partial order with $\zero$ as the bottom element, and morphism composition  
is monotone on the left, while linear morphisms are additionally monotone on the right.
The class of all linear morphisms of an idempotent grove category thus forms 
a sub-category enriched in bounded join-semilattices (equivalently: commutative
and idempotent monoids) -- thus, an idempotent grove category where all morphisms are 
linear is an enriched category. However, we 
are interested in categories where not all morphisms are linear. An instructive example
is as follows.
\begin{example}[Synchronization Trees]\label{exa:ST}
Let $A$ be some non-empty fixed set of \emph{labels}, and let $TX=\nu\gamma.\,\CSet(X\cpp A\times\gamma)$
where $\CSet$ is the countable powerset functor. By generalities~\cite{Uustalu03},~$T$ 
extends to a monad $\BBT$ on~$\Set$. The elements of~$TX$ can be characterized 
as~\emph{countably-branching strongly extensional synchronization trees with exit 
labels in $X$}. Synchronization trees have originally been introduced by Milner~\cite{Milner80}
as denotations of process algebra terms, and subsequently generalized to 
infinite branching and to explicit exit labels (e.g.~\cite{AcetoCarayolEtAl12}). 
A generic element $t\in TX$ can be more explicitly represented using the following 
syntax:
\begin{align*}
	t=\sum_{i\in I} a_i.\,t_i+\sum_{i\in J} x_i
\end{align*}
where $I$ and $J$ are at most countable, the $a_i$ range over $A$, the 
$t_i$ range over~$TX$, and $x_i$ range over $X$. The involved summation operators 
$\sum_{i\in I}$ are considered modulo countable versions of associativity, 
commutativity and idempotence, and  $\zero = \sum_{i\in\emptyset} t_i$ and $t_1\plus t_2 = \sum_{i\in\{1,2\}} t_i$.

Recall that strong extensionality means that bisimilar elements are equal~\cite{TuriRutten98}. 
The Kleisli category of ${\BBT}$ is idempotent grove with $\zero$ and $\plus$ inherited from $\CSet$ and
ensuring~\eqref{eq:dist_plus} automatically.
It is easy to see that linear morphisms are precisely those that do not 
involve~actions.
\end{example}
A straightforward way to add a Kleene iteration operator to a category 
is as follows. 
\begin{definition}[Kleene-Kozen Category~\cite{Goncharov23}]\label{def:kk}
An idempotent grove category $\BC$ is a \emph{Kleene-Kozen category} if all morphisms of $\BC$
are linear and there is a \emph{Kleene iteration} operator
$(\argument)^\rstar\c\BC(X,X)\to\BC(X,X)$
such that, for any $p\c X\to X$, $q\c Y\to X$
and ${r\c X\to Z}$, the morphism $q\dc p^\rstar$ is the least (pre\dash)fixpoint of $q\plus (\argument)\dc p$
and the morphism $p^\rstar\dc r$ is the least (pre\dash)fixpoint of $r\plus p\dc(\argument)$.
\end{definition}
It is known~\cite{Goncharov23} that Kleene algebra is precisely a single-object 
Kleene-Kozen category.

In idempotent grove categories with coproducts, the following property is a direct consequence 
of linearity of $\inl$, $\inr$, and will be used extensively throughout. 
\begin{proposition}\label{pro:copr}
In idempotent grove categories with coproducts, $[p,q]\plus [p',q'] = {[p\plus p',q\plus q']}$.
\end{proposition}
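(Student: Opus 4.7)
The plan is to exploit the universal property of the coproduct: two morphisms out of $X\cpp Y$ are equal iff they agree after precomposition with $\inl$ and $\inr$. So I would reduce the desired identity to the two equations
\begin{align*}
\inl\dc\bigl([p,q]\plus[p',q']\bigr) &= \inl\dc[p\plus p',q\plus q'],\\
\inr\dc\bigl([p,q]\plus[p',q']\bigr) &= \inr\dc[p\plus p',q\plus q'].
\end{align*}
By symmetry, it suffices to treat the $\inl$ case.

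For the left-hand side, the key ingredient is that $\inl$ is \emph{linear} by assumption on an idempotent grove category with coproducts. Linearity gives $\inl\dc(f\plus g)=\inl\dc f\plus \inl\dc g$, so that
\begin{align*}
\inl\dc\bigl([p,q]\plus[p',q']\bigr) \;=\; \inl\dc[p,q]\,\plus\,\inl\dc[p',q'] \;=\; p\plus p',
\end{align*}
using the defining equation $\inl\dc[-,-]=\pi_0$ of copairing twice. The right-hand side collapses directly: $\inl\dc[p\plus p',q\plus q']=p\plus p'$. The $\inr$ case is completely analogous, yielding $q\plus q'$ on both sides, and invoking the universal property once more closes the argument.

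The only thing one must verify carefully is that linearity is genuinely available where it is used — but this is exactly the extra hypothesis built into the definition of an idempotent grove category with coproducts. So there is no real obstacle: the proposition is precisely the statement that linearity of the injections propagates to copairs, and the proof is a single instance of coproduct extensionality combined with linearity.
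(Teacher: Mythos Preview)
Your proof is correct and matches the paper's approach exactly: the paper states only that the proposition ``is a direct consequence of linearity of $\inl$, $\inr$'' without spelling out the details, and your argument via coproduct extensionality plus linearity of the injections is precisely what this hint amounts to.
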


\section{Decisions and Tests in Category}\label{sec:dec}
We proceed to compare two mechanisms for modeling control in categories:
\emph{decisions} and \emph{tests}. The first one is inherently categorical, and requires 
coproducts. The second one needs no coproducts, but requires nondeterminism. The 
latter one is directly inspired by tests of the Kleene algebra with tests~\cite{Kozen97a}. 
We will show that tests and decisions are in a suitable sense equivalent, when it comes 
to modeling control that satisfies Boolean algebra laws.

\begin{definition}[Decisions~\cite{CockettLack07,Goncharov23}]
In a category $\BC$ with binary coproducts, we call morphisms from $\BC(X,X\cpp X)$
\emph{decisions}.
\end{definition}
We consider the following operations on decisions, modeling truth values 
and logical connectives: $\tt=\inr$ (true), $\ff=\inl$ (false), $\boolnot d =d\dc[\inr,\inl]$ (negation),
$d\boolor e = d\dc[e,\inr]$ (disjunction), $d\booland e = d\dc[\inl,e]$ (conjunction).
Even without constraining decisions in any way, certain logical properties 
can be established, e.g.\ (not necessarily commutative or idempotent) monoidal 
structures $(\ff,\boolor)$, $(\tt,\booland)$, involutivity of $\boolnot$, ``de Morgan laws''
$\boolnot(d\boolor e) = \boolnot d\booland \boolnot e$, $\boolnot(d\booland e) = \boolnot d\boolor \boolnot e$, 
and the laws $\tt\boolor d = \tt$, $\ff\booland d = \ff$.

Given $d\in\BC(X,X\cpp X)$ and $p,q\in\BC(X,Y)$, let
\begin{align}\label{eq:ift}
\ifd{d}{p}{q} = d\dc [q,p].
\end{align}\par

\begin{definition}[Tests]\label{def:tests}
Given an idempotent grove category $\BC$, we call a family of linear morphisms
$\Tst={(\Tst(X)\subseteq\BC(X,X))_{X\in|\BC|}}$ \emph{tests} if every~$\Tst(X)$ 
forms a Boolean algebra under $\dc$ as conjunction and $\plus$ as disjunction.
\end{definition}
It follows that $\id\in\Tst(X)$ and $\zero\in\Tst(X)$ correspondingly are the top and bottom elements
of $\Tst(X)$. Given $b\in\Tst(X)$, $p,q\in\BC(X,Y)$, let
\begin{align}\label{eq:if}
\ift{b}{p}{q} = b\dc p\plus \bar b\dc q.
\end{align}
In an idempotent grove category $\BC$ with coproducts and tests $\Tst$, let~$?\c\BC(X,X\cpp X)\to\Tst(X)$ be the morphism
$d? = d\dc[\zero,\id].$
\begin{proposition}\label{pro:if_as_join}
Let $\BC$ be an idempotent grove category with coproducts. If a decision~$d$ is linear, then, for all $p$ and $q$, we have
  $\ifd{d}{p}{q} = \ift{d?}{p}{q}$. 
\end{proposition}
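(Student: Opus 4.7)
The plan is to unfold both sides of the claim — $\ifd{d}{p}{q} = d\dc[q,p]$ and $\ift{d?}{p}{q} = d?\dc p\plus \overline{d?}\dc q$ — and reconcile them via coproduct fusion, linearity of $d$, and Proposition~\ref{pro:copr}.

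First I would rewrite the first summand: $d?\dc p = d\dc[\zero,\id]\dc p = d\dc[\zero,p]$ by fusion at the coproduct (together with $\zero\dc p=\zero$, which holds in any idempotent grove category). For the second summand, the natural candidate for $\overline{d?}$ is $(\boolnot d)? = d\dc[\inr,\inl]\dc[\zero,\id] = d\dc[\id,\zero]$, so $\overline{d?}\dc q = d\dc[q,\zero]$ by the same fusion. Summing yields $d\dc[\zero,p]\plus d\dc[q,\zero]$; by linearity of $d$ this pulls out to $d\dc([\zero,p]\plus[q,\zero])$, and Proposition~\ref{pro:copr} collapses the bracket to $[q,p]$, so the right-hand side becomes $d\dc[q,p] = \ifd{d}{p}{q}$, as required.

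The step I expect to be the main obstacle is the identification $\overline{d?} = (\boolnot d)?$, i.e.\ that decision-level negation, post-composed with $?$, realises Boolean complementation in the test algebra. Since complements in a Boolean algebra are unique, this reduces to verifying $d?\plus(\boolnot d)? = \id$ and $d?\dc(\boolnot d)? = \zero$; by linearity of $d$ and Proposition~\ref{pro:copr}, the first boils down to $d\dc[\id,\id]=\id$ and the second to $d\dc[\zero,\id]\dc d\dc[\id,\zero]=\zero$. These are compatibility conditions between the decision $d$ and the test structure in which $d?$ is taken to live, and in the setting the paper is developing they should be available either tacitly (from how $?$ is packaged as a morphism into $\Tst$) or by a short direct check. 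Once this bridge between decision negation and test complement is in place, the remaining manipulations — fusion, linearity, and one application of Proposition~\ref{pro:copr} — are compact and mechanical.
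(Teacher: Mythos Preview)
Your computational route is the paper's own proof run in the opposite direction: the paper starts from $d\dc[q,p]$, splits $[q,p]=[q,\zero]\plus[\zero,p]$ via Proposition~\ref{pro:copr}, distributes through $d$ by linearity, factors to $d\dc[\id,\zero]\dc q \plus d\dc[\zero,\id]\dc p$, and declares this to be $\ift{d?}{p}{q}$. So the algebraic core of your argument matches the paper exactly.

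You are right to single out the identification $\overline{d?}=(\boolnot d)?=d\dc[\id,\zero]$ as the crux; the paper's final step makes precisely this leap without comment. However, your proposed verification does not go through from linearity of $d$ alone. The excluded-middle condition $d?\plus(\boolnot d)?=\id$ unwinds, as you note, to $d\dc\nabla=\id$, and this is \emph{not} a consequence of linearity. Concretely, take $d=\zero\c X\to X\cpp X$: this is linear (trivially), and $\zero?=\zero\in\Tst(X)$ with $\overline{\zero}=\id$, so $\ift{\zero?}{p}{q}=q$, whereas $\ifd{\zero}{p}{q}=\zero\dc[q,p]=\zero$. Thus the proposition as literally stated is false, and neither the paper's proof nor yours can be completed under the sole hypothesis of linearity. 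The intended reading is narrower: the only invocation of this proposition (in the proof of Proposition~\ref{pro:test_while}) takes $d$ in the image of $\dec$, where Lemma~\ref{lem:tests-dec} supplies both $d\dc\nabla=\id$ and $(\boolnot d)?=\overline{d?}$. Under that restriction your argument and the paper's are both sound; your instinct that the missing step must come ``from how $?$ is packaged'' is correct, but it is packaged by an extra hypothesis rather than by a direct check from linearity.
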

Let us say that a pair $(b,c)\in\BC(X,X)\times\BC(X,X)$ satisfies \emph{(the law of) 
contradiction} if $b\dc c=\zero$, and that it satisfies \emph{(the law of) excluded middle}
if $b\plus c=\id$.
The following characterization is instructive.
\begin{proposition}\label{pro:test_char}
Given an idempotent grove category $\BC$, a family of linear morphisms $\Tst={(\Tst(X)\subseteq\BC(X,X))_{X\in|\BC|}}$ 
forms tests for $\BC$ iff, for every $b\in\Tst$, there is $\bar b\in\Tst$ such that
$(b,\bar b)$ satisfies contradiction and excluded middle.
\end{proposition}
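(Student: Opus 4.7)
The forward implication is immediate: every Boolean algebra has complements, and the identities $b \land \bar b = \bot$, $b \lor \bar b = \top$ translate to $b \dc \bar b = \zero$ (contradiction) and $b + \bar b = \id$ (excluded middle).

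For the converse, the plan is first to derive the left-sided contradiction $\bar b \dc b = \zero$, then to prove commutativity of $\dc$ on $\Tst(X)$, and finally to check the remaining Boolean algebra axioms. Since $\bar b \in \Tst$, there exists some $\bar b' \in \Tst$ with $\bar b \dc \bar b' = \zero$ and $\bar b + \bar b' = \id$. Using only right-distributivity and linearity of $b$, I compute
\begin{align*}
\bar b' &= (b + \bar b) \dc \bar b' = b \dc \bar b' + \bar b \dc \bar b' = b \dc \bar b',\\
b &= b \dc (\bar b + \bar b') = b \dc \bar b + b \dc \bar b' = b \dc \bar b',
\end{align*}
whence $b = \bar b'$ and therefore $\bar b \dc b = \bar b \dc \bar b' = \zero$.

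To establish commutativity of $\dc$ on $\Tst$, I first note that $c + \id = c + (c + \bar c) = \id$ gives $c \leq \id$, and symmetrically $\bar b \leq \id$. Then, using $\id = b + \bar b$ and right-distributivity,
\begin{align*}
c \dc b = (b + \bar b) \dc c \dc b = b \dc c \dc b + \bar b \dc c \dc b.
\end{align*}
Linearity of $\bar b$ gives $\bar b \dc c \leq \bar b \dc \id = \bar b$, and post-composing with $b$ yields $\bar b \dc c \dc b \leq \bar b \dc b = \zero$, so $c \dc b = b \dc c \dc b$. A symmetric computation using linearity of $b$ shows $b \dc c = b \dc c \dc b$, and thus $b \dc c = c \dc b$.

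With commutativity in hand, the remaining Boolean algebra laws on $\Tst(X)$ are straightforward: idempotence $b \dc b = b \dc (b + \bar b) = b$; absorption $b + b \dc c = b$ (since $b \dc c \leq b \dc \id = b$) and $b \dc (b + c) = b + b \dc c = b$; distributivity $b \dc (c + d) = b \dc c + b \dc d$ from linearity, with the dual following by standard Boolean manipulations; complement laws by assumption together with the just-derived $\bar b \dc b = \zero$. Closure of $\Tst$ under $+$ and $\dc$ is witnessed by the expected complements $\overline{b + c} = \bar b \dc \bar c$ and $\overline{b \dc c} = \bar b + \bar c$, verified via commutativity. The main obstacle is the apparent circularity between commutativity and the identity $\bar b \dc b = \zero$; the plan resolves it by deriving the latter first, exploiting only that $\bar b$ itself lies in $\Tst$ and hence has its own complement, without ever invoking commutativity in the derivation.
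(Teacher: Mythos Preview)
Your argument is correct and broadly parallel to the paper's, but the route to commutativity is genuinely different. The paper first proves idempotence $b\dc b = b$, then shows $b\dc c = b\dc c\dc b$ via the trick $b\dc c = b\dc c\dc(b\plus\id) = b\dc c\dc b \plus b\dc c\dc b\dc c = b\dc c\dc b\dc(\id\plus c) = b\dc c\dc b$, using only $b\plus\id = \id$ and idempotence of $b\dc c$; the mirror computation with right distributivity gives $c\dc b = b\dc c\dc b$. You instead first establish the involutive structure by proving $\bar{\bar b} = b$ (hence the two-sided contradiction $\bar b\dc b = \zero$), and then decompose $c\dc b = (b\plus\bar b)\dc c\dc b$ and kill the $\bar b$-summand using $\bar b\dc c\dc b \leq \bar b\dc b = \zero$. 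Your approach front-loads a useful structural fact (double-negation) and makes the commutativity step very transparent; the paper's avoids ever needing $\bar b\dc b = \zero$ along the way and derives it only as a corollary of commutativity.

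One small imprecision: saying that ``closure of $\Tst$ under $+$ and $\dc$ is witnessed by the expected complements'' is not quite right. Exhibiting $\bar b\dc\bar c$ as a complement of $b\plus c$ presupposes that $b\plus c$ (and $\bar b\dc\bar c$) already lie in $\Tst$; it does not establish closure. The paper likewise does not address closure explicitly and tacitly treats it as part of the hypothesis, so this is a shared looseness in the proposition's statement rather than a gap unique to your proof.
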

Note that the smallest choice of tests in $\BC$ is $\Tst(X) = \{\zero,\id\}$. We 
proceed to characterize the smallest possible choice of tests, sufficient for modeling 
control. 
\begin{definition}[Expressive Tests]\label{def:exp}
We call the tests $\Tst$ \emph{expressive} if every~$\Tst(X)$ contains~$[\inl,\zero]$ 
whenever $X=X_1\cpp X_2$.
\end{definition}
\begin{lemma}\label{lem:exp}
The smallest expressive family of tests always exists and is obtained by closing  
tests of the form $\zero$, $\id$, $[\inl,\zero]$, and $[\zero,\inr]$ under $\plus$ and $\dc$.
\end{lemma}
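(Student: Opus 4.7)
The plan is to take $\Tst(X)$ to be the smallest subset of $\BC(X,X)$ containing the listed basic morphisms (with $[\inl,\zero]$ and $[\zero,\inr]$ included for every presentation $X = X_1\cpp X_2$) and closed under $\plus$ and $\dc$. Closure under these operations is then immediate by construction. First I would check that every element of $\Tst(X)$ is linear: the morphisms $\zero$ and $\id$ are linear, while $[\inl,\zero]$ and $[\zero,\inr]$ are copairs of linear maps and so are themselves linear, computed via \autoref{pro:copr}, e.g.\ $[\inl,\zero]\dc(p\plus q)=[\inl,\zero]\dc p\plus[\inl,\zero]\dc q$ and $[\inl,\zero]\dc\zero=\zero$. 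Both $\plus$ and $\dc$ obviously preserve linearity, so the induction propagates.

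Next, by \autoref{pro:test_char}, it suffices to exhibit, for every $b\in\Tst$, a complement $\bar b\in\Tst$ satisfying $b\dc\bar b=\zero$ (contradiction) and $b\plus\bar b=\id$ (excluded middle); the full Boolean-algebra structure of each $\Tst(X)$ then follows. Complements on basic generators are $\bar\zero=\id$, $\bar\id=\zero$, $\overline{[\inl,\zero]}=[\zero,\inr]$ and symmetrically, with contradiction and excluded middle for $([\inl,\zero],[\zero,\inr])$ reducing via \autoref{pro:copr} to $[\inl,\zero]\dc[\zero,\inr]=[\zero,\zero]=\zero$ and $[\inl,\zero]\plus[\zero,\inr]=[\inl,\inr]=\id$. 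For compound tests I would define $\overline{b\plus c}\ass\bar b\dc\bar c$ and $\overline{b\dc c}\ass\bar b\plus\bar c$ (de Morgan).

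The main obstacle is that verifying contradiction and excluded middle for these de Morgan complements requires tests to commute under $\dc$ (to rewrite e.g.\ $c\dc\bar b\dc\bar c$ as $\bar b\dc c\dc\bar c=\bar b\dc\zero=\zero$) and to interact well with $\plus$ on both sides, which is a Boolean-algebra consequence of the very property we are trying to establish. I would therefore proceed by a simultaneous induction on the construction of $b$, jointly proving: (i) existence of $\bar b\in\Tst$ satisfying contradiction and excluded middle, (ii) idempotence $b\dc b=b$, and (iii) $b\dc c=c\dc b$ for every test $c$ already constructed. The base cases are the pairwise checks among $\zero$, $\id$, $[\inl,\zero]$, $[\zero,\inr]$ (and $[\inl,\zero]$ for distinct decompositions of the same $X$, which collapse to $\zero$ whenever the projected components are disjoint), and the two inductive steps reduce, using (ii)--(iii) at smaller tests together with linearity and \autoref{pro:copr}, to routine rearrangements. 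Invoking \autoref{pro:test_char} then promotes $\Tst$ to a family of tests, and expressivity is immediate by construction.

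For minimality, observe that any expressive family of tests $\Tst'$ must contain $\zero$ and $\id$ as the Boolean bottom and top of each $\Tst'(X)$, must contain $[\inl,\zero]$ by expressivity, and must therefore contain its Boolean complement $[\zero,\inr]$; as a Boolean algebra $\Tst'(X)$ is also closed under $\plus$ and $\dc$. Hence $\Tst\subseteq\Tst'$, which simultaneously establishes existence of the smallest expressive family and identifies it as the described closure.
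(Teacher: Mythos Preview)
Your approach is essentially the paper's: define $\Tst(X)$ as the closure, invoke \autoref{pro:test_char}, define complements de~Morgan-style on term representations, and run an induction on term structure. The paper also notes explicitly that it runs the induction on term representations (not on elements of $\Tst(X)$), since the complement recipe is syntactic.

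The one substantive difference is your induction invariant. You propose carrying idempotence and commutativity through the induction, on the grounds that the step for $\overline{b\dc c}=\bar b\plus\bar c$ seems to need them. The paper shows this is unnecessary: its invariant is just linearity together with $b\dc\bar b=\zero$, $\bar b\dc b=\zero$, and $b\plus\bar b=\id$. The step for contradiction of $b\dc c$ goes through without commuting anything, via
\[
b\dc c\dc\bar b \;=\; b\dc c\dc\bar b \plus b\dc\bar b \;=\; b\dc(c\plus\id)\dc\bar b \;=\; b\dc\bar b \;=\; \zero,
\]
using only linearity of $b$, right distributivity, and $c\plus\id=\id$ (which follows from $c\plus\bar c=\id$). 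The remaining cases are analogous. This is worth noting because your strengthened invariant creates a new obligation you do not actually discharge: commutativity of two generators $[\inl,\zero]$ coming from \emph{different} coproduct presentations of the same $X$. Your parenthetical that such products ``collapse to $\zero$ whenever the projected components are disjoint'' is neither precise nor justified in a general idempotent grove category. The paper's leaner invariant sidesteps this entirely (and, once the complement laws are established, commutativity and idempotence come for free from \autoref{pro:test_char} anyway).

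Your minimality paragraph is fine and matches the intended reading.
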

In the sequel, we will use the notation $\top$, $\bot$, $\land$, $\lor$ for tests, 
synonymously to $\id$, $\zero$, $\dc$, $\plus$ to emphasize their logical character.
\begin{lemma}\label{lem:tests-dec}
Let $\Tst$ be tests in an idempotent grove category $\BC$ with binary coproducts. 
\begin{enumerate}%
  \item The morphisms $\dec\c\Tst(X)\to\BC(X,X\cpp X)$, ${?\c\BC(X,X\cpp X)\to\Tst(X)}$ defined by
$\dec b = \bar b\dc\inl \plus b\dc\inr$, $d? = d\dc[\zero,\id]$
form a retraction.
\item Every morphism $d$ in the image of $\dec$ is linear. Moreover, we have
$d\dc\nabla = \id$, $d=d\booland d$, and $d=d\boolor d$.
\item For all $e$ and $d$ in the image of $\dec$, it holds that $(e\boolor d)? = e?\lor d?$, $(e\booland d)? = e?\land d?$, and
$(\boolnot d)?=\overline{d?}$.
\end{enumerate}
\end{lemma}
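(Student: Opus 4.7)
The plan is to verify each item by direct computation, leaning on distributivity laws from \eqref{eq:dist_plus}, \eqref{eq:dist_plus'}, on linearity of $\inl$, $\inr$ (and of tests, which are linear by \autoref{def:tests}), and on the Boolean-algebra laws for $\Tst$ (including \autoref{pro:test_char}).

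\medskip\noindent\textbf{Item 1.} I would show $(\dec b)? = b$ for $b\in\Tst(X)$. Unfolding, $(\dec b)? = (\bar b\dc\inl \plus b\dc\inr)\dc[\zero,\id]$; right-distributing over $\plus$ via \eqref{eq:dist_plus} and evaluating $\inl\dc[\zero,\id] = \zero$ and $\inr\dc[\zero,\id] = \id$, then using linearity of $\bar b$ to collapse $\bar b\dc\zero = \zero$, yields $\zero \plus b = b$.

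\medskip\noindent\textbf{Item 2.} For $d = \dec b$, I would verify linearity by distributing twice: $d\dc(q\plus r)$ expands via \eqref{eq:dist_plus} to $\bar b\dc\inl\dc(q\plus r)\plus b\dc\inr\dc(q\plus r)$, then the linearity of $\inl$, $\inr$ gives $\bar b\dc(\inl\dc q\plus\inl\dc r) + \cdots$, and the linearity of $b$, $\bar b$ finishes by distribution; $d\dc\zero = \zero$ is analogous. The identity $d\dc\nabla = \bar b\plus b = \id$ is immediate from excluded middle (\autoref{pro:test_char}). For $d\booland d = d\dc[\inl,d]$, I would again right-distribute to get $\bar b\dc\inl \plus b\dc d$, then re-expand $d$ and use the linearity of $b$ to get $\bar b\dc\inl \plus (b\dc\bar b)\dc\inl \plus (b\dc b)\dc\inr$; by contradiction $b\dc\bar b = \zero$ and by idempotence in the Boolean algebra $b\dc b = b$, which collapses the expression back to $d$. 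The case $d\boolor d = d\dc[d,\inr]$ is symmetric.

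\medskip\noindent\textbf{Item 3.} For $e = \dec c$ and $d = \dec b$, I would exploit $[f,g]\dc h = [f\dc h, g\dc h]$ to compute $[d,\inr]\dc[\zero,\id] = [d?,\id]$, so that $(e\boolor d)? = e\dc[d?,\id] = \bar c\dc b \plus c$ after right-distribution. This then equals $c\plus b = e?\lor d?$ by the Boolean-algebra identity $(\bar c\land b)\lor c = (\bar c\lor c)\land(b\lor c) = b\lor c$, computed in $\Tst(X)$. Similarly, $(e\booland d)? = e\dc[\zero,d?] = \bar c\dc\zero \plus c\dc b = c\dc b = e?\land d?$, and $(\boolnot d)? = d\dc[\inr,\inl]\dc[\zero,\id] = d\dc[\id,\zero] = \bar b\plus\zero = \bar b = \overline{d?}$.

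\medskip\noindent The main obstacle is the disjunction case of Item 3, where the naive unfolding yields $\bar c\dc b\plus c$ rather than the expected $c\plus b$; here one must recognize and apply the Boolean-algebra absorption identity inside $\Tst$. Every other clause reduces to routine bookkeeping with distributivity and linearity.
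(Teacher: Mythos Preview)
Your proof is correct and, for Items~1 and~2, proceeds exactly as the paper does (same distributions, same use of contradiction/idempotence of tests).

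For Item~3 there is a small tactical difference in the disjunction case. You unpack $e=\dec c$ explicitly, arrive at $\bar c\dc b\plus c$, and then invoke the Boolean absorption law $(\bar c\land b)\lor c = b\lor c$ inside~$\Tst(X)$. The paper instead keeps $e$ abstract and uses only the properties just established in Item~2: it rewrites $[d?,\id]$ as $[d?,d?]\plus[\zero,\id]$ (using the semilattice fact $d?\plus\id=\id$ and \autoref{pro:copr}), then applies linearity of~$e$ and $e\dc\nabla=\id$ to obtain $d?\plus e?$ directly. Both arguments are equally short and valid; the paper's version has the mild advantage of reusing Item~2 rather than reaching back into the Boolean structure of~$\Tst$, while yours is perhaps more transparent about where the Boolean-algebra content actually enters. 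The conjunction and negation clauses are essentially identical in both treatments.
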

\autoref{lem:tests-dec} indicates that in presence of coproducts and with 
linear coproduct injections, instead of Boolean algebras on subsets of $\BC(X,X)$, 
one can equivalently work with Boolean algebras on subsets of $\BC(X,X\cpp X)$.

We conclude this section by an illustration that varying tests, in particular, 
going beyond smallest expressive tests is practically advantageous. 
\begin{example}
Consider the \emph{nondeterministic state monad} $\BBT$ with $TX = \PSet(S\times X)^S$
on~$\Set$, where $S$ is a fixed \emph{global store}, which the programs, represented
by Kleisli morphisms of $\BBT$ are allowed to read and modify. Morphisms of the Kleisli category $\Set_{\BBT}$ are equivalently (by uncurrying) 
maps of the form $p\c S\times X\to \PSet(S\times Y)$, meaning that $\Set_{\BBT}$
is equivalent to a full subcategory of $\Set_{\PSet}$, from which $\Set_{\BBT}$ inherits the 
structure of an idempotent grove category. The tests identified in \autoref{lem:exp}
are those maps $b\c S\times X\to \PSet(S\times X)$ that are determined by decompositions $X=X_1\cpp X_2$,
in particular, they can neither read nor modify the store. In practice, 
only the second is regarded as undesirable (and indeed would break commutativity of tests),
while reading is typically allowed. This leads to a more permissive notion
of tests, as those that are determined by the decompositions $S\times X=X_1\cpp X_2$.
\end{example}

\begin{ffigure}
\fbox{\parbox{\linewidth}{
\begin{align*}
&\inl\dc [p,q] = p\qquad
\inr\dc [p,q] = q\qquad
[\inl,\inr] = \id\qquad
[p,q]\dc r = [p\dc r,q\dc r]
\\[1ex]
&\zero\plus p = p\qquad
p\plus p = p\qquad
p\plus q = q\plus p\qquad
(p\plus q)\plus r = p\plus (q\plus r) 
\\[1ex]
&\zero\dc p = \zero\qquad 
(q\plus r)\dc p = q\dc p\plus r\dc p\qquad
u\dc\zero = \zero\qquad
u\dc(p\plus q) = u\dc p\plus  u\dc q 
\\[2ex]
  &(\axname{$\rstar$-Fix})~~ p^\rstar = \id \plus p\dc p^\rstar\quad(\axname{$\rstar$-Sum})~~ (p \plus  q)^\rstar =  p^\rstar\dc (q\dc p^\rstar)^\rstar\quad 
  (\axname{$\rstar$-Uni})~~\vcenter{
      \infer{u\dc p^\rstar = q^\rstar\dc u}{u\dc p = q\dc u}} %
  \end{align*}
}}
  \caption{Axioms of KiCs, including binary coproducts ($p,q,r$ range over $\BC$, $u$ ranges over $\Tame$).}
  \label{fig:kic}
\vspace{-2em}
\end{ffigure}

\section{Kleene Iteration, Categorically}\label{sec:kleene}
We now can introduce our central definition by extending idempotent grove categories with a selected class of linear morphisms, 
called \emph{tame morphisms}, and with Kleene iteration.
Crucially, we assume the ambient category $\BC$ to have coproducts as a necessary ingredient. 
Finding a general definition, not relying on coproducts, presently remains open.
\begin{definition}[KiC(T)]\label{def:GICCT}
We call a tuple $(\BC,\Tame)$ a \emph{Kleene-iteration category (KiC)}
if 
\begin{enumerate}%
  \item $\BC$ is an idempotent grove category with coproducts;
  \item $\Tame$ is a wide subcategory of $\BC$, whose morphisms we call \emph{tame}
    such that
    \begin{itemize}
      \item $\Tame$ has coproducts strictly preserved by the inclusion to $\BC$;
      \item  the morphisms of $\Tame$ are all linear;
    \end{itemize}    
  \item for every $X\in |\BC|$, there is a \emph{Kleene iteration} operator $(\argument)^\rstar\c\BC(X,X)\to\BC(X,X)$ 
  such that the laws~\axname{$\rstar$-Fix},~\axname{$\rstar$-Sum} and~\axname{$\rstar$-Uni} in \autoref{fig:kic}, with $u$ ranging over $\Tame$, are satisfied.
\end{enumerate}
A functor $F\c (\BC,\Tame)\to (\BD,\Tame[\BD])$ between KiCs 
is a coproduct preserving functor ${F\c\BC\to\BD}$ such that $F\zero = \zero$, $F(q\plus r) = Fq\plus Fr$ and $Fp^\rstar = (Fp)^\rstar$
for all $q,r\in\BC(X,Y)$, $p\in\BC(X,X)$, and $Fp\in\Tame[\BD](FX,FY)$ for all $p\in\Tame(X,Y)$. 

A KiC $(\BC,\Tame)$ equipped with a choice of tests $\Tst$ in $\Tame$ we call a \emph{KiCT} (=KiC with tests). 
Correspondingly, functors between KiCTs are additionally required to send tests to tests.  
\end{definition}
It transpires from the definition that the role of tameness is to limit the
power of the uniformity rule \axname{$\rstar$-Uni}. The principal case for $\BC\neq\Tame$ 
is~\autoref{exa:ST}. As we see later (\autoref{exa:resmp-iter}), this yields a KiC.
More generally, unless we restrict $\Tame$ to programs that satisfy the linearity laws~\eqref{eq:dist_plus'}, the uniformity principle \axname{$\rstar$-Uni} would tend to be unsound. Very roughly, uniformity is some infinitary form of distributivity from the left and it fails for programs that fail the standard left distributivity. This phenomenon is expected to occur for other flavors of concurrent semantics: as long as $\BC$ admits morphisms that fail~\eqref{eq:dist_plus'}, $\Tame$ would have to be properly smaller than $\BC$. Apart from concurrency, if $\Tame$ models a language with exceptions, those must be excluded from $\Tame$, for otherwise uniformity would again become unsound.

If we demand all morphisms 
to be tame, we will obtain a notion very close to that of Kleene-Kozen category (\autoref{def:kk}).
\begin{definition}[$\rstar$-Idempotence]
A KiC is \emph{$\rstar$-idempotent} if it satisfies~\eqref{eq:star-idmp}.
\end{definition}
\begin{proposition}\label{pro:kk}
A category $\BC$ is Kleene-Kozen 
iff $(\BC,\BC)$ is a $\rstar$-idempotent KiC.
\end{proposition}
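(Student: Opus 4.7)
The plan is to verify the two directions of the equivalence by exploiting the fact that in both settings every morphism is linear, so left-distributivity is available throughout.

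For the forward direction, given a Kleene-Kozen category $\BC$, the tuple $(\BC,\BC)$ inherits the idempotent grove structure and coproducts, and the wide-subcategory clause is immediate. Each KiC axiom in \autoref{fig:kic} follows from the least-prefixpoint clauses of \autoref{def:kk}: \axname{$\rstar$-Fix} is the instance $q=\id$ of the fixpoint part; \axname{$\rstar$-Sum} is the classical Kleene-algebra denesting identity, derivable by a standard two-way $\leq$-argument using leastness. For \axname{$\rstar$-Uni}, assuming $u\dc p = q\dc u$, I would verify that both $u\dc p^\rstar$ and $q^\rstar\dc u$ are fixpoints of \emph{each} of the operators $z\mapsto u\plus q\dc z$ and $z\mapsto u\plus z\dc p$, using the hypothesis together with (left-)linearity and the sliding identity $p\dc p^\rstar = p^\rstar\dc p$ (itself a consequence of leastness). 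The two leastness clauses then pin both expressions to the same value. Finally, $\id^\rstar = \id$ holds because $\id$ is a fixpoint of $\id\plus(\argument)$ by idempotence of $\plus$.

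For the backward direction, given a $\rstar$-idempotent KiC $(\BC,\BC)$, linearity of every morphism is immediate from the tameness condition, and the fixpoint equation $q\dc p^\rstar = q\plus q\dc p^\rstar\dc p$ follows from \axname{$\rstar$-Fix} together with the sliding identity $p\dc p^\rstar = p^\rstar\dc p$, which I would obtain by applying \axname{$\rstar$-Uni} to $u = q = p$ with the trivial equation $p\dc p = p\dc p$. The real challenge is leastness: given $x$ with $q\plus x\dc p\leq x$, deduce $q\dc p^\rstar\leq x$. My strategy is to repackage the inequation $x\dc p\leq x$ as the equation $x\dc(p\plus\id) = x = \id\dc x$ (using linearity of $x$), then feed it through \axname{$\rstar$-Uni} to obtain $x\dc(p\plus\id)^\rstar = \id^\rstar\dc x$. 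Both sides collapse: $\id^\rstar = \id$ by $\rstar$-idempotence, and $(\id\plus p)^\rstar = \id^\rstar\dc(p\dc\id^\rstar)^\rstar = p^\rstar$ by \axname{$\rstar$-Sum}. Thus $x\dc p^\rstar = x$, and since $q\leq x$, right-distributivity yields $q\dc p^\rstar\leq x\dc p^\rstar = x$. A mirror argument dispatches the right-hand least-fixpoint clause.

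The principal obstacle is this last leastness step, because \axname{$\rstar$-Uni} is an equational conditional while the Kleene-Kozen axiom is inequational. The $\rstar$-idempotence assumption $\id^\rstar = \id$ is precisely what enables the inequational-to-equational conversion above; without it one lands in the ``may-diverge'' regime of the introduction, where Kleene iteration provably fails to be a least fixpoint, so the equivalence would break down as expected.
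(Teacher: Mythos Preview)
Your proposal is correct in both directions. The forward direction is a standard verification of the Kleene-algebra identities from the least-fixpoint clauses; your sketch of \axname{$\rstar$-Uni} works once you note that $u\dc p^\rstar$ is the least fixpoint of $u\plus(\argument)\dc p$ and $q^\rstar\dc u$ is the least fixpoint of $u\plus q\dc(\argument)$, and each is a fixpoint of the other operator. The backward direction is the interesting part, and your trick of turning the prefixpoint inequality $x\dc p\leq x$ into the equation $x\dc(p\plus\id)=\id\dc x$, pushing it through \axname{$\rstar$-Uni}, and collapsing with $\id^\rstar=\id$ and $(p\plus\id)^\rstar=p^\rstar$ is exactly right.

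The route, however, is genuinely different from the paper's. The paper does not argue from \autoref{def:kk} directly; it quotes an equivalent characterization of Kleene-Kozen categories established in~\cite{Goncharov23}, namely enrichment over bounded join-semilattices together with the four conditions $p^\rstar=\id\plus p\dc p^\rstar$, $\id^\rstar=\id$, $(p\plus\id)^\rstar=p^\rstar$, and \axname{$\rstar$-Uni}. With that in hand, the only work left is the one-line derivation $(p\plus\id)^\rstar=\id^\rstar\dc(p\dc\id^\rstar)^\rstar=p^\rstar$ from \axname{$\rstar$-Sum} and $\rstar$-idempotence. Your leastness argument is, in effect, a self-contained reproof of the relevant half of that cited characterization. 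What you gain is independence from the external reference; what the paper gains is brevity by outsourcing precisely the computation you carry out.
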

\begin{proof}
As shown previously~\cite{Goncharov23}, $\BC$ is a Kleene-Kozen category iff 
\begin{enumerate}%
  \item $\BC$ is enriched over bounded join-semilattices and strict join-preserving morphisms;
  \item there is an operator $(\argument)^\rstar\c\BC(X,X)\to\BC(X,X)$
  such that 
  \begin{enumerate} %
    \item\label{it:klee-eq1} $p^\rstar = \id\plus p\dc p^\rstar$;
    \item\label{it:klee-eq2} $\id^\rstar = \id$; 
    \item\label{it:klee-eq4} $p^\rstar = (p\plus\id)^\rstar$;   
    \item\label{it:klee-eq3} $u\dc p = q\dc u$ ~implies~ $u\dc p^\rstar = q^\rstar\dc u$.
  \end{enumerate} 
\end{enumerate}
This yields sufficiency by noting that (1) states precisely that all morphisms 
in~$\BC$ are linear. To show necessity, it suffices to obtain~(2.c) from the assumptions 
that $(\BC,\BC)$ is a $\rstar$-idempotent KiC and that all morphisms in $\BC$ are 
linear. Indeed, we have $(p\plus\id)^\rstar = \id^\rstar\dc (p\dc\id^\rstar)^\rstar = p^\rstar$.\qed
\noqed\end{proof}
KiCs thus deviate from Kleene algebras precisely in four respects: 
\begin{enumerate}
  \item by generalizing from monoids to categories,
  \item by allowing non-linear morphisms, 
  \item by dropping $\rstar$-idempotence, and
  \item by requiring binary coproducts.
\end{enumerate}

\begin{example}
The axiom~\axname{$\rstar$-Sum}, included in~\autoref{def:GICCT}, is one of the 
classical \emph{Conway identities}. The other one 
$(p\dc q)^\rstar = \id\plus p\dc (q\dc p)^\rstar\dc q$
is derivable if $\BC=\Tame$, e.g.\ in Kleene-Kozen categories. 
Indeed, $q\dc p\dc q = q\dc p\dc q$ entails $q\dc (p\dc q)^\rstar = (q\dc p)^\rstar\dc q$ 
by~\axname{$\rstar$-Uni}, and using~\axname{$\rstar$-Fix}, $(p\dc q)^\rstar = 1\plus p\dc q\dc (p\dc q)^\rstar = 1\plus p\dc (q\dc p)^\rstar\dc q$.

Clearly, this argument remains valid with only $q$ being tame, but otherwise 
the requisite identity is not provable. 
\end{example}
It may not be obvious why the requirement to support binary coproducts is part of 
\autoref{def:GICCT}, given that the axioms of iteration do not involve them. The reason is that
certain identities that also do not involve coproducts are only derivable in their presence.
\begin{example}\label{exa:sqr}
The identity $p^\rstar = (p\dc (1\plus p))^\rstar$ holds in 
any KiC.
\end{example}
A standard way to instantiate \autoref{def:GICCT} is to start with a category 
$\BV$ with coproducts, and a monad $\BBT$ on it, and take $\BC=\BV_{\BBT}$,
$\Tame=\BV$ or, possibly, $\Tame=\BV_{\BBT}$. The monad must support nondeterminism and Kleene iteration 
so that the axioms of KiC are satisfied. Consider a class of Kleene-Kozen categories 
that arise in this way.

\begin{example}\label{exa:quantale}
Let $Q$ be a unital quantale, and let $TX=Q^X$ for every set $X$. Then $T$ extends 
to a monad on $\Set$ as follows: $\eta(x)(x) = 1$, $\eta(x)(y) = \bot$ if $x\neq y$, and
\begin{displaymath}
(p\c X\to Q^Y)^\klstar(f\c X\to Q)(y\in Y) = \bigor_{x\in X} p(x)(y)\cdot f(x).
\end{displaymath}
We obtain a Kleene-Kozen structure in $\Set_{\BBT}$ as follows: 
\begin{itemize}%
  \item $\zero\c X\to Q^Y$ sends $x$ to $\lambda y.\,\bot$;
  \item $p\plus q\c X\to Q^Y$ sends $x$ to $\lambda y.\, p(x)(y)\lor q(x)(y)$;
  \item $p^\rstar\c X\to Q^X$ is the least fixpoint of the map $q\mto \id\plus q\dc p$.
\end{itemize}
This construction restricts to $Q^X_{\omega_1} = \{f\c X\to Q\mid |\supp f|\leq\omega\}$
where ${\supp f}$ is the set of those $x\in X$, for which $f(x)\neq 0$. Thus,
e.g.\ the Kleisli categories of the powerset monad $\PSet$ and the countable 
powerset monad $\CSet$ are Kleene-Kozen.
\end{example}
For a contrast, consider a similar construction that yields a KiC, which 
is not Kleene-Kozen.
\begin{example}\label{exa:ini}
Let $Q=\{0,1,\infty\}$ be the complete lattice under the ordering $0<1<\infty$,
and let us define commutative binary multiplication as follows: $0\cdot x = 0$, 
$1\cdot x = x$ and $\infty\cdot\infty = \infty$. This turns $Q$ into a unital quantale,
hence an idempotent semiring, whose binary summation~$+$ is binary join. 
Next, define infinite summation with the formula
\begin{align*}
\sum_{i\in I} x_i =
\begin{cases}
\bigor_{i\in I'} x_i,&\text{if $I'=\{i\in I\mid x_i> 0\}$ is finite}\\
\infty,&\text{otherwise}
\end{cases}
\end{align*}
This makes $Q$ into a \emph{complete semiring}~\cite{DrosteKuich09}. Let us define 
the monad $\CR$ and the idempotent grove structure on $\Set_{\CR}$ like $Q^{(\argument)}_{\omega_1}$ in \autoref{exa:quantale} (with $\sum$ instead of $\bigor$).
For every ${f\c X\to \CR X}$, let $p^\rstar=\sum_{n\in\nat} p^n\c X\to \CR X$ where, 
inductively, $p^0= 1$ and $p^{n+1} = p \dc p^{n}$, and infinite sums are extended 
from $Q$ to the Kleisli hom-sets pointwise. 

It is easy to verify that $(\Set_{\CR},\Set_{\CR})$ is a KiC, but 
$\Set_{\CR}$ is not a Kleene-Kozen category, for $\rstar$-idempotence fails: 
	$\eta^\rstar = \sum_{n\in\nat} \eta^n = \sum_{n\in\nat} \eta =\lambda x,y.\,\infty\neq\eta$.
We will use a %
more convenient notation for the elements of $\CR X$ as infinite formal sums 
$\sum_{i\in I} x_i$ ($x_i\in X$), modulo associativity, commutativity, idempotence
(but without countable idempotence~$\sum_{i\in I} x = x$!).
\end{example}
Below, we provide two results for constructing new KiCs from 
old:~\autoref{the:grove-monad} and \autoref{the:resmp}, %
which are also used prominently in our characterization result in \autoref{sec:free}.
\begin{theorem}\label{the:grove-monad}
Let $(\BC,\Tame)$ be a KiC and let $\BBT$ be a 
monad on $\BC$ such that 
\begin{enumerate}
  \item $Tr^\rstar = (Tr)^\rstar$,
    for all $r\in\BC(X,X)$;
  \item the monad $\BBT$ restricts to a monad on $\Tame$.
\end{enumerate}
Then $\Tame_\BBT$ is a subcategory of $\Tame$ and $(\BC_\BBT,\Tame_\BBT)$ is a KiCT where $\zero$ and 
$\plus$ are defined as in $\BC$, and for any $p\c X\to TX$, the corresponding Kleene 
iteration is computed as $\eta\dc (p^\klstar)^\rstar$. 
\end{theorem}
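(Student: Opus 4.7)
The plan is to verify each clause of the KiC(T) definition for $(\BC_\BBT, \Tame_\BBT)$ by transporting the structure of $(\BC, \Tame)$ along the Kleisli embedding. I would split the argument into four parts: (a)~$\Tame_\BBT$ embeds in $\BC_\BBT$ as a wide subcategory preserving the coproduct injections $\inl \dc \eta$, $\inr \dc \eta$; (b)~$\BC_\BBT$ is an idempotent grove category with coproducts; (c)~$\Tame_\BBT$-morphisms are linear in $\BC_\BBT$; (d)~the operator $p \mapsto \eta \dc (p^\klstar)^\rstar$ satisfies \axname{$\rstar$-Fix}, \axname{$\rstar$-Sum} and \axname{$\rstar$-Uni} in $\BC_\BBT$. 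Parts (a), (b), (c) amount to monad-law bookkeeping, while (d) is the heart of the proof and is where both hypotheses bite.

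For (a) and (b), hypothesis~(2) ensures that Kleisli composition $f \dc g^\klstar$ computed in $\Tame$ agrees with the one in $\BC$ on $\Tame$-morphisms, yielding a faithful identity-on-objects functor $\Tame_\BBT \hookrightarrow \BC_\BBT$ that preserves $\eta$ and the coproduct injections on the nose. Taking $\zero$ and $+$ on $\BC_\BBT(X, Y)$ to be those on $\BC(X, TY)$, the Kleisli-side identities $(p_1 + p_2) \dc r^\klstar = p_1 \dc r^\klstar + p_2 \dc r^\klstar$ and $\zero \dc r^\klstar = \zero$ are direct instances of right-distributivity in $\BC$. Linearity of the Kleisli injection $\inl \dc \eta$ in $\BC_\BBT$ rests on the monad law $\eta \dc f^\klstar = f$: Kleisli-composing $\inl \dc \eta$ with $p + q$ collapses to $\inl \dc (p+q) = \inl \dc p + \inl \dc q$ by linearity of $\inl$ in $\BC$. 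Tests, when present, transfer along $b \mapsto b \dc \eta$, with the Boolean-algebra laws preserved by the same manipulations. For (c), a tame $u \c X \to TY$ decomposes as $u = \eta \dc u^\klstar$ with $u^\klstar$ tame by hypothesis~(2), and left-distributivity of $u$ in $\BC_\BBT$ is then reduced to linearity of $u^\klstar$ in $\BC$ together with additivity of Kleisli lifting, an essentially routine consequence of $\mu$ being tame and of $T$ respecting $\zero$ and $+$ on tame-indexed hom-sets.

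The main obstacle is (d), and it pivots on the single key identity $(\eta \dc (p^\klstar)^\rstar)^\klstar = (p^\klstar)^\rstar$; equivalently, on $(p^\klstar)^\rstar$ being a $T$-algebra morphism, i.e.~$\mu \dc (p^\klstar)^\rstar = T((p^\klstar)^\rstar) \dc \mu$. To establish this, I would apply \axname{$\rstar$-Uni} in $(\BC, \Tame)$ with the tame morphism $\mu$ in the role of $u$: the required premise $\mu \dc p^\klstar = T p^\klstar \dc \mu$ is the standard algebra-map property of Kleisli extensions, derivable from naturality of $\mu$ and monad associativity. Uniformity then yields $\mu \dc (p^\klstar)^\rstar = (T p^\klstar)^\rstar \dc \mu$, and hypothesis~(1) rewrites $(T p^\klstar)^\rstar$ as $T((p^\klstar)^\rstar)$, closing the loop.

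With the key identity in hand, \axname{$\rstar$-Fix} in $\BC_\BBT$ reduces to the chain $\eta \dc (p^\klstar)^\rstar = \eta + \eta \dc p^\klstar \dc (p^\klstar)^\rstar = \eta + p \dc (p^\klstar)^\rstar$, obtained from \axname{$\rstar$-Fix} in $\BC$ applied to $p^\klstar$, linearity of $\eta$, and $\eta \dc p^\klstar = p$. For \axname{$\rstar$-Uni} in $\BC_\BBT$, I would translate the premise $u \dc p^\klstar = q \dc u^\klstar$ into $u^\klstar \dc p^\klstar = q^\klstar \dc u^\klstar$ via the monad identity $(g \dc f^\klstar)^\klstar = g^\klstar \dc f^\klstar$, apply \axname{$\rstar$-Uni} in $\BC$, and unpack via $u = \eta \dc u^\klstar$. \axname{$\rstar$-Sum} follows the same recipe, combining the key identity, additivity of Kleisli lifting, and \axname{$\rstar$-Sum} in $\BC$; this is the calculation that is hardest to keep legible, and I would introduce temporary abbreviations such as $\phi_p = (p^\klstar)^\rstar$ to tame it.
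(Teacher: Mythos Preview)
Your plan matches the paper's proof essentially step for step: the paper also reduces everything to the key identity $(\eta\dc(p^\klstar)^\rstar)^\klstar = (p^\klstar)^\rstar$, obtained by applying \axname{$\rstar$-Uni} in $\BC$ with the tame morphism $\mu$ to the premise $\mu\dc p^\klstar = Tp^\klstar\dc\mu$ and then invoking hypothesis~(1), and derives \axname{$\rstar$-Fix}, \axname{$\rstar$-Sum}, \axname{$\rstar$-Uni} in $\BC_\BBT$ exactly as you outline. The only structural difference is that the paper additionally isolates the companion identity $\eta\dc(p^\klstar + q^\klstar) = p+q$ and obtains $T(p+q)=Tp+Tq$ (your ``additivity of Kleisli lifting'') by appealing to \autoref{lem:funct}, whereas you label this step ``essentially routine''; you may want to make that dependence explicit.
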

Let us illustrate the use of~\autoref{the:grove-monad} by a simple example.
\begin{example}[Finite Traces]\label{exa:fin-tra}
Consider the monad $\PSet(A^\star\times\argument)$ on~$\Set$. Elements of 
$\PSet(A^\star\times X)$ are standardly used as (finite) trace semantics 
of programs. A trace is then a sequence of actions from $A$, followed by an end result 
in $X$. Of course, it can be verified directly that the Kleisli category of $\PSet(A^\star\times\argument)$
is Kleene-Kozen. Let us show how this follows from~\autoref{the:grove-monad}.  

The Kleisli category of $\PSet$ is isomorphic to the category of relations, and is 
obviously Kleene-Kozen. For $\PSet$, like for any commutative monad, the Kleisli category 
$\Set_{\PSet}$ is symmetric monoidal: $X\tensor Y=X\times Y$ and, given 
$p\c X\to\PSet Y$, $q\c X'\to\PSet Y'$, 
\begin{displaymath}
	(p\tensor q)(x,x') = \{(y,y')\mid y\in p(x)\comma y'\in q(x')\}.
\end{displaymath}
The set $A^\star$ is a monoid in $\Set_{\PSet}$ w.r.t.\ this monoidal structure. This 
yields a writer monad~$\BBT$ on $\Set_{\PSet}$ via $TX=A^\star\tensor X$ and $(Tp)(w,x) = \{(w,y)\mid y\in p(x)\}$. Its Kleisli category
$(\Set_{\PSet})_{\BBT}$ is isomorphic to our original Kleisli category of interest. The assumptions~(1)
of~\autoref{the:grove-monad} are thus satisfied in the obvious way. The assumption~(2)
is vacuous, as we chose all morphisms to be tame.
\end{example}
Finally, we establish robustness of KiCs under the 
\emph{generalized coalgebraic resumption monad transformer}~\cite{PirogGibbons13,GoncharovRauchEtAl15}, which is defined as follows.
\begin{definition}[Coalgebraic Resumptions]\label{def:resmp}
Let $\BV$ be a category with coproducts and let $\BBT$ be a monad on $\BV$.
Let $H\c\BV\to\BV$ be some endofunctor and assume that all final coalgebras $\nu\gamma.\,T(X\cpp H\gamma)$
exist. The assignment $X\mapsto \nu\gamma.\,T(X\cpp H\gamma)$ yields a monad $\BBT_H$, called
the (generalized) coalgebraic resumption monad transformer of $\BBT$.
\end{definition}

\begin{theorem}\label{the:resmp}
Let $\BBT_H$ be as in \autoref{def:resmp} and such that $(\BV_{\BBT},\Tame)$ 
is a KiC for some choice of $\Tame$. Then $\BV_{\BBT}$ is a 
wide subcategory of $\BV_{\BBT_H}$ and $(\BV_{\BBT_H},\Tame)$ 
is a KiC w.r.t.\ the following structure:
\begin{itemize}
  \item the bottom element in every $\BV(X,T_HY)$ is $\zero\dc\out^\mone$,
and the join of $p,q\in\BV(X,T_HY)$ is $(p\dc\out\plus q\dc\out)\dc \out^\mone$;
  \item given $p\in\BV(X,T_HX)$, $p^\rstar\in\BV(X,T_HX)$ is the unique 
  solution of the equation
  \begin{displaymath}
  	p^\rstar\dc\out = \inl\dc [p\dc\out,\zero]^\rstar\dc T(\id\cpp H p^\rstar).
  \end{displaymath}
\end{itemize}
\end{theorem}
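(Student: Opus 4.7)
The plan is to lift the KiC structure from $\BV_\BBT$ to $\BV_{\BBT_H}$ by transporting every operation through the isomorphism $\out\c T_HX\iso T(X\cpp HT_HX)$ and relying on the finality of $(T_HX,\out)$ for the iteration operator. Throughout, the idea is that $\out$ exposes $T_HX$ as a $\BV_\BBT$-object living in $\BV$, on which the existing structure of $\BV_\BBT$ can do the work.

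First, I would set up the grove structure. The stated $\zero$ and $\plus$ on $\BV(X,T_HY)$ are transported from the grove structure on $\BV(X,T(Y\cpp HT_HY))$ inside $\BV_\BBT$. Kleisli composition in $\BV_{\BBT_H}$ eventually unfolds, via $\out$, to Kleisli composition in $\BV_\BBT$, so the right-distributivity laws~\eqref{eq:dist_plus} transfer. The wide embedding $\BV_\BBT\hookrightarrow\BV_{\BBT_H}$ is the canonical assignment $f\mapsto f\dc T\inl\dc\out^\mone$, which preserves identities, Kleisli composition, $\zero$, and $\plus$ by direct calculation using naturality of $\out^\mone$ together with the monad laws for $\BBT$. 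Since $\Tame$ is taken unchanged, its morphisms become tame in $\BV_{\BBT_H}$ through this embedding, and the Boolean operations on tests are preserved on the nose.

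Next I would address the unique existence of $p^\rstar$. The defining equation is guarded: the recursive occurrence of $p^\rstar$ appears only under $H$ on the right-hand side. Bringing in the Kleene iteration of $[p\dc\out,\zero]$ that already exists in $\BV_\BBT$, the right-hand side encodes a $T(X\cpp H(\argument))$-coalgebra structure whose one-step behaviour is fully determined without reference to $p^\rstar$. Finality of $(T_HX,\out)$ then yields a unique anamorphism into $T_HX$, which is precisely $p^\rstar$; uniqueness follows via a standard bisimulation argument made rigorous by the guardedness. For the axioms \axname{$\rstar$-Fix}, \axname{$\rstar$-Sum}, and \axname{$\rstar$-Uni}, the strategy is uniform: show that both sides of each identity, post-composed with $\out$, satisfy the same guarded corecursive equation, whence uniqueness forces equality. \axname{$\rstar$-Fix} unfolds routinely from the definition. \axname{$\rstar$-Sum} reduces, after $\out$, to an application of \axname{$\rstar$-Sum} in $\BV_\BBT$ for $p\dc\out$ and $q\dc\out$ extended to the coproduct $X\cpp HT_HX$. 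For \axname{$\rstar$-Uni} with $u\in\Tame$ and $u\dc p=q\dc u$, tameness places $u$ inside $\BV_\BBT$, so uniformity already available there applies to the inner iteration, and naturality lets $u$ commute past the functorial $T(\id\cpp H(\argument))$, yielding the required identity after one invocation of the uniqueness principle.

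The main obstacle I anticipate is \axname{$\rstar$-Uni}. Here the coinductive reasoning has to simultaneously respect the inner Kleene iteration in $\BV_\BBT$ and the outer coalgebraic unfolding, while propagating tameness of $u$ through the functor $H$ in each guarded step. Balancing these two layers of recursion — and verifying that the bisimulation one sets up to apply finality is compatible with both — is where I expect the bulk of the technical work to lie; the other axioms are essentially routine unfoldings of the defining equation.
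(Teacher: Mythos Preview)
Your plan is plausible but takes a genuinely different route from the paper, and one step in it is under-justified.

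The paper does \emph{not} verify \axname{$\rstar$-Fix}, \axname{$\rstar$-Sum} and \axname{$\rstar$-Uni} for $\BV_{\BBT_H}$ directly. Instead, it first develops (in \autoref{sec:while}) the equivalence between KiC structure and $\Tame$-uniform Conway (Elgot) iteration (\autoref{thm:elgot-grove}), and then proves \autoref{the:resmp} on the Elgot side: that $\BBT_H$ inherits Conway iteration from $\BBT$ is quoted from prior work~\cite[Lemma~7.2]{GoncharovSchroderEtAl18}, so the only new verification is the \UNI axiom for the Elgot operator $(\argument)^\iistar$ on $\BV_{\BBT_H}$, expressed via the formula~\eqref{eq:uni-fix}. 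That single calculation replaces your three. The wide embedding $\BV_\BBT\hookrightarrow\BV_{\BBT_H}$ is set up exactly as you describe, via $f\mapsto f\dc T\inl\dc\out^\mone$, shown to be a section.

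What the paper's approach buys is economy and reuse: the hard combinatorics of the Conway identities on coalgebraic resumptions are imported wholesale, and the Kleene/Elgot translation of \autoref{thm:elgot-grove} converts them into \axname{$\rstar$-Fix} and \axname{$\rstar$-Sum} for free. Your direct coinductive approach would in principle work, but your sketch of \axname{$\rstar$-Sum} is the weak point, not \axname{$\rstar$-Uni}. The right-hand side $p^\rstar\dc (q\dc p^\rstar)^\rstar$ involves Kleisli composition in $\BV_{\BBT_H}$, which itself unfolds coinductively through $\out$; it does not simply ``reduce to \axname{$\rstar$-Sum} in $\BV_\BBT$ for $p\dc\out$ and $q\dc\out$''. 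You would need to exhibit a single guarded equation satisfied by both sides, and with two nested $(\argument)^\rstar$'s on the right this requires an auxiliary state space and a non-obvious bisimulation --- essentially re-proving the cited lemma. By contrast, for \axname{$\rstar$-Uni} your intuition is accurate and matches what the paper actually checks (albeit for the Elgot operator): tameness of $u$ lets the inner uniformity fire, and naturality pushes $u$ through $T(\id\cpp H(\argument))$, after which uniqueness of the corecursive solution finishes.
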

We defer the proof to \autoref{sec:while} where we use reduction to the existing
result~\cite{GoncharovSchroderEtAl18}, using the equivalence of Kleene and Elgot iterations, 
we establish in \autoref{sec:while}. 
\begin{example}\label{exa:resmp-iter}
By taking $T=\CSet$ and $H=A\times\argument$ in \autoref{the:resmp}, we obtain $T_HX=\nu\gamma.\,\CSet(X\cpp A\times\gamma)$ 
from \autoref{exa:ST}. Let us illustrate the effect of Kleene iteration by example.
Consider the system of equations
\begin{align*}
P = a.\,P\plus Q,\qquad Q = b.\,Q\plus P
\end{align*}
for defining the behaviour of two processes $P$ and $Q$. This system induces a 
function $p\c\{P,Q\}\to T_H\{P,Q\}$, sending $P$ to $a.\,P\plus Q$ and $Q$ to $b.\,Q\plus P$. 
The expression $[p\dc\out,\zero]^\rstar$ calls the iteration operator of the powerset-monad,
resulting in the function that sends both~$P$ and $Q$ to $a.\,P\plus b.\,Q\plus Q\plus P$.
Finally, $p^\rstar$ sends $P$ to $P'$ and $Q$ to $Q'$, where $P'$ and~$Q'$
are the synchronization trees, obtained as unique solutions of the system:    
\begin{align*}
P' = a.\,P'\plus b.\,Q'\plus Q\plus P,\qquad Q = a.\,P'\plus b.\,Q'\plus Q\plus P.
\end{align*}
\end{example}

\section{Elgot Iteration and While-Loops}\label{sec:while}
In this section, we establish an equivalence between Kleene iteration, in the 
sense of KiC and \emph{Elgot iteration}, as an operation with the following 
profile in a category $\BC$ with coproducts:
\begin{align}\label{eq:elgot}
  (\argument)^\istar\c\BC(X,Y\cpp X)\to\BC(X,Y).
\end{align}
This could be done directly, but we prove an equivalence between Elgot 
iteration and while-loops first, and then prove the equivalence of the latter and 
Kleene iteration. In this chain of equivalences, only while-loops need tests,
and it will follow that a particular choice of tests is not 
relevant, once they are expressive. On the other hand, existence of expressive
tests is guaranteed by \autoref{lem:exp}. This explains why tests disappear 
in the resulting equivalence.
\begin{definition}[Conway Iteration, Uniformity]\label{defn:conway}
An Elgot iteration operator~\eqref{eq:elgot} in a category~$\BC$ with coproducts is 
\emph{Conway iteration}~\cite{Esik19} if it satisfies the following 
principles:
\begin{flalign*}
\axname{Naturality}:   &~~  p^{\istar}\dc q = (p\dc(q\cpp\id))^{\istar}          &&   %
~\axname{Dinaturality}: ~  (p\dc[\inl,q])^{\istar} = p\dc [\id,(q\dc[\inl,p])^{\istar}]  \\[1ex] %
\axname{Codiagonal}:   &~~  (p\dc[\id,\inr])^{\istar} = p^{\istar\istar}                       %
\end{flalign*}
Moreover, given a subcategory $\BD$ of $\BC$, $(\argument)^\istar$ is \emph{uniform}
w.r.t.\ $\BD$, or $\BD$-uniform, if it satisfies
\begin{flalign*}
\axname{Uniformity}:&\quad     \lrule{}{u\dc q = p\dc (\id\cpp u)}{u\dc q^{\istar} = p^{\istar}}\qquad\text{(with $u$ from $\BD$)} & 
\end{flalign*}
\end{definition}
By taking $q=\inr$ in \axname{Dinaturality}, we derive 
\begin{flalign*}
\axname{Fixpoint}: p\dc [\id, p^{\istar}] = p^{\istar}.
\end{flalign*}
\begin{ffigure}
\fbox{\parbox{\linewidth}{
\begin{flalign*}
\axname{DW-Fix:}  &\quad   \while{d}{p} = \ifd{d}{p\dc (\while{d}{p})}{\id}\\[1ex]
\axname{DW-Or:}   &\quad   \while{(d\boolor e)}{p} = (\while{d}{p})\dc \while{e}{(p\dc \while{d}{p})}\\[1ex]
\axname{DW-And:}  &\quad   \while{(d\booland (e\boolor\tt))}{p} = \while{d}{(\ifd{e}{p}{p})}\qquad\\[1.5ex]
\axname{DW-Uni:}  &\quad   \frac{u\dc \ifd{d}{p\dc \tt}{\ff} = \ifd{e}{q\dc u\dc \tt}{v\dc \ff}}
                            {u\dc \while{d}{p} = (\while{e}{q})\dc v}
\end{flalign*}
}}
  \caption{Uniform Conway iteration in terms of decisions.%
  }
  \label{fig:while}
\vspace{-2em}
\end{ffigure}
\begin{theorem}\label{thm:while}
Let $\BC$ be a category with coproducts, let $\BD$ be its wide subcategory with 
coproducts, preserved by the inclusion, and let for every 
${X\in|\BC|}$,~$\Dec(X)$ be a set of decisions such that 
(i) $\inl,\inr\in\Dec(X)$, (ii) $\Dec(X)$ is closed under~\eqref{eq:ift},
(iii) $\inl\cpp \inr\in\Dec(X)$ if $X=X_1\cpp X_2$.
Then, to give a $\BD$-uniform Conway iteration on $\BC$ is the 
same as to give an operator
\begin{align*}
\frac{d\in\Dec(X)\qquad p\in\BC(X,X)}{\while{d}{p}\in\BC(X,X)}
\end{align*}
that satisfies the laws in \autoref{fig:while} with $p,q$ ranging over $\BC$, and with $u,v$ ranging over~$\BD$.
\end{theorem}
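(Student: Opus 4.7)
My plan is to prove the bijection by giving explicit, mutually inverse translations between $\BD$-uniform Conway operators on $\BC$ and while-operators satisfying the axioms of \autoref{fig:while}, then checking that each set of axioms passes through the translations.

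For the direction $(\Rightarrow)$ from Conway to while, I would set
\[
  \while{d}{p} \ass (d \dc [\inl, p\dc\inr])^\istar,
\]
so that the loop body $d\dc[\inl, p\dc\inr]\c X\to X\cpp X$ encodes ``test $d$; on false, exit via $\inl$; on true, run $p$ and continue via $\inr$''. Then \axname{DW-Fix} drops out of \axname{Fixpoint} after one unfolding of the body and the $\beta$-law for $[\inl, p\dc\inr]$. \axname{DW-Or} reduces to \axname{Dinaturality} once $d\boolor e = d\dc[e,\inr]$ is substituted and the two iterated bodies are re-associated. \axname{DW-And} combines \axname{Codiagonal} with \axname{Dinaturality}, using the analogous rewriting of $d\booland(e\boolor\tt)$. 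Finally, \axname{DW-Uni} follows from a generalised form of \axname{Uniformity} admitting different morphisms $u$ and $v$ on the two sides, which I would first derive from standard \axname{Uniformity} together with \axname{Naturality} on the output.

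For the direction $(\Leftarrow)$ from while to Conway, I would form, for $p\c X\to Y\cpp X$, the endomorphism $\tilde p = [\inl, p]\c Y\cpp X\to Y\cpp X$ and the decision $\tilde d = \inl\cpp\inr\in\Dec(Y\cpp X)$, which is available by clause~(iii) of the hypothesis on $\Dec$. Setting $\bar q = \while{\tilde d}{\tilde p}$, the \axname{DW-Fix} equation combined with the $\beta$-rules for $\tilde d$ gives $\inl\dc\bar q = \inl$ and $\inr\dc\bar q = p\dc\bar q$, so $\bar q = [\inl, h]$ with $h = p\dc\bar q$ satisfying $h = p\dc[\inl, h]$. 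I then define $p^\istar\c X\to Y$ as the morphism extracted from $h$ by projecting along the $Y$-component of $Y\cpp X$. \axname{Fixpoint} for $(-)^\istar$ is then immediate from the equation for $h$; \axname{Dinaturality}, \axname{Codiagonal}, \axname{Naturality}, and $\BD$-\axname{Uniformity} are obtained by lifting \axname{DW-Or}, \axname{DW-And}, and \axname{DW-Uni} through the same factorisation. A short round-trip computation unfolding each translation once via \axname{DW-Fix} and \axname{Fixpoint} confirms that the two translations are mutually inverse.

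The main obstacle is the extraction step in the $(\Leftarrow)$ direction: since $\inl\c Y\to Y\cpp X$ is not in general a split mono in an arbitrary category with coproducts, the projection of $h$ along $Y$ requires a choice that must ultimately be shown canonical. I expect this canonicity to be forced by \axname{DW-Uni} applied to two whiles that differ only by the chosen retraction of $\inl$; the conclusion of \axname{DW-Uni} then forces equality of the two candidate $p^\istar$'s, making the definition independent of the choice. Once this extraction is established, the remaining verifications are patient coproduct bookkeeping on either side of the equivalence.
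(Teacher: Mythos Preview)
Your forward translation $\while{d}{p} = (d\dc[\inl, p\dc\inr])^\istar$ coincides with the paper's (written there as $(\ifd{d}{p\dc\tt}{\ff})^\istar$), and your backward while-loop $\bar q = \while{(\inl\cpp\inr)}{[\inl,p]}$ is also exactly the paper's. The genuine gap is the extraction step. You correctly derive $\bar q = [\inl, h]$ with $h\c X \to Y\cpp X$ satisfying $h = p\dc[\inl,h]$, but you then want to ``project $h$ along the $Y$-component'' to obtain $p^\istar\c X \to Y$, framing the difficulty as \emph{canonicity} of a chosen retraction of $\inl$ and hoping \axname{DW-Uni} will force uniqueness. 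The actual obstruction is \emph{existence}: in a bare category with coproducts, $\inl\c Y \to Y\cpp X$ need have no retraction at all, so there may be nothing to choose, and \axname{DW-Uni} (whose conclusion only relates while-loops post-composed with a $\BD$-morphism) gives you no direct handle on comparing candidate retractions anyway. The paper resolves this by post-composing with the explicit retraction $[\id, \div]$, where $\div\c X \to Y$ is a divergence morphism, setting
\[
  p^\istar = \inr\dc(\while{(\inl\cpp\inr)}{[\inl,p]})\dc[\id,\div];
\]
the round-trip arguments then establish that the dependence on $\div$ cancels. This explicit construction of the retraction is the idea you are missing.

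Two smaller points. First, your axiom attributions in the $(\Rightarrow)$ direction are off: \axname{DW-And} needs neither \axname{Codiagonal} nor \axname{Dinaturality}, since $e\boolor\tt = e\dc\nabla\dc\inr$ makes the two loop bodies $d\booland(e\boolor\tt)\dc[\inl,p\dc\inr]$ and $d\dc[\inl,(\ifd{e}{p}{p})\dc\inr]$ literally equal before any iteration law is applied. Second, in the $(\Leftarrow)$ direction the paper does not ``lift'' the Conway laws one-for-one through your factorisation; it first obtains \axname{Fixpoint}, \axname{Codiagonal}, and only restricted instances of \axname{Naturality} (with $q=\id\cpp\inl$ or $q=\inr$) and \axname{Uniformity} (with $u$ a coproduct injection), and then bootstraps full \axname{Naturality} and \axname{Dinaturality} from those via a separate argument. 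Your ``lifting through the same factorisation'' does not capture this staging.
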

\autoref{thm:while} yields an equivalence between two styles of iteration:
Elgot iteration and while-iteration. We next specialize it to idempotent grove categories
with tests using \autoref{lem:tests-dec}.

Note that in any KiC $(\BC,\Tame)$ with tests $\Tst$, in addition to
the if-then-else~\eqref{eq:if}, we have the while operator, defined in the standard way:
given $b\in\Tst(X)$, $p\in\BC(X,X)$,
\begin{align}\label{eq:if-while}
\wt{b}{p} = (b\dc p)^\rstar\dc\bar b.
\end{align}\par  
\begin{proposition}\label{pro:test_while}
Let $\BC$ be an idempotent grove category, let $\BD$ be a wide subcategory of~$\BC$ with 
coproducts, which are preserved by the inclusion to $\BC$, and with expressive tests
$\Tst$. Then $\BC$ supports $\BD$-uniform Conway 
iteration iff it supports a while-operator 
that satisfies the laws in~\autoref{fig:twhile}, where $b$ and $c$ come from 
$\Tst$, $p$ and~$q$ come from $\BC$ and $u,v$ come from~$\BD$ and the if-then-else operator is defined as in~\eqref{eq:if-while}.
\end{proposition}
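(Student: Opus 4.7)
The strategy is to reduce the statement to \autoref{thm:while} by choosing a particular set of decisions in bijection with the tests. I take $\Dec(X) \ass \{\dec b\mid b\in\Tst(X)\} \subseteq \BC(X,X\cpp X)$; by \autoref{lem:tests-dec}(1) this family is in bijection with $\Tst(X)$ via the retraction $?\comp\dec = \id$, which is exactly the device that lets me translate freely between the test-based and decision-based formulations.

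First I verify that $\Dec$ satisfies hypotheses~(i)--(iii) of \autoref{thm:while}. Hypothesis~(i) follows from the direct computations $\inl = \dec\bot$ and $\inr = \dec\top$. Hypothesis~(iii) uses expressiveness: by \autoref{def:exp} the test $[\inl,\zero]$ lies in $\Tst(X)$ whenever $X=X_1\cpp X_2$, so its complement $[\zero,\inr]$ is in $\Tst(X)$ as well, and one checks that $\dec[\zero,\inr] = \inl\cpp\inr$. For~(ii), given $b,b',c\in\Tst(X)$, the coproduct law $[h,k]\dc f = [h\dc f,k\dc f]$ together with \autoref{pro:copr} reduces $\ifd{\dec b}{\dec c}{\dec b'}$ to $\dec(\bar b\dc b'\plus b\dc c)$; this latter element is a test (its complement, computed by the Boolean laws for tests, is $\bar b\dc\bar{b'}\plus b\dc\bar c$), so $\Dec$ is closed under if-then-else.

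With the hypotheses secured, \autoref{thm:while} yields a bijective correspondence between $\BD$-uniform Conway iteration operators on $\BC$ and while-operators on $\Dec$ satisfying the axioms of \autoref{fig:while}. I then transport this correspondence to the test-based presentation by defining $\wt{b}{p} \ass \while{\dec b}{p}$ (and conversely $\while{\dec b}{p} \ass \wt{b}{p}$). Each axiom of \autoref{fig:while} becomes the corresponding axiom of \autoref{fig:twhile} by substituting $\ifd{\dec b}{\cdot}{\cdot}$ with $\ift{b}{\cdot}{\cdot}$, which is justified by \autoref{pro:if_as_join} since $\dec b$ is linear (\autoref{lem:tests-dec}(2)), and by replacing the decision connectives $\boolor,\booland,\boolnot$ with the test connectives $\lor,\land,\bar{(\cdot)}$, as justified by \autoref{lem:tests-dec}(3) combined with the retraction. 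Under this translation \axname{DW-And} collapses to a tautology (since on tests $c\lor\top=\top$ and $\ift{c}{p}{p} = p$), so it need not be reflected in \autoref{fig:twhile}.

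The step I expect to be the most delicate is the translation of \axname{DW-Uni}: its premise involves the decision-level if-then-else together with $\tt,\ff$ appearing as branches, and I must verify that this premise reduces exactly to the test-level uniformity premise listed in \autoref{fig:twhile}. This is again handled by the retraction, observing that $\tt=\dec\top$ and $\ff=\dec\bot$ already lie in $\Dec$, so \autoref{pro:if_as_join} applies once more; the remaining work amounts to keeping track of the domains and codomains of the morphisms involved.
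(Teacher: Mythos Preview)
Your proposal is correct and follows essentially the same route as the paper: define $\Dec(X)$ as the image of $\Tst(X)$ under $\dec$, invoke \autoref{thm:while}, and translate the decision-based axioms to the test-based ones via the retraction of \autoref{lem:tests-dec} together with \autoref{pro:if_as_join}; the observation that \axname{DW-And} collapses to a triviality is exactly what the paper records. Your explicit verification of hypotheses~(i)--(iii) of \autoref{thm:while} is more detailed than the paper's one-line appeal to \autoref{lem:tests-dec}, but the content is the same.

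One small remark on your final paragraph: in the \axname{DW-Uni} premise, $\tt$ and $\ff$ appear as the coproduct injections postcomposed in the \emph{branches}, not as decisions in the guard position, so the relevance of \autoref{pro:if_as_join} is simply that the guards $d=\dec b$ and $e=\dec c$ are linear (\autoref{lem:tests-dec}(2)); whether $\tt,\ff$ themselves lie in $\Dec$ is immaterial at that point. Splitting the resulting single equation into the two premises of \axname{TW-Uni} (and conversely reassembling them) then uses linearity of $\inl,\inr$ to project onto the two summands; this is the bookkeeping you allude to, and it goes through just as in the other axioms.
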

\begin{proof}
For every $X\in |\BC|$, let $\Dec(X)$ be the image of $\Tst(X)$ under $\dec$ from \autoref{lem:tests-dec}.
As shown in the lemma, $\Dec(X)$ inherits the Boolean algebra structure from $\Tst(X)$. 
Using the isomorphism between $\Dec(X)$ and $\Tst(X)$ and \autoref{pro:if_as_join}, the laws 
from \autoref{fig:while} can be reformulated equivalently, resulting in~\axname{TW-Fix}, 
\axname{TW-Or}, \axname{TW-Uni}, and additionally
\begin{align*}
\wt{(b\land (c\lor\top))}{p} =&\; \wt{b}{(\ift{c}{p}{p})}
\end{align*}
which however holds trivially.
\end{proof}
Thus, in grove categories with expressive 
tests, Elgot iteration and while-loops are equivalent. We establish a 
similar equivalence between Kleene iteration and while-loops, which will  
entail an equivalence between Elgot iteration and Kleene iteration by transitivity.
\begin{theorem}\label{thm:while-kleene}
Let $\BC$, $\Tame$ and $\Tst$ be as follows.
\begin{enumerate}
  \item $\BC$ is an idempotent grove category with coproducts.
  \item $\Tame$ is a wide subcategory of $\BC$ with coproducts, 
  consisting of linear morphisms only and such that the inclusion of $\Tame$ to $\BC$ preserves coproducts.
  \item $\Tst$ are expressive tests in $\BC$.
\end{enumerate}
Then $(\BC,\Tame)$ is a KiCT iff $\BC$ supports a while-operator
 satisfying the laws in~\autoref{fig:twhile}. %
\end{theorem}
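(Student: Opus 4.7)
The plan is to route the equivalence through $\Tame$-uniform Conway iteration on $\BC$. By \autoref{pro:test_while}, supporting a while-operator satisfying the laws of \autoref{fig:twhile} is equivalent to $\BC$ admitting $\Tame$-uniform Conway iteration; hence it suffices to establish that $(\BC,\Tame)$ is a KiCT iff $\BC$ admits $\Tame$-uniform Conway iteration.

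For the direction Conway $\Rightarrow$ KiCT, I would define $p^\rstar \ass (\inl\plus p\dc\inr)^\istar$ for $p\c X\to X$. The axiom \axname{$\rstar$-Fix} reduces to \axname{Fixpoint} by the coproduct universal property and right distributivity, available in any idempotent grove category: indeed $(\inl\plus p\dc\inr)\dc[\id,p^\rstar] = \inl\dc[\id,p^\rstar]\plus p\dc\inr\dc[\id,p^\rstar] = \id\plus p\dc p^\rstar$. The axiom \axname{$\rstar$-Uni} follows from combining \axname{Naturality} (which rewrites $q^\rstar\dc u$ as $(u\dc\inl\plus q\dc\inr)^\istar$) with \axname{Uniformity} (which rewrites $u\dc p^\rstar$ as the same expression, using linearity of $u\in\Tame$ to distribute $u$ over $\inl\plus p\dc\inr$). \axname{$\rstar$-Sum} is the classical Conway denesting identity, derivable from \axname{Dinaturality} and \axname{Codiagonal} by a standard Bloom-\'Esik equational calculation.

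For the reverse direction KiCT $\Rightarrow$ Conway, I would first construct the while-operator from Kleene iteration via \eqref{eq:if-while} and verify the laws of \autoref{fig:twhile} directly; then \autoref{pro:test_while} yields the desired $\Tame$-uniform Conway iteration. The verifications of \axname{TW-Fix} and \axname{TW-Uni} are immediate from \axname{$\rstar$-Fix} and \axname{$\rstar$-Uni}, respectively; \axname{TW-Or} is the principal obstacle.

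The difficulty with \axname{TW-Or} is that non-linear programs $p$ do not allow tests to be moved across iterations by left distributivity. The key identity, provable in any KiCT, is $\bar b\dc(b\dc p)^\rstar = \bar b$, which follows from $\bar b\dc b = \zero$ (contradiction in the test algebra) together with \axname{$\rstar$-Fix}: $\bar b\dc(b\dc p)^\rstar = \bar b\plus\bar b\dc b\dc p\dc(b\dc p)^\rstar = \bar b\plus\zero = \bar b$. Combining this contraction with \axname{$\rstar$-Sum} (applied to $((b\plus c)\dc p)^\rstar$) and \axname{$\rstar$-Uni} (using the tame tests $\bar b$, $\bar c$ as uniformity morphisms) should yield the required rearrangement of both sides of \axname{TW-Or} to a common form, without ever requiring $p$ itself to distribute over sums.
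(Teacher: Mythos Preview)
Your proposal is correct and follows essentially the same route as the paper: the paper also verifies \axname{TW-Fix}, \axname{TW-Or}, \axname{TW-Uni} directly from the Kleene axioms (its part~(iii), with the \axname{TW-Or} argument proceeding via $b\lor c = b\plus\bar b\dc c$, \axname{$\rstar$-Sum}, and \axname{$\rstar$-Uni} with the tame test $\bar b$, which is your contraction idea), and in the converse direction sets $p^\rstar = (\inl\plus p\dc\inr)^\istar$ and derives \axname{$\rstar$-Fix}, \axname{$\rstar$-Sum}, \axname{$\rstar$-Uni} from the Conway axioms obtained from the while-laws via \autoref{thm:while}/\autoref{pro:test_while} (its part~(iv)). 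The only extra work the paper does is to verify that the translations $(\argument)^\rstar\leftrightarrow\underline{\oname{while}}$ are mutually inverse, giving a bijection of operators rather than a mere equivalence of existence; this is not needed for the theorem as stated.
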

We can now characterize KiCs in terms of Elgot iteration.
\begin{theorem}\label{thm:elgot-grove}
Let $\BC$ be an idempotent grove category with coproducts, and let~$\Tame$ be
a wide subcategory of $\BC$ with coproducts, consisting of linear 
morphisms only and such that the inclusion of $\Tame$ to $\BC$ preserves coproducts.

Then $(\BC,\Tame)$ is a KiC iff $\BC$ supports $\Tame$-uniform 
Conway iteration.
\end{theorem}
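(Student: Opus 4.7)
The plan is to use while-loops as an intermediate bridge between Kleene iteration and Elgot/Conway iteration, composing the two equivalences already established in this section. Concretely, given $\BC$ and $\Tame$ as in the statement, I would first apply \autoref{lem:exp} inside $\BC$ to fix the smallest family of expressive tests $\Tst$. The initial bookkeeping is to check that $\Tst$ in fact lives in $\Tame$, so that the tuple $(\BC,\Tame,\Tst)$ is eligible to be a KiCT: the generators $\zero$, $\id$, $[\inl,\zero]$, $[\zero,\inr]$ are tame by the coproduct-preservation clause of \autoref{def:GICCT}, and the closure under $\plus$ and $\dc$ performed in \autoref{lem:exp} stays in $\Tame$ by the linearity assumption on $\Tame$. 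Since the choice of tests is canonical and plays no role in \autoref{def:GICCT}, the tuple $(\BC,\Tame)$ is a KiC precisely when $(\BC,\Tame,\Tst)$ is a KiCT.

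With this identification in hand, I would chain two biconditionals. First, \autoref{thm:while-kleene} yields that $(\BC,\Tame,\Tst)$ is a KiCT iff $\BC$ admits a while-operator satisfying the laws of \autoref{fig:twhile}. Second, \autoref{pro:test_while}, whose hypotheses are met with $\BD$ instantiated to $\Tame$, yields that $\BC$ admits such a while-operator iff $\BC$ admits $\Tame$-uniform Conway iteration. Composing the two biconditionals delivers the claimed equivalence.

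The main obstacle is the bookkeeping step of verifying that the smallest expressive family of tests actually lies inside $\Tame$; once that is settled, no new iteration identities need to be checked, and the proof is a mechanical composition of the two ``transporters'' through while-loops.
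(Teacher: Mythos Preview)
Your proposal is correct and follows exactly the paper's route: fix the smallest expressive tests via \autoref{lem:exp}, then compose the biconditional of \autoref{thm:while-kleene} with that of \autoref{pro:test_while} (instantiating $\BD=\Tame$). The only difference is that you spell out the bookkeeping that $\Tst$ lands in $\Tame$, which the paper leaves implicit; note however that your stated reason (``by the linearity assumption on $\Tame$'') is not quite the right justification for closure under $\plus$---this needs the coproduct structure of $\Tame$ rather than linearity.
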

\begin{proof}
Let us define $\Tst$ as in \autoref{def:exp}. By~\autoref{thm:while-kleene},
$(\BC,\Tame)$ is a KiCT iff $\BC$ supports a while-operator%
, satisfying the laws in~\autoref{fig:twhile}. By \autoref{pro:test_while},
the latter is the case iff~$\BC$ supports $\Tame$-uniform Conway 
iteration.
\end{proof}

\begin{ffigure}
\fbox{\parbox{\linewidth}{
\begin{flalign*}
\axname{TW-Fix:}  &\quad   \wt{b}{p} = \ift{b}{p\dc (\wt{b}{p})}{\id}\\[1ex]
\axname{TW-Or:}   &\quad   \wt{(b\lor c)}{p} = (\wt{b}{p})\dc \wt{c}{(p\dc \wt{b}{p})}\qquad\\[1ex]
\axname{TW-Uni:}  &\quad   \frac{u\dc \bar b = \bar c\dc v\qquad u\dc b\dc p = c\dc q\dc u}
                            {u\dc \wt{b}{p} = (\wt{c}{q})\dc v}
\end{flalign*}
}}
  \caption{Uniform Conway iteration in terms of tests. %
  }
  \label{fig:twhile}
\vspace{-2em}
\end{ffigure}
Now we can prove \autoref{the:resmp}.
\begin{proof}[Proof~\autoref{the:resmp} (Sketch)]
We need to check that ${(\BV_{\BBT_H},\Tame)}$
is a KiC. By \autoref{thm:elgot-grove}, we equivalently prove that
$\BV_{\BBT_H}$ supports $\Tame$-uniform Conway iteration.
It is already known~\cite[Lemma 7.2]{GoncharovSchroderEtAl18} that if $\BBT$ supports Conway iteration,  
then so does~$\BBT_H$. 
By \autoref{thm:elgot-grove}, we are left to check that $\BBT_H$ satisfies $\UNI$,
which is a matter of calculation.
\end{proof}

\section{Free KiCTs and Completeness}\label{sec:free}
In this section, we characterize a free KiCT with strict coproducts (i.e.\ those, 
for which coherence maps $X\cpp (Y\cpp Z)\iso (X\cpp Y)\cpp Z$ are identities) on a one-sorted 
signature. We achieve this by combining techniques from formal languages~\cite{Brzozowski64},  
category theory and the theory of Elgot iteration with coalgebraic reasoning~\cite{Rutten03}, in particular proofs by coalgebraic bisimilarity.
We claim that a more general characterization of a free KiCT on a multi-sorted signature can be achieved 
along the same lines, modulo a significant notation overhead and the necessity to form 
final coalgebras in the category of multisorted sets $\Set^\CS$ where $\CS$ is the set of sorts.
We dispense with this option for the sake of brevity and readability.
Let us fix 
\begin{itemize}
  \item a signatures of $n$-ary symbols $\Sigma_n$ for each $n\in\nat$, and let $\Sigma=\bigcup_n\Sigma_n$; 
  \item a signature $\Gamma$ of (unary) symbols, disjoint from $\Sigma$; 
  \item a finite (!) signature $\Theta$ of (unary) symbols, disjoint from $\Sigma\cup\Gamma$.  
\end{itemize}
Let $\hat\Theta$ denote the set of finite subsets of $\Theta$.
We regard $\Theta$ as a signature for tests, $\Gamma$ as a signature for tame morphisms
and~$\Sigma$ as a signature for general morphisms;~$\hat\Theta$ is meant to capture 
finite conjunctions of the form $\bs_1\land\ldots\land\bs_n\land\bar\bs_{n+1}\land\ldots\land\bar\bs_m$
as semantic correspondents of subsets $\{\bs_1,\ldots,\bs_n\}\in\hat\Theta$, assuming an enumeration 
$\Theta = \{\bs_1,\ldots,\bs_m\}$. This is inspired 
by Kleene algebra with tests~\cite{Kozen97a}. Furthermore, we accommodate 
\emph{guarded strings} from \emph{op.~cit.}: let 
$\Gamma^\Theta$
be the set of strings 
	$\brks{b_1,\us_1,\ldots,b_n,\us_n,b_{n+1}}$
with $\us_i\in\Gamma$, $b_i\in \hat\Theta$.
\subsection{Interpretations}
An \emph{interpretation} $\sem{\argument}$ of $(\Sigma,\Gamma,\Theta)$ over a KiCT 
$(\BC,\Tame,\Tst)$ is specified as follows:
\begin{align*}
\sem{1}\in\;& |\BC|&
\sem{\fs}\in\;&\BC(\sem{1},\sem{n})\qquad
(\fs\in\Sigma_{n})\\[1ex]
\sem{\us}\in\;&\Tame(\sem{1},\sem{1})\qquad(\us\in\Gamma)&
\sem{\bs}\in\;&\Tst(\sem{1},\sem{1})\qquad (\bs\in\Theta)%
\end{align*}
where $\sem{n}$ abbreviates the $n$-fold sum $\sem{1}\cpp\ldots\cpp\sem{1}$.
The latter immediately extends to~$\hat\Theta$: $\sem{\{\,\}}=\id$, 
$\sem{\{\bs_1,\ldots,\bs_n\}} = \bs_1\dc\ldots\dc\bs_n\dc\bar\bs_{n+1}\dc\ldots\dc\bar\bs_m$,
assuming that $\Theta = \{\bs_1,\ldots,\bs_m\}$.
Note that we interpret $n$-ary symbols over $\BC(\sem{1},\sem{n})=\BC^{\op}(\sem{1}^n,\sem{1})$.
This equation seems to suggest that it could be more natural to use categories with products as models, 
rather than categories with coproducts. Our present choice helps us to treat generic
KiCTs on the same footing with the free KiCT,
which is defined in terms of coproducts and not products.  
\begin{definition}[Free KiCT]\label{def:free_sg}
A \emph{free KiCT} w.r.t.\ $(\Sigma,\Gamma,\Theta)$ is a KiCT $(\Frep_{\Sigma,\Gamma,\Theta},\overline{\Frep_{\Sigma,\Gamma,\Theta}},{\Frepq_{\Sigma,\Gamma,\Theta}})$  together with an interpretation of $(\Sigma,\Gamma,\Theta)$ in $\Frep_{\Sigma,\Gamma,\Theta}$, such that for any other interpretation of $(\Sigma,\Gamma,\Theta)$ over a KiCT $(\BC,\Tame,\Tst)$, there is unique compatible functor from $\Frep_{\Sigma,\Gamma,\Theta}$
to $\BC$. More formally, for any interpretation $\sem{\argument}$, there is unique KiCT-functor 
$\sem{\argument}_\shortuparrow\c (\Frep_{\Sigma,\Gamma,\Theta},\overline{\Frep_{\Sigma,\Gamma,\Theta}},{\Frepq_{\Sigma,\Gamma,\Theta}})\to(\BC,\Tame,\Tst)$ 
such that the diagram 
\begin{equation}\label{eq:free-cpp}
\begin{tikzcd}[column sep=5em, row sep=normal]
\Frep_{\Sigma,\Gamma,\Theta}
	\rar["\sem{\argument}_\shortuparrow"] & \BC\\
(\Sigma,\Gamma,\Theta)
	\uar["\sem{\argument}^{\Fre}"]
	\ar[ur,"\sem{\argument}"']           
\end{tikzcd}
\end{equation}
commutes.
\end{definition}
In what follows, we characterize $\Frep_{\Sigma,\Gamma,\Theta}$ as a certain category 
of rational trees, i.e.\ trees with finitely many distinct subtrees. An alternative,
equivalent formulation would be to view~$\Frep_{\Sigma,\Gamma,\Theta}$ as a free 
model of the (Lawvere) theory of KiCTs.

Like in the case of original Kozen's completeness result~\cite{Kozen94}, a characterization
of the free model immediately entails completeness of the corresponding axiomatization over it. 
Indeed, by generalities, a free KiCT is isomorphic to the free algebra of 
terms, quotiented by the provable equality relation. Hence, if an equality holds 
over the free model, it is provable.

\subsection{A KiCT of Coalgebraic Resumptions}

For any set~$X$, define
	$TX=\CR (\Gamma^\Theta\times X)$ and
	$T_\nu X=\nu\gamma.\,T(X \cpp \Sigma\gamma)$,
in the category of sets~$\Set$ where $\CR$ is the monad from \autoref{exa:ini}.

A stepping stone for constructing $\Frep_{\Sigma,\Gamma,\Theta}$ is the 
observation that $(\Set_{\BBT_\nu},\Set_{\BBT})$ forms a KiC.
Indeed, $(\Set_{\CR},\Set_{\CR })$ is a KiC and the monad $\CR $ is commutative, 
hence symmetric monoidal. In $\Set_{\CR}$, $\Gamma^\Theta$ is a monoid
under the following operations:
\begin{align*}
\brks{w_1,\ldots,w_{n+1}}\cdot \brks{u_1,\ldots, u_{m+1}} = \begin{cases}
    \brks{w_1,\ldots,w_{n},u_2,\ldots, u_{m+1}}  & \text{if } w_n=u_1\\
    \zero 					& \text{otherwise}
  \end{cases}
\end{align*}
This produces the monad $\BBT$, whose Kleisli category is a KiC by 
\autoref{the:grove-monad}, analogously to \autoref{exa:fin-tra}. Now, 
$(\Set_{\BBT_\nu},\Set_{\BBT})$ is a KiC by \autoref{the:resmp}. We will use the following 
representation for generic elements of $T_\nu X$: 
\begin{align}\label{eq:gen-tree}
	t=\sum_{i\in I} b_i.\,\us_i.\,t_i+\sum_{i\in J} b_i.\,\fs_i(t_{i,1},\ldots,t_{i,n_i}) + \sum_{i\in K} b_i.\,x_i
\end{align}
where $I$, $J$, $K$ are mutually disjoint countable sets, $b_i$ range over $\hat\Theta$, 
$\us_i$ range over $\Gamma$, $\fs_i$ range over $\Sigma$, $t_i,t_{i,j}$ range over $T_\nu X$ 
and $x_i$ range over $X$. 

\begin{definition}[Derivatives]
For every $t\in T_\nu X$, as in~\eqref{eq:gen-tree}, define the following \emph{derivative} operations:
\begin{itemize}
  \item $\partial_{b,\us}(t)=\sum_{i\in I,b=b_i,\us=\us_i} t_i$, for $b\in\hat\Theta$, $\us\in\Gamma$;
  \item $\partial_{b,\fs}^k(t)=\sum_{i\in J,b=b_i,\fs=\fs_i} t_{i,k}$, for $b\in\hat\Theta$, $k\in\{1,\ldots,n_i\}$, $\fs\in\Sigma_{n_i}$ with $n_i>0$.
\end{itemize}
Additionally, let $o(t)= \sum_{i\in K} b_i.\,x_i$.
We extend these operations to arbitrary morphisms $Y\to T_\nu X$ pointwise. 
The \emph{set of derivatives} of $t\in T_\nu X$ is the smallest set $\Der(t)$ that 
contains~$t$ and is closed under all $\partial_{b,\us}$ and $\partial_{b,\fs}^k$. 
\end{definition}
The following property is a direct consequence of these definitions:
\begin{lemma}\label{lem:D-comp}
Let $t\in T_{\nu}X$ be as in~\eqref{eq:gen-tree}, and let $s\c X\to T_\nu Y$.
Then 
\begin{align*}
	\partial_{b,\us}(t\dc s^\klstar) 	=&\; \partial_{b,\us}(t)\dc s^\klstar\plus o(t)\dc (\partial_{b,\us}(s))^\klstar& o(t\dc s^\klstar) =&\; o(t)\dc (o(s))^\klstar\\
	\partial_{b,\fs}^k(t\dc s^\klstar) 	=&\; \partial_{b,\fs}^k(t)\dc s^\klstar\plus o(t)\dc (\partial_{b,\fs}^k(s))^\klstar
\end{align*} 
\end{lemma}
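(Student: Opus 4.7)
My plan is a direct, coinduction-free calculation on the representation~\eqref{eq:gen-tree} and the concrete description of Kleisli composition in $\Set_{\BBT_H}$ inherited from the writer-monad structure on $\Set_\CR$ via \autoref{the:resmp}. I would treat the three identities in turn, noting that the $\Sigma$-derivative identity is completely parallel to the $\Gamma$-derivative one; so the core work is the first and third.

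For the first identity, I would begin by unfolding $t \dc s^\klstar$ summand-by-summand using~\eqref{eq:gen-tree}. The Kleisli lift $(\argument)^\klstar$ distributes over $\plus$ and commutes with the action-prefix $b_i.\us_i.(\argument)$ and with the $\Sigma$-context $b_i.\fs_i(\argument,\ldots,\argument)$, so the $I$- and $J$-parts of $t$ survive structurally with continuations $t_i,t_{i,j}$ replaced by their $s^\klstar$-composites. On each $K$-leaf $b_i.x_i$, the substitution plugs in $s(x_i)$, and the monoid multiplication of $\Gamma^\Theta$ glues the singleton guard $\brks{b_i}$ against the front of each generator of $\out(s(x_i))$. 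I would then read off $\partial_{b,\us}$ of this formal sum: only transitions with root guard $b$ and action $\us$ contribute, and they decompose into exactly two sources — first-step $(b,\us)$-transitions inherited from the $I$-part of $t$, which reassemble to $\partial_{b,\us}(t)\dc s^\klstar$; and first-step $(b,\us)$-transitions produced by the expanded $s(x_i)$ at the $K$-leaves, which by the guarded-string monoid law reassemble to $o(t)\dc(\partial_{b,\us}(s))^\klstar$.

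For the third identity I would use the same decomposition but now observe that output-only generators of $t \dc s^\klstar$ cannot come from the $I$- or $J$-parts (both introduce non-output prefixes at the root), so they arise solely at the $K$-leaves from output-only portions of each $s(x_i)$, again with the leaf-guard monoid composition. This packaging is precisely $o(t)\dc(o(s))^\klstar$, yielding the identity.

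The main obstacle I anticipate is purely bookkeeping: carefully verifying that the derivative operations commute with the monoid product on $\Gamma^\Theta$ in the right way, so that on the $K$-leaves the substitution $\brks{b_i}\cdot \out(s(x_i))$ yields, after taking $\partial_{b,\us}$ (resp.\ $\partial_{b,\fs}^k$ or $o$) of its result, the same sum as the Kleisli composition of $o(t)$ with the pointwise derivative of $s$. This amounts to checking that taking $\partial_{b,\us}$ (resp.\ $o$) inside the monad's multiplication at the $X$-leaves matches taking it outside, which is a routine but notationally dense distributivity calculation over the representation~\eqref{eq:gen-tree} and the unfolding supplied by $\out$.
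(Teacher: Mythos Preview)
Your approach is correct and matches the paper's: the paper states that the lemma is ``a direct consequence of these definitions'' and gives no further proof, so your plan of unfolding $t\dc s^\klstar$ via~\eqref{eq:gen-tree} and reading off the derivative/output operators summand-by-summand is exactly the intended argument. The only remark is that your write-up is far more detailed than what the paper supplies; the bookkeeping you flag (compatibility of derivatives with the $\Gamma^\Theta$-monoid multiplication at the leaves) is indeed the only thing to check, and it is routine.
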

\begin{lemma}\label{lem:bisim}
Given a set $X$, let $\Bis\subseteq T_\nu X\times T_\nu X$ be 
such a relation that whenever~$t\Bis* s$,
\begin{enumerate}
  \item $\partial_{b,\us}(t)\Bis*\partial_{b,\us}(s)$ for all $b\in\hat\Theta$, $\us\in\Gamma$,
  \item $\partial_{b,\fs}^k(t)\Bis*\partial_{b,\fs}^k(s)$ for all $b\in\hat\Theta$, $\fs\in\Sigma$,
  \item $o(t) = o(s)$.
\end{enumerate} 
Then, $t=s$ whenever $t\Bis* s$.
\end{lemma}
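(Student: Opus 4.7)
The plan is to exhibit $\Bis$ as a coalgebraic bisimulation on the final $F$-coalgebra $(T_\nu X,\out)$, where $F\gamma=T(X\cpp\Sigma\gamma)=\CR(\Gamma^\Theta\times(X\cpp\Sigma\gamma))$, and then invoke strong extensionality of final coalgebras (\autoref{sec:prelim}): any such bisimulation is contained in the diagonal of $T_\nu X$, which directly gives $t=s$ whenever $t\Bis* s$.

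The technical heart is the construction of a coalgebra structure $\beta\c\Bis\to F\Bis$ turning both projections $\pi_1,\pi_2\c\Bis\to T_\nu X$ into coalgebra morphisms, i.e.\ such that $\pi_i\dc\out=\beta\dc F\pi_i$ for $i=1,2$. The generic form~\eqref{eq:gen-tree} shows that the data of $\out(t)$ are completely encoded by the families $(\partial_{b,\us}(t))_{b,\us}$ and $(\partial_{b,\fs}^k(t))_{b,\fs,k}$ together with the pure output $o(t)$: every $\out(t)$-summand is of shape $\brks{b,\us}.t'$, $\brks{b}.\fs(t'_1,\dots,t'_n)$ or $\brks{b}.x$, and the continuations sharing a given $(b,\us)$-prefix (resp.\ $(b,\fs,k)$-coordinate) aggregate to $\partial_{b,\us}(t)$ (resp.\ $\partial_{b,\fs}^k(t)$). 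For $(t,s)\in\Bis$ I would therefore build $\beta(t,s)$ by the same recipe, replacing each continuation slot by the corresponding \emph{pair} of derivatives viewed as an element of $\Bis$: hypothesis~1 supplies $(\partial_{b,\us}(t),\partial_{b,\us}(s))\in\Bis$, hypothesis~2 does the same for the $\partial_{b,\fs}^k$-pairs, and hypothesis~3 lets us use the common output $o(t)=o(s)$ on both sides. Projection along $\pi_i$ then recovers $\out(\pi_i(t,s))$ by construction.

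The main difficulty will lie in the aggregation step: justifying that all summands of $\out(t)$ sharing a common $(b,\us)$-prefix (respectively $(b,\fs,k)$-coordinate) can indeed be coalesced into a single summand with continuation $\partial_{b,\us}(t)$ (resp.\ $\partial_{b,\fs}^k(t)$), so that $\pi_i\dc\out$ and $\beta\dc F\pi_i$ agree on the nose in $T_\nu X$. This identification leans on the linearity of the tame morphisms $\us\in\Gamma$ and on the way $\Gamma^\Theta$-prefixing distributes over $\CR$-summation --- essentially the same ingredients that underpin \autoref{lem:D-comp}. Once this aggregation is secured, checking that $\beta$ lands in $F\Bis$ reduces to reading off hypotheses~1--3, and strong extensionality then delivers $t=s$.
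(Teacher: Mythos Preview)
Your high-level plan --- exhibit $\Bis$ as a coalgebraic bisimulation on the final coalgebra and invoke strong extensionality --- is exactly the paper's approach. The gap is in your description of $\out$.

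You write that ``every $\out(t)$-summand is of shape $\brks{b,\us}.t'$, $\brks{b}.\fs(t'_1,\dots,t'_n)$ or $\brks{b}.x$''. This is not so. Recall that $F\gamma = T(X\cpp\Sigma\gamma) = \CR(\Gamma^\Theta \times (X\cpp\Sigma\gamma))$, and $\Gamma^\Theta$ consists of \emph{arbitrary-length} guarded strings $\brks{b_1,\us_1,\ldots,b_n,\us_n,b_{n+1}}$. Hence an $\out(t)$-summand carries a full prefix $g\in\Gamma^\Theta$ followed by either $x\in X$ or $\fs(t_1,\ldots,t_m)\in\Sigma T_\nu X$; it does \emph{not} carry a single $(b,\us)$. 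The derivatives $\partial_{b,\us}$ are defined relative to the one-step representation~\eqref{eq:gen-tree} and peel off only the leading $\Gamma$-symbol; they do not expose the continuation slots of $\out$ directly.

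Consequently your proposed $\beta(t,s)$, built by pairing up single-step derivatives, does not land in $F\Bis$, and the commutation $\pi_i\dc\out = \beta\dc F\pi_i$ cannot be verified by one unfolding. What is actually needed is to \emph{iterate} hypotheses~1--2 along a guarded string $g=\brks{b_1,\us_1,\ldots,b_n,\us_n,b_{n+1}}$: apply $\partial_{b_1,\us_1},\ldots,\partial_{b_n,\us_n}$ in sequence (each step keeps the pair inside $\Bis$) and finish with $\partial_{b_{n+1},\fs}^k$; hypothesis~3 is likewise iterated to match the $X$-outputs accumulated along the way. The paper carries this out explicitly: it writes $\out(t)$ and $\out(s)$ in the form $\sum_i g_i.\fs_i(\ldots) + \sum_j g_j.x_j$, stratifies the summands by the length $|g_i|$, and establishes the bisimulation span (the surjections $e,e'$ witnessing matching summands) by induction on that length. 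Your ``aggregation'' concern is a secondary symptom of the same mismatch; once you see that $\out$ exposes full $\Gamma^\Theta$-prefixes, the missing ingredient is precisely this induction on $|g|$.
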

\begin{proof}[Proof Sketch]
It suffices to show that $\Bis$ is a coalgebraic bisimulation. The claim is then a 
consequence of strong extensionality of the final coalgebra $T_\nu X$. Let us spell 
out what it means for $\Bis$ to be a coalgebraic bisimulation. Given $t$ and $t'$, such 
that $t\Bis* t'$, and assuming representations
\begin{align*}
t=&\;\sum_{i\in I} 	 g_i.\,\fs_i(t_{i,1},\ldots,t_{i,n_i}) + \sum_{i\in J} g_i.\,x_i,\\*
t'=&\;\sum_{i\in I'} g_i.\,\fs_i(t_{i,1},\ldots,t_{i,n_i}) + \sum_{i\in J'} g_i.\,x_i
\end{align*}
where the $g_i$ range over $\Gamma^\Theta$, the sums $\sum_{i\in J} g_i.\,x_i$
and $\sum_{i\in J'} g_i.\,x_i$ must be equal, and there must exist a set $K$ and 
surjections $e\c K\to I$, $e'\c K\to I'$, such that for every $k\in K$, 
$g_{e(k)} = g_{e'(k)}$, $\fs_{e(k)} = \fs_{e'(k)}$ and $t_{e(k),1}\Bis* t_{e'(k),1},
\ldots,t_{e(k),m}\Bis* t_{e'(k),m}$ where $m$ is the arity of~$\fs_{e(k)}$. This is indeed true 
for $\Bis$. The reason for it is that $t$ and $t'$ can be represented as
\begin{align*}
t=&\;\sum_{n\in\nat}\sum_{i\in I, |g_i|=n} 	 g_i.\,\fs_i(t_{i,1},\ldots,t_{i,n_i}) + \sum_{n\in\nat}\sum_{i\in J, |g_i|=n} g_i.\,x_i,\\*
t'=&\;\sum_{n\in\nat}\sum_{i\in I', |g_i|=n} g_i.\,\fs_i(t_{i,1}',\ldots,t_{i,n_i}') + \sum_{n\in\nat}\sum_{i\in J', |g_i|=n} g_i.\,x_i
\end{align*}
and we can derive the requisite properties for inner sums by induction on $n$
from the assumptions.\end{proof}

\subsection{Rational Trees}
In what follows, we identify every $n\in\nat$ with the set $\{0,\ldots,n-1\}$,
and select binary coproducts in $\Set$ so that $n\oplus m = \{0,\ldots,n-1\}\oplus \{0,\ldots,m-1\} = \{0,\ldots,n+m-1\} = n + m$.
The inclusion of $n$ to $m\geq n$ is then a coproduct injection, which we refer to 
as $\inj_{n}^m$. %
\begin{definition}[Prefinite, Flat, (Non-)Guarded, Rational, Definable]
\begin{enumerate}%
  \item The set of \emph{prefinite} elements of~$T_\nu X$ is defined by induction: 
$t\in T_\nu X$ of the form~\eqref{eq:gen-tree} is \emph{prefinite} if the involved 
sums contain finitely many distinct elements and all the ${t_i,t_{i,j}\in T_\nu X}$ 
are prefinite. 
  \item A prefinite $t\in T_\nu X$ of the form~\eqref{eq:gen-tree} is \emph{flat} if ${t_i,t_{i,j}\in X}$.
\item An element $t\in T_\nu X$ of the form~\eqref{eq:gen-tree} is \emph{guarded} if $K=\emptyset$.
\item An element $t\in T_\nu X$ of the form~\eqref{eq:gen-tree} is \emph{non-guarded} if $I\cup J=\emptyset$.
\item An element $t\in T_\nu X$ %
is \emph{rational} if $\Der(t)$ is finite and $t$ depends on a finite subset~of~$\Sigma\cup\Gamma$.
\end{enumerate}
\noindent A map $t\c Y\to T_\nu X$ is prefinite/flat/guarded/non-guarded if correspondingly 
for every $x\in X$, every $t(x)$ is prefinite/flat/guarded/non-guarded. Finally:
\begin{enumerate}\setcounter{enumi}{5} %
\item A map $t\c k\to T_\nu n$  (with $k,n\in\nat$) is \emph{definable} if for some $m\geq k$ 
there is flat guarded $s\c m\to T_\nu m$ and non-guarded $r\c m\to T_\nu n$, 
such that $t = \inj_k^m\dc s^\rstar\dc r^\klstar$.
\end{enumerate}
\end{definition}
Using \autoref{lem:D-comp}, one can show
\begin{lemma}\label{lem:fix-rat}
Sum, composition and Kleene iteration of rational maps are again rational.
\end{lemma}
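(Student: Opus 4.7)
My plan is to verify each closure property by analysing how the derivative operations $\partial_{b,\us}$, $\partial_{b,\fs}^k$ and the output $o$ propagate, exhibiting in each case a finite invariant superset of $\Der(\argument)$ of the resulting morphism. The finite-dependency condition on $\Sigma\cup\Gamma$ is preserved trivially by all three operations, so the only substantive issue is finiteness of the derivative set. For the sum, one reads off from the definitions that $\partial_{b,\us}(t\plus s)=\partial_{b,\us}(t)\plus\partial_{b,\us}(s)$, analogously for $\partial_{b,\fs}^k$, and $o(t\plus s)=o(t)\plus o(s)$, placing $\Der(t\plus s)$ inside the finite set of sums $\{u\plus v : u\in\Der(t)\cup\{\zero\},\, v\in\Der(s)\cup\{\zero\}\}$.

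For composition $t\dc s^\klstar$, I apply Lemma~\ref{lem:D-comp} directly to confine every derivative to the finite family of sums whose summands have one of the two shapes $u\dc s^\klstar$ with $u\in\Der(t)$, or $o(u')\dc v^\klstar$ with $u'\in\Der(t)$ and $v\in\Der(s)$. A second round of derivatives, using the auxiliary identity $\partial_?(o(u')\dc v^\klstar)=o(u')\dc\partial_?(v)^\klstar$ (itself a specialisation of Lemma~\ref{lem:D-comp}, since $o(u')$ is non-guarded and hence $\partial_?(o(u'))=\zero$ and $o(o(u'))=o(u')$), together with $o(u\dc s^\klstar)=o(u)\dc o(s)^\klstar$, keeps us inside this explicit finite set, so the composite is rational.

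For Kleene iteration, given a rational $p\c m\to T_\nu m$, the plan is to unfold the fixpoint equation for $p^\rstar$ furnished by Theorem~\ref{the:resmp} one step, in order to read off the one-step behaviour of $p^\rstar$. Inspection then shows that each $\partial_{b,\us}(p^\rstar)$ and $\partial_{b,\fs}^k(p^\rstar)$ is a finite sum of expressions of the form $u\dc(p^\rstar)^\klstar$ (with $u\in\Der(p)$), because the inner iterate $[\,p\dc\out,\zero\,]^\rstar$ lives in the simpler Kleene-Kozen category $\Set_{\CR}$ and, being computed there as a least fixpoint, generates only finitely many distinct one-step-reachable positions when $\Der(p)$ is finite. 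A further round of derivatives reduces, by the composition case already handled, to a finite invariant set of shapes built from $\Der(p)$, $(p^\rstar)^\klstar$ and output-shaped summands. The main obstacle is exactly this iteration case: I must rule out that the coinductive unfolding generates infinitely many syntactically distinct derivatives, which I plan to control by working up to coalgebraic bisimilarity via Lemma~\ref{lem:bisim}, collapsing syntactically different but derivative-equivalent expressions onto the finite state space spanned by $\Der(p)$.
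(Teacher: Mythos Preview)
Your treatment of sum and composition is correct and matches the paper's approach: build a finite set containing the morphism in question and closed under all $\partial_{b,\us}$ and $\partial_{b,\fs}^k$, using \autoref{lem:D-comp} for composition. (Your invariant for the sum, $\{u\plus v : u\in\Der(t)\cup\{\zero\},\, v\in\Der(s)\cup\{\zero\}\}$, is actually more careful than the paper's stated one.)

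The iteration case, however, contains a genuine error and an unnecessary detour. You claim that the inner iterate $[p\dc\out,\zero]^\rstar$ lives in ``the simpler Kleene-Kozen category $\Set_{\CR}$'' and is ``computed there as a least fixpoint''. Neither is true: \autoref{exa:ini} states explicitly that $(\Set_{\CR},\Set_{\CR})$ is a KiC that is \emph{not} Kleene-Kozen, precisely because $\rstar$-idempotence fails; the iteration $p^\rstar=\sum_{n\in\nat} p^n$ there is not a least fixpoint of $q\mapsto \id\plus q\dc p$ (else $\eta^\rstar=\eta$). So the finiteness argument you sketch for the inner iterate, relying on least-fixpoint reasoning, does not go through. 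The appeal to \autoref{lem:bisim} to ``collapse syntactically different but derivative-equivalent expressions'' is also beside the point: one never needs to quotient by bisimilarity here.

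The paper handles iteration exactly as it handles composition: a direct finite invariant. Concretely, for rational $t\c n\to T_\nu n$, the set
\[
P=\Bigl\{\,\sum_{t'\in\Der(t)} r_{t'}\dc (t')^\klstar\dc (t^\rstar)^\klstar \plus r \;\Bigm|\; r_{t'},\,r \text{ non-guarded in } T_\nu n\,\Bigr\}
\]
is finite (non-guarded elements of $T_\nu n$ are elements of $\CR(\hat\Theta\times n)$, and $\hat\Theta\times n$ is finite), contains $t^\rstar$ (via \axname{$\rstar$-Fix}: $t^\rstar=\eta\plus t\dc(t^\rstar)^\klstar$), and is closed under derivatives by the same \autoref{lem:D-comp} computation you already did for composition, combined with one unfolding of $t^\rstar$ by \axname{$\rstar$-Fix}. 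No recourse to \autoref{the:resmp} or bisimulation is needed; the argument is purely syntactic and parallel to the composition case.
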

The following property is a form of Kleene theorem, originally stating the equivalence of regular 
and recognizable languages~\cite{Sakarovitch09}. In our setting it is proven with the help of \autoref{lem:bisim}.\par
\begin{proposition}\label{pro:kleene}
Given $n,k\in\nat$, a map $n\to T_\nu k$ is rational iff it is definable.
\end{proposition}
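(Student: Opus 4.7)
The plan is to prove each direction separately, handling the easy direction by closure properties and the hard direction by a state-elimination construction justified by coalgebraic bisimilarity.

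For the direction \emph{definable implies rational}, I would proceed as follows. If $t = \inj_k^m\dc s^\rstar\dc r^\klstar$ with $s\c m\to T_\nu m$ flat guarded and $r\c m\to T_\nu n$ non-guarded, then $s$ is rational (since $m$ is finite, $s$ is flat, and $s$ refers to only finitely many symbols), and $r$ is likewise rational. The injection $\inj_k^m$ is trivially rational. Applying \autoref{lem:fix-rat} twice (to $s^\rstar$ and to the Kleisli composition with~$r$) and observing that rationality is closed under precomposition with an injection between finite sets, we conclude that $t$ is rational.

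For the converse, \emph{rational implies definable}, let $t\c k\to T_\nu n$ be rational. Form the finite set $D = \bigcup_{i<k}\Der(t(i))$; it is finite because each $\Der(t(i))$ is finite and there are finitely many summands. Enumerate $D = \{d_0,\ldots,d_{m-1}\}$ so that $d_i = t(i)$ for all $i<k$ (hence $m\geq k$). Each $d_j\in T_\nu m$ admits a representation as in~\eqref{eq:gen-tree}; split it into its guarded part and its non-guarded part $o(d_j)$. Define $s\c m\to T_\nu m$ by letting $s(j)$ be the flat guarded tree obtained from $d_j$ by replacing each subtree $d_{j'}\in D$ in its guarded summands with the index $j'\in m$; and define $r\c m\to T_\nu n$ by letting $r(j)=o(d_j)$, viewed as a non-guarded element of $T_\nu n$. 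Both $s$ and $r$ are finite in description, so we obtain the required data provided we can prove $t = \inj_k^m\dc s^\rstar\dc r^\klstar$.

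The main obstacle is precisely this final identity, and it is where \autoref{lem:bisim} and \autoref{lem:D-comp} are doing the heavy lifting. Consider the relation $\Bis$ on $T_\nu n$ given by
\begin{align*}
\Bis \;=\; \{(d_j,\, (s^\rstar\dc r^\klstar)(j)) : j<m\}
\end{align*}
closed under equality and reflexivity as needed. To verify that $\Bis$ satisfies the hypotheses of \autoref{lem:bisim}, I would first unfold $s^\rstar$ using the characterization of Kleene iteration supplied by \autoref{the:resmp} (or equivalently, by \axname{$\rstar$-Fix}), which expresses $s^\rstar$ as the unique solution of a guarded coalgebraic equation, so that its one-step derivative behavior at $j$ reproduces the flat guarded structure of $s(j)$, with subtree $j'$ replaced by $s^\rstar(j')$. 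Combining this with \autoref{lem:D-comp} applied to the composition with $r^\klstar$, and noting that $r$ is non-guarded (so it contributes only to $o(\cdot)$ and not to $\partial_{b,\us}$ or $\partial_{b,\fs}^k$), I would show that for every $j<m$, $b\in\hat\Theta$, $\us\in\Gamma$, $\fs\in\Sigma$ and coordinate $\ell$, the derivatives $\partial_{b,\us}$ and $\partial_{b,\fs}^\ell$ of the two sides lie in~$\Bis$, and $o$ of both sides equals $o(d_j) = r(j)$ substituted into the non-guarded positions. Strong extensionality of $T_\nu n$ as a final coalgebra, via \autoref{lem:bisim}, then yields $(s^\rstar\dc r^\klstar)(j) = d_j$ for all~$j$, and restricting to $j<k$ gives the desired equality $t = \inj_k^m\dc s^\rstar\dc r^\klstar$.
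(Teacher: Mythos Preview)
Your approach is essentially the same as the paper's: the easy direction via \autoref{lem:fix-rat}, and the hard direction by enumerating the finite set of derivatives, building a flat guarded $s$ and a non-guarded $r=o(\argument)$, and then invoking \autoref{lem:bisim} together with \autoref{lem:D-comp} and \axname{$\rstar$-Fix} to identify $t$ with $\inj_k^m\dc s^\rstar\dc r^\klstar$.

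One small technical point to tighten: you require the enumeration of $D$ to satisfy $d_i=t(i)$ for all $i<k$, but $t$ need not be injective on $k$, so distinct indices $i<k$ may hit the same element of the set $D$ and no such enumeration exists. The paper handles this by taking $m=\max\{k,|D|\}$, keeping the original (possibly repeating) list $t_0,\ldots,t_{k-1}$ as the first $k$ slots, and introducing an index-choice map $\hash\c D\to m$ to build $s$; the bisimulation argument then goes through unchanged. With this adjustment your sketch matches the paper's proof.
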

Let $\Frep=\Frep_{\Sigma,\Gamma,\Theta}$ be the (non-full) subcategory of  
$\Set_{\BBT_\nu}$, identified as follows:
\begin{itemize}
  \item the objects of $\Frep$ are positive natural numbers,
  \item the morphisms in $\Frep(n,k)$ are rational maps $f\c n\to T_\nu k$ 
  (equivalently: (co)tuples $[t_0,\ldots,t_{n-1}]$ of rational elements of~$T_\nu k$). 
\end{itemize}
Let the wide subcategory of tame morphisms $\overline{\Frep}$ consist of such 
tuples $[t_0,\ldots,t_{n-1}]$ that~${t_i\in Tk}$ for all $i$, and let $\Frepq$ consist of 
those maps in $\overline{\Frep}$ that do not involve symbols from~$\Gamma$.
This defines a KiCT essentially due to the closure properties from \autoref{lem:fix-rat}.

Given an interpretation $\sem{\argument}\c (\Sigma,\Gamma,\Theta)\to\BC$, let us 
extend it to flat elements first via
\begin{gather*}
\hspace{-.2em}\sem{\zero} = \zero,\quad~
\sem{t\plus s} = \sem{t}\plus \sem{s},\quad~
\sem{\eta^\rstar} = \sem{1}^\rstar,\quad~
\sem{t\dc s^\klstar} = \sem{t}\dc\sem{s},\quad~
\sem{k} = \inj_k~(k\in n)
\end{gather*}
where $\eta^\rstar$ stands for a tuple of infinite sum $[\sum_{i\in\nat} \{\}.0,\ldots, \sum_{i\in\nat} \{\}.(n-1)]$,
and is the interpretation of $\eta^\star$ in $\Frep$. The clause for $\eta^\rstar$ 
in necessary to cater for infinite sums that can occur in prefinite elements. Such sums 
can only contain finitely many distinct elements, and thus can be expressed
via finite sums and composition with $\eta^\rstar$. Next, define ${\sem{\argument}_\shortuparrow\c \Frep\to\BC}$
\begin{itemize}
  \item on objects via $\sem{n}_\shortuparrow = \sem{1}\cpp\ldots\cpp \sem{1}$ ($\sem{1}$ repeated $n$ times),
  \item on morphisms, via $\sem{t}_\shortuparrow= \inj_{n,m}\dc \sem{s}^\rstar\dc \sem{r}$, where 
  $t=\inj_{n,m}\dc t^\rstar\dc r^\klstar$, for a guarded flat $s\c m\to T_\nu m$, and a non-guarded $r\c m\to T_\nu k$, 
  computed with \autoref{pro:kleene}.
\end{itemize}
\begin{theorem}\label{the:free}
$\Frep_{\Sigma,\Gamma,\Theta}$ is a free KiCT over $\Sigma,\Gamma,\Theta$.
\end{theorem}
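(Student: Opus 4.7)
The plan is to verify separately existence and uniqueness of the extension $\sem{\argument}_\shortuparrow$, leaning on the rational-map characterisation from \autoref{pro:kleene} to reduce everything to the flat case handled by the inductive clauses.

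First, I would discharge the standing claim that $(\Frep,\overline{\Frep},\Frepq)$ is itself a KiCT. Objectwise this is obvious, and the structural operations are available in the ambient KiC $(\Set_{\BBT_\nu},\Set_{\BBT})$ of \autoref{the:resmp}; \autoref{lem:fix-rat} then guarantees that rational maps are closed under $\zero$, $\plus$, composition, copairing, and $\rstar$, while closure under Kleene iteration of tame maps also falls within $\overline{\Frep}$ because flat guarded iteration of non-$\Gamma$-morphisms stays non-$\Gamma$. The KiCT axioms are inherited from the ambient category.

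Second, and this is the delicate part, I would verify that the assignment on morphisms is well defined, i.e.\ independent of the concrete decomposition $t=\inj_k^m\dc s^\rstar\dc r^\klstar$ produced by \autoref{pro:kleene}. Given two such decompositions $(m,s,r)$ and $(m',s',r')$ of the same rational $t$, I would form the combined system on $m\cpp m'$ and apply \axname{$\rstar$-Uni} with the canonical tame morphism $[\inj_m^{m+m'},\inj_{m'}^{m+m'}]$ to equate both interpretations inside $\BC$; crucially, flatness of $s$, $s'$ guarantees that the combined system lies in $\Tame$, while non-guardedness of $r$, $r'$ ensures that the Kleisli star unfolds as in \autoref{lem:D-comp} without any star occurring inside. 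This is the main obstacle, because one must also invoke \axname{$\rstar$-Sum} and \axname{$\rstar$-Fix} to fold the combined decomposition back into the two given ones.

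Third, for functoriality, preservation of identity, coproducts, $\zero$ and $\plus$ follows directly from the clauses of the extension. Preservation of composition $\sem{t_1\dc t_2}_\shortuparrow = \sem{t_1}_\shortuparrow\dc\sem{t_2}_\shortuparrow$ is obtained by computing a common decomposition of $t_1\dc t_2$ via \autoref{lem:D-comp} and then collapsing it using \axname{$\rstar$-Sum}. Preservation of Kleene iteration $\sem{t^\rstar}_\shortuparrow = \sem{t}_\shortuparrow^\rstar$ is similar: read off a decomposition of $t^\rstar$ from one of $t$ (merging the non-guarded exits of $r$ with the entry point of $s$) and apply \axname{$\rstar$-Fix} together with the already-established well-definedness. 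Preservation of tameness and of tests is immediate from the definitions of $\overline{\Frep}$ and $\Frepq$ together with the requirement that $\sem{\argument}$ sends $\Gamma$ to $\Tame$ and $\Theta$ to $\Tst$. Commutativity of \eqref{eq:free-cpp} is then just the base clauses on generators, where the Kleene decomposition is trivial.

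Finally, uniqueness: any KiCT-functor $F$ satisfying \eqref{eq:free-cpp} is forced to agree with $\sem{\argument}$ on generators and with the KiCT operations on everything built from them; by \autoref{pro:kleene} every rational morphism is definable, i.e.\ constructible from generators via $\plus$, $\zero$, $\dc$, copairing, and $\rstar$, so $F$ coincides with $\sem{\argument}_\shortuparrow$ on all of $\Frep$. The one subtle point here is that ``definable'' in the sense of the paper uses precisely the same operations a KiCT-functor is obliged to preserve, which is exactly why the Kleene-theorem form of \autoref{pro:kleene} is the right tool for completing the uniqueness argument.
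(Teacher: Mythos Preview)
Your overall architecture (well-definedness of $\sem{\argument}_\shortuparrow$, then functoriality, then uniqueness via \autoref{pro:kleene}) matches the paper's, but the two central steps diverge in substance, and your well-definedness argument has a real gap.

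\textbf{Well-definedness.} Your ``combined system on $m\cpp m'$'' argument is not workable as written. The morphism you propose, $[\inj_m^{m+m'},\inj_{m'}^{m+m'}]$, is just the identity on $m+m'$, so it carries no information relating the two decompositions; and your claim that ``flatness of $s,s'$ guarantees that the combined system lies in $\Tame$'' is false: flat guarded $s$ may freely use $\Sigma$-symbols, whereas tame morphisms in $\overline{\Frep}$ are exactly those landing in $Tk$, i.e.\ those avoiding $\Sigma$ entirely. The paper avoids this by never comparing two arbitrary decompositions directly. Instead it fixes the \emph{canonical} decomposition $t=\inj_{n,l}\dc\hat s^\rstar\dc\hat r^\klstar$ produced by the derivative construction in the proof of \autoref{pro:kleene}, observes that for any other decomposition $t=\inj_{n,m}\dc s^\rstar\dc r^\klstar$ there is a (surjective, pure-$\Set$) map $u\c m\to l$ with $s\dc T_\nu u=u\dc\hat s$ and $u\dc\hat r=r$, and applies \axname{$\rstar$-Uni} once with this genuinely tame $u$. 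The point you are missing is that the comparison morphism comes from the minimality of the derivative-based representation, not from juxtaposing the two systems.

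\textbf{Functoriality.} Here your plan is plausible but heavier than the paper's. Rather than producing explicit decompositions of $t_1\dc t_2$ and $t^\rstar$ and collapsing them with \axname{$\rstar$-Sum}/\axname{$\rstar$-Fix}, the paper invokes \autoref{lem:funct}: once you know $\sem{\argument}_\shortuparrow$ preserves coproducts, $(\argument)^\rstar$, postcomposition with $\inl$, $\inr$, and $[\zero,\id]$, preservation of composition and of $\plus$ follow formally (composition is recovered from a three-state flat system via \axname{$\rstar$-Fix}, and $p\plus q$ from $\inl\dc[\inr,\zero]^\rstar\dc[p,q]$). This leaves only preservation of $(\argument)^\rstar$ as genuinely nontrivial; the paper handles it by first proving the auxiliary identity $\sem{\eta\plus t}_\shortuparrow=\id\plus\sem{t}_\shortuparrow$ for definable $t$, then exhibiting a concrete KiC-derivable equation expressing $(\inj_{n,m}\dc s^\rstar\dc r^\klstar)^\rstar$ in the required shape. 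Your direct route would have to reprove large parts of \autoref{lem:funct} inside each case.

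Your uniqueness paragraph is essentially the paper's.
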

The following property is instrumental for proving this result:
\begin{lemma}\label{lem:funct}
Let $(\BC,\Tame)$ and $(\BD,\Tame[\BD])$ be two KiCs, and let $F$ be the following 
map, acting on objects and on morphisms: $FX\in |\BD|$ for every $X\in |\BC|$, 
$Fp\in\BD(FX,FY)$ for every $p\in\BC(X,Y)$. Suppose that $F$ preserves coproducts, $Fp\in\Tame[\BD](FX,FY)$ for all $p\in\Tame(X,Y)$.
$F$ is a KiC-functor if the following further preservation properties hold
\begin{align*}
Fp^\rstar = (Fp)^\rstar,\quad F(p\dc\inl) = Fp\dc\inl,\quad F(p\dc\inr) = Fp\dc\inr,\quad F(p\dc [\zero,\id]) = Fp\dc [\zero,\id].
\end{align*}
\end{lemma}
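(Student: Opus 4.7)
The plan is to verify, one by one, each defining property of a KiC-functor — functoriality (preservation of identities and composition) together with preservation of $\zero$, $\plus$, $\rstar$, coproducts, and tameness — from the four stated preservation conditions and the consequences of strict coproduct preservation, in particular $F\inl=\inl$, $F\inr=\inr$, and $F[p,q]=[Fp,Fq]$.

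I would begin with identity preservation. Setting $p=\id$ in $F(p\dc\inl)=Fp\dc\inl$ and using $F\inl=\inl$ gives $F\id\dc\inl=\inl$ in $\BD$. Post-composing with the splitting $[\id,\zero]$ of $\inl$ and invoking $\inl\dc[\id,\zero]=\id$ yields $F\id=\id$. Next, for zero preservation, specializing the third preservation to $p=\inl$ gives $F(\inl\dc[\zero,\id])=F\inl\dc[\zero,\id]=\inl\dc[\zero,\id]=\zero$, and since $\inl\dc[\zero,\id]=\zero\c X\to Y$ already holds in $\BC$, this forces $F\zero=\zero$ on every hom.

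For functoriality $F(p\dc q)=Fp\dc Fq$, the strategy is to decompose a general composition into the three preserved ones ($\dc\inl$, $\dc\inr$, $\dc[\zero,\id]$) together with iteration, copairings, and sums. Every morphism $q$ into a coproduct $X\cpp Y$ has the form $[q_1,q_2]$, and by \autoref{pro:copr} we have $[q_1,q_2]=[\id,\zero]\dc q_1\plus[\zero,\id]\dc q_2$; precomposition with $p$ then propagates through using right distributivity. An analogous analysis on the other side, combined with $F(p^\rstar)=(Fp)^\rstar$ for iteration, reduces $F(p\dc q)$ to the preserved building blocks. Once functoriality is in hand, $F(p\plus q)=Fp\plus Fq$ follows from the identity $p\plus q=(\inl\plus\inr)\dc[p,q]$ (obtained from $\inl\dc[p,q]=p$, $\inr\dc[p,q]=q$ and right distributivity) together with coproduct preservation.

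The main obstacle is making the functoriality reduction work for general, possibly non-linear morphisms: the decomposition of $p$ through $[\id,\zero]$ and $[\zero,\id]$ as a sum of compositions with $\inl$ and $\inr$ relies on right distributivity, which requires linearity. In the intended application (proving \autoref{the:free} with $\BC=\Frep$), this is circumvented because every morphism of $\Frep$ admits the canonical form $\inj\dc s^\rstar\dc r^\klstar$ from \autoref{pro:kleene}, providing a structural induction whose base cases and inductive step are all covered by the four stated preservations together with the coproduct structure.
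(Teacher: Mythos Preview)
Your treatment of identity and zero preservation is fine, and your reduction of sums to compositions via $p\plus q=(\inl\plus\inr)\dc[p,q]$ matches the paper's. The genuine gap is in functoriality: you yourself flag that your decomposition of a general composite through $[\id,\zero]$ and $[\zero,\id]$ needs left distributivity, and then retreat to a structural induction that is only available in the particular case $\BC=\Frep$. That does not prove the lemma as stated. The vague sentence ``an analogous analysis on the other side, combined with $F(p^\rstar)=(Fp)^\rstar$, reduces $F(p\dc q)$ to the preserved building blocks'' does not become a proof, because you never exhibit such a reduction for arbitrary $p$ and $q$.

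The paper's key idea, which you are missing, is that an arbitrary composite $p\dc q$ can be \emph{encoded} using only the four preserved operations by means of Kleene iteration acting as a three-state machine:
\[
p\dc q \;=\; \inll\dc\bigl[[p\dc\inrl,\;q\dc\inr],\,\zero\bigr]^{\rstar}\dc[\zero,\id].
\]
Unfolding \axname{$\rstar$-Fix} three times on the right-hand side and simplifying with the coproduct laws yields $p\dc q$; crucially, each unfolding produces a sum one of whose summands is $\zero$, so no left distributivity over $p$ or $q$ is ever required. Since the right-hand side is literally built from post-composition with $\inl$, $\inr$, $[\zero,\id]$, iteration, and copairings (all preserved), applying $F$ commutes with each constructor, giving $F(p\dc q)=Fp\dc Fq$ in full generality. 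The same trick, in simpler form, is used for sums: $\inl\plus\inr=\inl\dc[\inr,\zero]^\rstar$, so that $F(\inl\plus\inr)=\inl\plus\inr$ follows without assuming sum preservation.
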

Let us briefly outline a potential application of \autoref{the:free} to may-diverge
Kleene algebras, which we informally described in the introduction. Let us now define them
formally:
\begin{definition}[May-Diverge Kleene Algebra]
A \emph{may-diverge Kleene algebra} is an idempotent semiring $(S\comma\zero\comma\id\comma\plus\comma\,\dc)$ 
equipped with an iteration operator $(\argument)^\rstar\c S\to S$ satisfying the laws: 
\begin{gather*}
p^\rstar = \id \plus p\dc p^\rstar
\qquad
(p \plus  q)^\rstar =  p^\rstar\dc (q\dc p^\rstar)^\rstar
\qquad
\vcenter{\infer{r\dc p^\rstar = q^\rstar\dc r}{r\dc p = q\dc r}}
\end{gather*}
\end{definition}
Thus, may-diverge Kleene algebras are very close to KiCs of the form $(\BC,\BC)$ with $|\BC|=1$,
except that in our present treatment all KiCs come with binary coproducts as an 
additional structure. We conjecture though that any may-diverge Kleene algebra,
viewed as a category, can be embedded to a KiC $(\BC,\BC)$ with $|\BC|=\{1,2,\ldots\}$. \autoref{the:free}
will then entail a characterization of the free may-diverge Kleene algebra on $\Gamma$ 
as the full subcategory induced by the single object $1$ of the Kleisli category of the monad $TX=\CR (\Gamma^\star\times X)$. In other words, the free may-diverge Kleene algebra is carried (up-to-isomorphism) 
by rational elements of $\CR(\Gamma^\star)$, similarly to that how the free Kleene 
algebra is carried by rational elements of $\PSet(\Gamma^\star)$.

\section{Conclusions and Further Work}\label{sec:conc}
We developed a general and robust categorical notion of Kleene iteration -- 
KiC(T) (=Kleene-iteration category (with tests)) -- inspired 
by Kleene algebra (with tests) and its numerous cousins. We attested this 
notion with various yardsticks: stability under the generalized 
coalgebraic resumption monad transformer (hence under the exception transformer,
as its degenerate case), equivalence to the classical notion of Conway 
iteration and to a suitably axiomatized theory of while-loops, but most remarkably, we 
established an explicit description of the ensuing free model, as a category of certain 
nondeterministic rational trees, playing the same role for our theory as the 
algebra of regular events for Kleene algebra. However, in our case, the free model is 
much more intricate and difficult to construct, as the iteration operator of it
is neither a least fixpoint nor a unique fixpoint. A salient feature of our notion, mirrored 
in the structure of the free model, is that it can mediate between linear time and 
branching time semantics via corresponding specified classes 
of morphisms.

Given the abstract nature of our results, we expect them be be reusable for varying 
and enriching the core notion of Kleene iteration with other features. For example,
our underlying notion of nondeterminism is that of idempotent 
grove category. General grove categories are a natural base for probabilistic or graded semantics,
and we expect that most of our results, including completeness can be adapted to this case.
Yet more generally, a relevant ingredient of our construction is monad $\CR$, currently capturing the effect 
of nondeterminism, but which can potentially be varied to obtain other flavors 
of linear behavior.

An important open problem that remains for future work is that of defining KiCTs without
coproducts, potentially providing a bridge to relevant algebraic structures as single-object 
categories. Now that the free KiCT with coproduct is identified, the free KiCT 
without coproducts is expected to be complete over the same model. Identifying such a notion 
is hard, because it would simultaneously encompass independent axiomatizations of iterative behavior,
e.g.\ branching time and linear time. As of now, such axiomatizations 
are built on hard-to-reconcile approaches to iteration as either a least or 
a unique fixpoint.

\clearpage

\bibliographystyle{plain}
\bibliography{monads}

\clearpage
\appendix

\section*{Appendix: Selected Proof Details}
\addcontentsline{toc}{section}{Appendices}
\renewcommand{\thesubsection}{\Roman{subsection}.}

\pacman{
\section{Case Study: May-Diverge Kleene Algebra} 
Recall may-diverge Kleene algebras from the introduction. We now can define them
precisely. \par
\begin{definition}[May-Diverge Kleene algebra]
A \emph{may-diverge Kleene algebra} is a set, equipped with the following operations: 
$(\zero/0,\id/0,\dc/2,\plus/2,\rstar/1)$, where $op/i$ indicates that the arity of 
$op$ is $i$, such that the following properties are satisfied: 
\begin{gather}
\quad p\plus p = p \qquad
p\plus q = q\plus p  \qquad
p\plus \zero = p  \qquad
p\plus (q\plus r) = (p\plus q)\plus r\notag
\\[1ex] 
\quad p\dc\id = p  \qquad
\id\dc p = p  \qquad
p\dc (q\dc r) = (p\dc q)\dc r\qquad
\zero\dc p = \zero  \qquad
p\dc\zero = \zero  \qquad\notag
\\[1ex] 
\quad p\dc (q\plus r) = p\dc q\plus q\dc r 
\qquad
(p\plus q)\dc r = p\dc r\plus q\dc r\notag
\\[3ex]
\quad \label{eq:md-kleene}
p^\rstar = \id \plus p\dc p^\rstar
\qquad
\qquad
(p \plus  q)^\rstar = p^\rstar\dc (q\dc p^\rstar)^\rstar
\qquad
\qquad
\raisebox{-.3\baselineskip}{$\infer{r\dc p^\rstar = q^\rstar\dc r}{r\dc p = q\dc r}$}
\end{gather}
\end{definition}
More generally, let us define 
\begin{definition}[May-Diverge Kleene-Kozen Category]\label{def:may-div}
A may-diverge Kleene-Kozen category is an idempotent grove category $\BC$ with all 
morphisms linear, and equipped with 
a Kleene iteration operator 
\begin{displaymath}
  (\argument)^\rstar\c\BC(X,X)\to\BC(X,X),
\end{displaymath}
that satisfies~\eqref{eq:md-kleene}.
\end{definition}
Obviously, a may-diverge Kleene algebra is a may-diverge Kleene-Kozen category 
on one object. We intend to use the above definition as an illustration to our previous developments. 
One may wonder, if the proposed axiomatization is a right one. A justification 
can be an explicit description of the free model, which would appear to be 
convincing. We will show how such a description flows from \autoref{the:free}.

Let $\BC$ be an idempotent grove category in which all morphisms are linear. We 
introduce a category $\Mtx(\BC)$ as follows:
\begin{itemize}
  \item the objects of $\Mtx(\BC)$ are non-empty lists $\brks{A_1,\ldots,A_n}$ of objects of $\BC$;
  \item a morphism $p\c\brks{A_1,\ldots,A_n}\to\brks{B_1,\ldots,B_m}$ in $\Mtx(\BC)$ is given by a family $\brks{p_{i,j}\c A_i\to B_j}_{i\leq n,j\leq m}$
  of morphisms in $\BC$;
  \item the identity morphism over $\brks{A_1,\ldots,A_n}$ is the family $\brks{\delta_{i,j}}_{i,j\leq n}$
  where ${\delta_{i,i} = \id}$ and $\delta_{i,j} = \zero$ if $i\neq j$;
  \item morphism composition in $\Mtx(\BC)$ is calculated with the same formula as
  matrix composition: given $p\c\brks{A_1,\ldots,A_n}\to\brks{B_1,\ldots,B_m}$ and 
  $q\c\brks{B_1,\ldots,B_m}\to\brks{C_1,\ldots,C_k}$, 
  \begin{align*}
  (p\dc q)_{i,j} = p_{i,1}\dc q_{1,j}\plus\ldots\plus p_{i,m}\dc q_{m,j}.
  \end{align*}
\end{itemize} 
The following properties are easy to verify:
\begin{proposition}
Given an idempotent grove category $\BC$ with all morphisms linear,
\begin{enumerate}
  \item $\Mtx(\BC)$ is an idempotent grove category too with $\zero$ and $\plus$
  defined pointwise;
  \item $\Mtx(\BC)$ has strict binary coproducts: $\brks{A_1,\ldots,A_n}\cpp \brks{B_1,\ldots,B_m} = \brks{A_1,\ldots,A_n,B_1,\ldots,B_m}$ with coproduct injections, defined 
  as follows: 
  \begin{align*}
  \inl\c&\brks{A_1,\ldots,A_n}\to\brks{A_1,\ldots,A_n,B_1,\ldots,B_m},\\
  \inr\c&\brks{B_1,\ldots,B_m}\to\brks{A_1,\ldots,A_n,B_1,\ldots,B_m}
  \end{align*}
  are as follows:
  $\inl_{i,j} = \id$, $\inr_{i,j} = \zero$ if $j\leq n$ and $\inl_{i,j} = \zero$, $\inr_{i,j} = \id$
  otherwise.
  \item $\BC$ is a full subcategory of $\Mtx(\BC)$ under the map that sends 
  objects of $\BC$ to singleton lists.
  \item If $\BC$ has binary coproducts with coproduct injections, these coproducts 
  are isomorphic to those in $\Mtx(\BC)$ under $\brks{\inl,\inr}\c \brks{A,B}\to \brks{A\cpp B}$.
\end{enumerate}
\end{proposition}
\paragraph{Free grove categories} We next dwell on free 
grove categories with binary coproducts. To that end, we~fix 
\begin{itemize}
  \item a set of \emph{sorts} $\CS$; 
  \item a signature $\Sigma$ of symbols, together with the corresponding types
  of the form $A\to\brks{B_1,\ldots,B_n}$ where $A,B_1,\ldots,B_n\in\CS$ and $n>0$.
\end{itemize}
We can interpret this data in any grove category $\BC$ with coproducts
by assigning 
\begin{itemize}
  \item an object $\sem{A}$ of $\BC$ to every $A\in\CS$;
  \item a morphism $\sem{f}\c\sem{A}\to\sem{B_1}\cpp\ldots\cpp\sem{B_n}$ to every 
  $f\c A\to\brks{B_1,\ldots,B_n}$ from $\BC$, where we assume $\cpp$ to be right-associative.
\end{itemize}
This induces a notion of \emph{free} grove category with coproducts $\Frep(\CS,\Sigma)$ 
w.r.t.\ $(\CS,\Sigma)$ as a corresponding category together with an interpretation of $(\CS,\Sigma)$ in $\Frep(\CS,\Sigma)$
that for any other interpretation of $(\CS,\Sigma)$ over a nondeterministic 
category with coproducts $\BD$, there is unique compatible functor from $\Frep(\CS,\Sigma)$
to $\BD$. More formally, for any interpretation $\sem{\argument}_{\BD}$ of~$(\CS,\Sigma)$ 
in an idempotent grove category with binary coproducts $\BD$, there is unique 
functor $\sem{\argument}_\shortuparrow\c \Frep(\CS,\Sigma)\to\BD$, preserving 
binary coproducts, $\zero$ and~$\plus$, such that the diagram 
\begin{equation}\label{eq:free-nd}
\begin{tikzcd}[column sep=4em, row sep=normal]
\Frep(\CS,\Sigma)\rar["\sem{\argument}_\shortuparrow"] & \BD\\
(\CS,\Sigma)\uar["\sem{\argument}"]\ar[ur,"\sem{\argument}_{\BD}"']           
\end{tikzcd}
\end{equation}
commutes. We proceed to develop an explicit construction of $\Frep(\CS,\Sigma)$.
Finally, we proceed with a characterization of a free grove category 
without binary coproducts. In this setting, we restrict type profiles of 
symbols in $\Sigma$ to $A\to B$ with $A,B\in\CS$.
\begin{theorem}\label{thm:free-nd-ncp}
The full subcategory $\Fre_\Sigma$ of $\Frep(\CS,\Sigma)$, obtained by restricting to the 
singleton objects is a free grove category on $(\CS,\Sigma)$.
\end{theorem}
\begin{proof}[Sketch]
For any grove category $\BD$, which interprets $(\CS,\Sigma)$ by $\sem{\argument}_{\BD}$, 
we obtain $\sem{\argument}_{\Mtx(\BD)}$ by composition with the inclusion $\BD\ito\Mtx(\BD)$,
which induces $\sem{\argument}_{\Mtx(\BD)}^\sharp\c\Frep(\CS,\Sigma)\to\Mtx(\BD)$. The composition
of $\sem{\argument}_{\Mtx(\BD)}^\sharp$ with the inclusion of $\Fre_\Sigma$ to $\Frep(\CS,\Sigma)$ 
factors though the inclusion of $\BD$ to $\Mtx(\BD)$. This induces the requisite 
functor from $\Fre_\Sigma$ to $\BD$:
\begin{equation}
\makebox[\displaywidth]{%
\begin{tikzcd}[row sep=normal, row sep=normal, ampersand replacement=\&]
\& \Frep(\CS,\Sigma)\ar[rr,"\sem{\argument}_{\Mtx(\BD)}^\sharp"] \& \& \Mtx(\BD)\\
\& \Fre_\Sigma\uar[hook]\rar[dashed]  \& \BD\ar[ur,hook]\\
\& (\CS,\Sigma)\uar["\sem{\argument}"]\ar[ur,"\sem{\argument}_{\BD}"']  \& 
\end{tikzcd}
} \tag*{\qed}
\end{equation}
\noqed
\end{proof}
The key fact that enables the proof of \autoref{thm:free-nd-ncp} is that $\Mtx(\BD)$
faithfully extends a given grove category to an idempotent grove category
with binary coproducts.

}

\subsection{Proof of \autoref{pro:if_as_join}}
Suppose that $d$ is linear. Then
\begin{flalign*}
&& \ifd{d}{p}{q}  &\;= d\dc [q,p]&\\*
&&  &\;= d\dc ([q,\zero]\plus [\zero,p]) \\
&&  &\;= d\dc[q,\zero]\plus d\dc[\zero,p] \\
&&  &\;= d\dc [\id,\zero]\dc q\plus d\dc[\zero,\id]\dc p\\*
&&  &\;= \ift{d?}{p}{q}.&\text{\qed}
\end{flalign*}

\subsection{Proof of \autoref{pro:test_char}}
The necessity is obvious. Let us show sufficiency. For every $b\in\Tst(X)$,
let us fix some choice of $\bar b\in\Tst(X)$, for which the pair $(b,\bar b)$ satisfies contradiction and excluded middle, 
and show that $\Tst(X)$ forms a Boolean algebra, i.e.~$\Tst(X)$
is a complemented distributive lattice. Complementation precisely amounts to the
assumed identities, and we are left to show the laws of distributive lattices. 
Since complements are uniquely defined in Boolean algebras, it will follow that 
$\bar b$ is uniquely determined by $b$. Of course, this is not used in the subsequent proof.

It follows 
by definition that $(\Tst(X),\zero,\plus)$ and $(\Tst(X),\id,\dc)$ are monoids.
Showing that they are idempotent and commutative (hence, are semilattices) amounts to showing that 
$b\dc b=b$ and  $b\dc c = c\dc b$ for all $b, c\in\Tst(X)$. The first identity is shown
as follows, using linearity and the assumed identities: 
\begin{align*}
b 
=&\; b\dc\id\\
=&\; b\dc (b\plus \bar b)\\
=&\; b\dc b\plus b\dc \bar b\\
=&\; b\dc b\plus\zero\\
=&\; b\dc b.
\intertext{
For the second one, note that $1 = b\plus\bar b = b\plus b\plus\bar b = b\plus 1$,
and then
}
b\dc c 
=&\; b\dc c\dc \id\\*
=&\; b\dc c\dc (b\plus\id)\\
=&\; b\dc c\dc b\plus b\dc c\\
=&\; b\dc c\dc b\plus b\dc c\dc b\dc c\\
=&\; b\dc c\dc b\dc(\id\plus c)\\
=&\; b\dc c\dc b,
\end{align*}
where we used the instance of idempotence $b\dc c = b\dc c\dc b\dc c$ that we just 
established. 
Analogously, $c\dc b = b\dc c\dc b$, and hence $b\dc c = c\dc b$. 

Finally, 
distributivity amounts to $(a\plus b)\dc c = a\dc c\plus b\dc c$ and 
$a\plus b\dc c = (a\plus b)\dc (a\plus c)$ for all $a,b,c\in\Tst(X)$.
The first identity is an axiom of idempotent grove categories. The second one is obtained as 
follows:
\begin{align*}
(a\plus b)\dc (a\plus c) 
=&\; a\dc a\plus a\dc c\plus b\dc a\plus b\dc c\qquad\\*
=&\; a\plus a\dc c\plus a\dc b\plus b\dc c\\
=&\; a\dc(\id\plus c\plus b)\plus b\dc c\\
=&\; a\dc\id\plus b\dc c\\
=&\; a\plus b\dc c\tag*{\qed}
\end{align*}

\subsection{Proof of \autoref{lem:exp}}
Negation is defined as follows: 
\begin{align*}
\bar\zero	=&\;\id,&\overline{[\inl,\zero]}=&\;[\zero,\inr],&\overline{b\plus c} =&\; \bar b\dc\bar c,\\
\bar\id		=&\;\zero,& \overline{[\zero,\inr]}=&\;[\inl,\zero],&\overline{b\dc c} =&\; \bar b\plus \bar c. 
\end{align*}
For every $X$, let $\Tst(X)$ be the smallest subset of $\BC(X,X)$ that 
contains $\zero$, $\id$, $[\inl,\zero]$ and $[\zero,\inr]$, and closed under $\plus$ and $\!\dc$. 
By \autoref{pro:test_char}, we need to show that 
every $b\in\Tst(X)$ is linear and satisfies $b\dc \bar b = \zero$, 
$b\plus \bar b = \id$, which we do by induction. Let us strengthen the induction invariant 
by also adding $\bar b\dc b = \zero$. Note that the above equations do not uniquely 
define complement, e.g.\ $\overline{[\inl,\zero]}=\;[\zero,\inr]$ refers to a 
particular decomposition of~$X$ as $X_1\cpp X_2$, while another decomposition 
could theoretically produce a different result. Thus, more precisely, we use the 
fact that every element of $\Tst(X)$ has a representation in the free algebra of 
terms over $\zero$, $\id$, $[\inl,\zero]$, $[\zero,\inr]$, $\plus$ and $\dc$, and 
we run induction over this representation.

\emph{Induction Base.} If $b=\zero$ or $b=\id$, the verification is trivial. 
Consider $b=[\inl,\zero]$. Then 
\begin{align*}
b\dc\zero 
=&\; [\inl,\zero]\dc\zero\\*
=&\; [\inl\dc\zero,\zero]\\
=&\;\zero,\\[1ex]
b\dc (p\plus q) 
=&\; [\inl,\zero]\dc (p\plus q)\\*
=&\; [\inl\dc p\plus\inl\dc q,\zero]\\ 
=&\; [\inl\dc p,\zero]\plus [\inl\dc q,\zero]\\
=&\; b\dc p\plus b\dc q,
\intertext{for any suitably 
typed~$p$ and $q$; analogously $\bar b\dc\zero=\zero$, $\bar b\dc (p\plus q) = \bar b\dc p\plus \bar b\dc q$. Moreover,}
b\dc \bar b =&\; [\inl,\zero]\dc [\zero,\inr]\\
 =&\; [\inl\dc [\zero,\inr],\zero\dc [\zero,\inr]]\\
 =&\; \zero,\\[1ex]
b\plus \bar b
 =&\; [\inl,\zero]\plus [\zero,\inr] \\
 =&\; [\inl\plus\zero,\zero\plus\inr] \\
 =&\; \id.
\end{align*}
The identity $\bar b\dc b = \zero$ is shown analogously to $b\dc \bar b=\zero$.

\emph{Induction Step.} 
Suppose that the induction hypothesis holds for some ${b,c\in\Tst(X)}$ 
and show that $b\dc c$, $b\plus c$ are both linear and satisfy equations:
\begin{align}
b\dc c\dc (\bar b\plus\bar c) 	=&\; \zero,& (\bar b\plus\bar c)\dc b\dc c  =&\; \zero,& b\dc c\plus (\bar b\plus\bar c) =&\; \id,\label{eq:exp-tests1}\\*
(b\plus c)\dc \bar b\dc \bar c  =&\; \zero,& \bar b\dc \bar c\dc (b\plus c) =&\; \zero,& (b\plus c)\plus \bar b\dc \bar c =&\; \id.\label{eq:exp-tests2}
\end{align}
\begin{align*}
\intertext{\emph{Linearity of $(b\dc c)$:}} 
b\dc c\dc\zero 
&\;= b\dc\zero\\*
&\;= \zero,
\\[1ex]
b\dc c\dc (p\plus q) 
&\;= b\dc (b\dc p\plus c\dc q)\\
&\;= b\dc c\dc p\plus b\dc b\dc q.
\intertext{\emph{Linearity of $(b\plus c)$:}}
(b\plus c)\dc\zero 
&\;= b\dc\zero\plus c\dc\zero\\
&\;= \zero,
\\[1ex]
(b\plus c)\dc (p\plus q)
&\;= b\dc (p\plus q)\plus c\dc (p\plus q)\\
&\;= (b\dc p\plus b\dc q)\plus (c\dc p\plus c\dc q)\\
&\;= (b\plus c)\dc p\plus (b\plus c)\dc q.
\intertext{Finally, we prove two equations from the list~\eqref{eq:exp-tests1}--\eqref{eq:exp-tests2} --
the rest is analogous. Note that $c\plus\id = c\plus c\plus\bar c = c\plus\bar c=\id$,}
b\dc c\dc (\bar b\plus \bar c) 
&\;= b\dc c\dc \bar b\plus b\dc c\dc \bar c\\
&\;= b\dc c\dc \bar b\plus b\dc\zero \\
&\;= b\dc c\dc \bar b\\
&\;= b\dc c\dc \bar b \plus b\dc \bar b\\
&\;= b\dc (c\plus\id)\dc \bar b \\
&\;= b\dc \id\dc \bar b \\
&\;= \zero.\\[1ex]
b\dc c\plus (\bar b\plus \bar c) 
&\;= b\dc c\plus \bar b\dc\id \plus \id\dc \bar c\\ 
&\;= b\dc c\plus \bar b\dc (c\plus \bar c) \plus (b\plus \bar b)\dc \bar c\\ 
&\;= b\dc c\plus \bar b\dc c\plus \bar b\dc \bar c\plus b\dc\bar c\plus \bar b\dc \bar c\\ 
&\;= b\dc (c\plus \bar c)\plus \bar b\dc (c\plus\bar c)\\
&\;= (b\plus\bar b)\dc (c\plus \bar c)\\
&\;= \id\dc\id\\
&\;=\id.\tag*{\qed}
\end{align*}

\subsection{Proof of \autoref{lem:tests-dec}}
Clause~1 is shown as follows:
\begin{flalign*}
&& (\dec b)? &\;= (\bar b\dc\inl \plus b\dc\inr)\dc [\zero,\id] &\\
&&  &\;= \bar b\dc\inl\dc [\zero,\id]\plus b\dc\inr\dc [\zero,\id]  &\\
&&  &\;= \bar b\dc\zero\plus b &\\
&&  &\;= b.
\intertext{
We proceed with Clause~2. To that end, fix some $d$ from the image
of~$\dec$, i.e.\ assume that $d= \dec b=\bar b\dc\inl \plus b\dc\inr $ for some $b\in\Tst(X)$.
Then
}
&& (\bar b\dc\inl \plus b\dc\inr)\dc\zero &\;= \bar b\dc\inl\dc\zero \plus b\dc\inr\dc\zero  &\\
&&  &\;=  \bar b\dc\zero \plus  b\dc\zero  &\\
&&  &\;=  \zero, &\\[2ex]
&& (\bar b\dc\inl \plus b\dc\inr)\dc(g\plus h) &\;= \bar b\dc\inl\dc (g\plus h)\plus b\dc\inr\dc(g\plus h) &\\
&&  &\;=  \bar b\dc\inl\dc g \plus\bar b\dc\inl\dc h \plus b\dc\inr\dc g\plus b\dc\inr\dc h &\\
&&  &\;=  (\bar b\dc\inl \plus b\dc\inr)\dc g\plus (\bar b\dc\inl \plus b\dc\inr)\dc h, &\\[2ex]
&& (\bar b\dc\inl\plus b\dc\inr)\dc\nabla  &\;= b\dc\inl\dc\nabla\plus\bar b\dc\inr\dc\nabla  &\\
&&  &\;= b\plus\bar b  &\\
&&  &\;= \id,&\\[2ex]
&& (\bar b\dc\inl\plus b\dc\inr)&\dc[\inl,\bar b\dc\inl\plus b\dc\inr] \\*
&&  &\;=  \bar b\dc\inl\plus b\dc\bar b\dc\inl\plus b\dc b\dc\inr &\\*
&&  &\;=  \bar b\dc\inl\plus b\dc\inr,
\intertext{as desired. The dual equation $d=d\dc[d,\inr]$ is shown analogously.
Finally, Clause~3 is shown as follows:}
&& ( e\dc [d,\inr])?  &\;= (e\dc [d,\inr])\dc [\zero,\id] &\\*
&&  &\;= e\dc[d\dc[\zero,\id],\id]  &\\
&&  &\;= e\dc[d\dc[\zero,\id]\plus\zero, d\dc[\zero,\id]\plus\id]   &\\
&&  &\;= e\dc[d\dc[\zero,\id],d\dc[\zero,\id]]\plus e\dc[\zero,\id] &\\
&&  &\;= d\dc[\zero,\id]\plus e\dc[\zero,\id]  &\\
&&  &\;= d?\plus e?.  &\\[2ex]
&& (e\dc[\inl,d])?  &\;= (e\dc[\inl,d])\dc [\zero,\id]  &\\
&&  &\;= e\dc[\zero,d\dc[\zero,\id]]                      &\\
&&  &\;= e\dc[\zero,\id]\dc d\dc[\zero,\id] &\\*
&&  &\;= e?\dc d?.  & %
\intertext{
Finally, for Clause~4, let $d=\bar b\dc\inl \plus b\dc\inr $. Then}
&& (d\dc[\inr,\inl])? &\;= (\bar b\dc\inl \plus b\dc\inr)\dc [\inr,\inl]\dc [\zero,\id] &\\
&&  &\;= \bar b\plus b\dc\zero  &\\
&&  &\;= \bar b &\\
&&  &\;= \overline{\bar b\dc\zero \plus b} &\\
&&  &\;= \overline{(\bar b\dc\inl \plus b\dc\inr)\dc[\zero,\id]} &\\
&&  &\;= \overline{d?}&\tag*{\qed}
\end{flalign*}

\subsection{Proof to \autoref{exa:sqr}}
We will prove the following auxiliary identity: 
\begin{align}\label{eq:sq_aux}
[p\dc\inr,p\dc\inl\plus p\dc\inr]^\rstar = (\id\plus [\zero,p\dc\inl])\dc ([p,p]\dc (\inr\plus p\dc\inl))^\rstar,
\end{align}
from which the goal follows. Indeed, note the identities: 
\begin{align*}
[p\dc\inr,p\dc\inl\plus p\dc\inr]\dc\nabla =&\; \nabla\dc p,\\*
[p,p]\dc (\inr\plus p\dc\inl)\dc\nabla =&\; \nabla\dc p\dc (\id\plus p),
\end{align*}
compose both sides of~\eqref{eq:sq_aux} with $\inl$ on the left and $\nabla$ on
the right, and simplify the result; notice that $\nabla = [1;1] \in \Tame(X \oplus X,X)$:
\begin{flalign*}
&& p^\rstar =&\; \inl\dc\nabla\dc p^\rstar\\* 
&& =&\; \inl\dc [p\dc\inr,p\dc\inl\plus p\dc\inr]^\rstar\dc\nabla &\by{\axname{$\rstar$-Uni}}\\ 
&& =&\; \inl\dc (\id\plus [\zero, p\dc\inl])\dc ([p, p]\dc (\inr\plus p\dc\inl))^\rstar\dc\nabla &\by{\eqref{eq:sq_aux}}\\ 
&& =&\; \inl\dc (\id\plus [\zero, p\dc\inl])\dc\nabla\dc (p\dc (\id\plus p))^\rstar &\by{\axname{$\rstar$-Uni}}\\ 
&& =&\; (p\dc (\id\plus p))^\rstar. &
\intertext{
Now, to show~\eqref{eq:sq_aux}, note that} 
&&[\zero,p\dc\inl]^\rstar 
=&\; \id\plus [\zero, p\dc\inl]\dc [\zero, p\dc\inl]^\rstar&\by{\axname{$\rstar$-Fix}}\\ 
&&=&\; \id\plus [\zero, p\dc\inl\dc [\zero, p\dc\inl]^\rstar]\\
&&=&\; \id\plus [\zero, p\dc\inl\dc (\id\plus [\zero, p\dc\inl]\dc [\zero, p\dc\inl]^\rstar)]&\by{\axname{$\rstar$-Fix}}\\
&&=&\; \id\plus [\zero, p\dc\inl].
\intertext{
Then}
&&[p\dc\inr, p\dc\inl&\plus p\dc\inr]^\rstar \\*
&&=&\; ([\zero, p\dc\inl] \plus [p\dc\inr, p\dc\inr])^\rstar\\
&&=&\; [\zero, p\dc\inl]^\rstar\dc ([p\dc\inr, p\dc\inr]\dc [\zero, p\dc\inl]^\rstar)^\rstar&\by{\axname{$\rstar$-Sum}}\\
&&=&\; (\id\plus [\zero, p\dc\inl])\dc ([p, p]\dc\inr\dc (\id\plus [\zero, p\dc\inl]))^\rstar\\
&&=&\; (\id\plus [\zero, p\dc\inl])\dc ([p, p]\dc (\inr\plus p\dc\inl))^\rstar,
\end{flalign*}
and we are done.\qed

\subsection{Proof of \autoref{the:grove-monad}}
Note that the assumptions imply that $T$ is a KiC-functor, in particular, $T0 = 0$ and $T(p\plus q) = Tp\plus Tq$,
by \autoref{lem:funct}. The laws of grove categories are easy to check. Indeed, the equations~\eqref{eq:dist_plus}
interpreted in $\BC_{\BBT}$ turn into
\begin{align*}
\qquad\qquad \zero\dc p^\klstar =&\; \zero,& (q\plus r)\dc p^\klstar =&\; q\dc p^\klstar\plus r\dc p^\klstar, &&
\end{align*}
and clearly follow from assumptions. Linearity of left injections in $\BC_{\BBT}$ reads as
\begin{align*}
\qquad\qquad \inl\dc\eta\dc\zero^\klstar =&\; \zero,& \inl\dc\eta\dc(q\plus r)^\klstar =&\; \inl\dc\eta\dc q^\klstar\plus  \inl\dc\eta\dc r^\klstar, &&
\end{align*} 
and follows from the assumption that $\inl$ is linear in $\BC$ -- analogously for $\inr$.

Let us check that the morphisms from~$\Tame_\BBT$ are all linear. To that end, 
take some $p$ from~$\Tame_\BBT$, and show 
\begin{align*}
\qquad\qquad p\dc\zero^\klstar =&\; \zero,& p\dc(q\plus r)^\klstar =&\; p\dc q^\klstar\plus  p\dc r^\klstar &&
\end{align*}
Using the assumptions, $p\dc\zero^\klstar = p\dc T\zero\dc\mu = p\dc \zero\dc\mu = \zero\dc\mu = \zero$ where
$\mu=\id^\klstar$ is multiplication of $\BBT$. The second equation is shown analogously.

Finally, we proceed to verify the laws of iteration. To that end, we record some 
auxiliary properties:
\begin{align}
(\eta\dc (p^\klstar)^\rstar)^\klstar &= (p^\klstar)^\rstar, \label{eq:grove-monad1}\\
\eta\dc (p^\klstar\plus q^\klstar) &= p\plus q.				\label{eq:grove-monad2}
\end{align}
For the first identity, note that
\begin{align*}
Tp^\klstar\dc\mu= TTp\dc T\mu\dc\mu&= TTp\dc\mu\dc\mu=\mu\dc Tp\dc\mu=\mu\dc p^\klstar.\\
\intertext{By assumption, $\mu=\id^\klstar$ is tame, hence, by \axname{$\rstar$-Uni},}
(Tp^\klstar)^\rstar\dc\mu&= \mu\dc (p^\klstar)^\rstar.\\
\intertext{We then have}
(\eta\dc (p^\klstar)^\rstar)^\klstar
&=  T\eta\dc T(p^\klstar)^\rstar\dc\mu\\
&=  T\eta\dc (Tp^\klstar)^\rstar\dc\mu\\
&=  T\eta\dc\mu\dc (p^\klstar)^\rstar\\
&=  (\eta\dc (p^\klstar)^\rstar)^\klstar.
\intertext{The second identity is shown as follows:}
\eta\dc (p^\klstar\plus q^\klstar) 
&= \eta\dc (Tp\dc\mu\plus Tq\dc\mu)\\
&= \eta\dc (Tp\plus Tq)\dc\mu\\
&= \eta\dc T(p\plus q)\dc\mu\\
&= (p\plus q)\dc\eta\dc\mu\\
&= p\plus q.
\end{align*}
Now, the laws of KiCs are shown as follows:

\axname{$\rstar$-Fix}: We need to show $\eta\dc (p^\klstar)^\rstar = \eta \plus p\dc (\eta\dc (p^\klstar)^\rstar)^\klstar$. Indeed,
\begin{flalign*}
&& \eta\dc (p^\klstar)^\rstar 
&\;= \eta\dc(\id\plus p^\klstar\dc (p^\klstar)^\rstar) &\\
&&  &\;= \eta\dc(\id\plus p^\klstar\dc (\eta\dc (p^\klstar)^\rstar)^\klstar) &\by{\eqref{eq:grove-monad1}}\\
&&  &\;= \eta\dc(\eta^\klstar\plus (p\dc (\eta\dc (p^\klstar)^\rstar)^\klstar)^\klstar) &\\
&&  &\;= \eta \plus p\dc (\eta\dc (p^\klstar)^\rstar)^\klstar.&\by{\eqref{eq:grove-monad2}}
\end{flalign*}

\axname{$\rstar$-Sum}: We need to show $\eta\dc ((p \plus q)^\klstar)^\rstar =  \eta\dc (p^\klstar)^\rstar\dc (\eta\dc(q\dc (\eta\dc (p^\klstar)^\rstar)^\klstar)^\rstar)^\klstar$.
This is essentially obtained from \axname{$\rstar$-Sum} for $(\BC,\Tame)$ by repeated use of \eqref{eq:grove-monad2}:
\begin{flalign*}
&& \eta\dc ((p\plus q)^\klstar)^\rstar &\;= \eta\dc (p^\klstar \plus  q^\klstar)^\rstar &\\
&&  &\;= \eta\dc (p^\klstar)^\rstar\dc (q^\klstar\dc (p^\klstar)^\rstar)^\rstar \\
&&  &\;= \eta\dc (p^\klstar)^\rstar\dc (q^\klstar\dc (\eta\dc (p^\klstar)^\rstar)^\klstar)^\rstar \\
&&  &\;= \eta\dc (p^\klstar)^\rstar\dc ((q\dc (\eta\dc (p^\klstar)^\rstar)^\klstar)^\klstar)^\rstar \\
&&  &\;= \eta\dc (p^\klstar)^\rstar\dc (\eta\dc (q\dc (\eta\dc (p^\klstar)^\rstar)^\klstar)^\rstar)^\klstar.
\end{flalign*}

\axname{$\rstar$-Uni}: Suppose that $u\dc p^\klstar = q\dc u^\klstar$ for some 
tame $u$. We need to show $u\dc (\eta\dc (p^\klstar)^\rstar)^\klstar = \eta\dc (q^\klstar)^\rstar\dc u^\klstar$.
By applying Kleisli lifting to both sides of the assumption, we obtain 
$u^\klstar\dc p^\klstar = q^\klstar\dc u^\klstar$, which by \axname{$\rstar$-Uni}
yields $u^\klstar\dc (p^\klstar)^\rstar = (q^\klstar)^\rstar\dc u^\klstar$. Then
\begin{flalign*}
&& u\dc (\eta\dc (p^\klstar)^\rstar)^\klstar &\;= u\dc (p^\klstar)^\rstar &\by{\eqref{eq:grove-monad2}}\\
&&  &\;= \eta\dc u^\klstar\dc (p^\klstar)^\rstar\\
&&  &\;= \eta\dc (q^\klstar)^\rstar\dc u^\klstar,
\end{flalign*}
and we are done.
\qed

\subsection{Proof of \autoref{thm:while}}
The present construction is an improvement of a previous one~\cite[Theorem 18]{Goncharov23}, which we partially reuse here.
Two important changes are responsible for redesigning the previous argument: 
here we redefine \axname{DW-And}, and make it an axiom (instead of a rule, as before), 
and we restrict the use of \axname{DW-Uni}, by making more conservative assumptions 
about the subcategory~$\BD$.

The requisite mutual translations are as follows:
\begin{align}\label{eq:while-as-it}
\while{d}{p}  =&\; (\ifd{d}{p\dc \tt}{\ff})^\istar,\\\label{eq:it-as-while}
p^\istar      =&\; \inr\dc (\while{(\inl\cpp \inr)}{[\inl,p]})\dc [\id,\div]
\end{align}
where $\delta=\inr^\istar$.

First, we show that the indicated translations are mutually inverse.
\medskip\noindent
(i)~  $(\argument)^\istar\to\underline{\oname{while}}\to (\argument)^\istar$: We need to show 
  that 
  \begin{displaymath}
     \inr\dc (\ifd{(\inl\cpp \inr)}{[\inl,p]\dc \inr}{\inl})^\istar\dc [\id,\div] = p^\istar.
  \end{displaymath}
Indeed,
\begin{flalign*}
&&  \inr\dc (\ifd{&(\inl\cpp \inr)}{[\inl,p]\dc\inr}{\inl})^\istar\dc [\id,\div]     \\*
&& &\;= \inr\dc ((\inl\cpp \inr)\dc[\inl,[\inl,p]\dc\inr])^\istar\dc [\id,\div]       &\\
&& &\;= \inr\dc (\inl\cpp p)^\istar\dc [\id,\div]                                     &\\
&& &\;= \inr\dc (\id\cpp p)^\istar                                                    &\by{\axname{Naturality}}\\
&& &\;= \inr\dc ([\inl,\inr]\dc[\inl,p\dc\inr])^\istar                                &\\
&& &\;= \inr\dc [\inl,\inr]\dc [\id, (p\dc\inr\dc [\inl,[\inl,\inr]])^\istar]         &\by{\axname{Dinaturality}}\\*
&& &\;= p^\istar.
\end{flalign*}
\medskip\noindent
(ii)~  $\underline{\oname{while}}\to(\argument)^\istar\to\underline{\oname{while}}$: We need to show that
\begin{align}\label{eq:wh-it-wh}
   \inr\dc (\while{(\inl\cpp \inr)}{[\inl,\ifd{d}{p\dc \tt}{\ff}]})\dc [\id,\div] = \while{d}{p}.
\end{align}
Observe that 
\begin{align*}
[\inl,\ifd{d}{p\dc &\tt}{\ff}]\\*
=&\; [\inl, d]\dc (\id\cpp p)\\
=&\; [\inll, d\dc (\inl\cpp \inr)]\dc [\id\cpp p,\id\cpp p]\\*
=&\; \ifd{[\inll, d\dc (\inl\cpp \inr)]}{(\id\cpp p)}{(\id\cpp p)},
\intertext{and}
(\inl\cpp \inr)\booland (&[\inll, d\dc (\inl\cpp \inr)]\boolor\tt)\\
=&\; (\inl\cpp \inr)\dc [\inl, [\inlr,  d\dc\inr]] \\
=&\; [\inll,  d\dc\inr]\\
=&\; \inl \cpp  d,
\end{align*}
Hence, by \axname{DW-And},
\begin{align*}
 \while{(\inl\cpp &\inr)}{[\inl,\ifd{d}{p\dc \tt}{\ff}]} = \while{(\inl\cpp d)}{(\id\cpp p)}.
\end{align*}
We thus reduced~\eqref{eq:wh-it-wh} to
\begin{align}\label{eq:eq:wh-it-wh1}
   \inr\dc \while{(\inl\cpp d)}{(\id\cpp p)}\dc [\id,\div] = \while{d}{p}.
\end{align}
Let $e=[\inll,d\dc(\inr\cpp \inr)]$, and observe that
\begin{align*}
\inl\cpp d =&\; [\inll,d\dc (\inr\cpp \inr)]\dc [\inl\cpp \inl,\inr] \\ 
 =&\; [\inll,d\dc (\inr\cpp \inr)]\boolor (\inl\cpp \inl)\\
 =&\; e\boolor (\inl\cpp \inl).
\intertext{Therefore, by \axname{DW-Or},}
\while{(\inl\cpp d)}{(\id\cpp p)} =&\; (\while{e}{(\id\cpp p)})\dc \\&\;\while{(\inl\cpp \inl)}{((\id\cpp p)\dc \while{e}{(\id\cpp p)})}
\end{align*}
Using the last identity, to show~\eqref{eq:eq:wh-it-wh1}, it suffices to show
\begin{align*}
\inr\dc \while{e}{(\id\cpp p)} =&\; \while{d}{p}\dc \inr\\*
\inr\dc \while{(\inl\cpp \inl)}{((\id\cpp p)\dc \while{e}{(\id\cpp p)})} =&\; \inl
\end{align*}
The latter is straightforward by~\axname{Fixpoint}. For the former, observe that
\begin{align*}
\inr\dc e
=&\;\inr\dc [\inll,d\dc(\inr\cpp \inr)] 
= \ifd{d}{\inr\dc \tt}{\inr\dc \ff},\\*
\inr\dc (\id\cpp p) =&\; p\dc \inr,
\end{align*}
and use \axname{DW-Uni}.

\medskip\noindent
(iii)~  We need to check that the laws of uniform Conway iteration follow from \axname{DW-Fix},
 \axname{DW-Or}, \axname{DW-And}, \axname{DW-Uni}. We use the fact that \axname{Dinaturality}
 is derivable from \axname{Fixpoint},\axname{Codiagonal}, the instance of \axname{Naturality}
 with $q=\id\cpp \inl$ and the instances of \axname{Uniformity} with $u=\inl$ and $u=\inr$~\cite[Proposition 5.1]{GoncharovSchroderEtAl19}.
 Let us argue moreover that \NAT is derivable from the same data, and additionally
 its own instance with $q=\inr$.
To that end, let us fix $q\c Y \to Z$, $p\c X\to Y\cpp X$ and show that
$p^{\istar}\dc q = (p\dc (q\cpp \id))^{\istar}$, by establishing two identities:
\begin{align*}
p^{\istar}\dc q = \inr\dc (q\cpp p)^\istar =&\; (p\dc (q\cpp \id))^{\istar},
\end{align*}
which are obtained as follows, using the allowed instances of \NAT and \UNI:
\begin{flalign*}
&&\inr\dc (q\cpp p)^\istar
  =\; & \inr\dc ([q\dc\inll,p\dc (\inlr\cpp \inr)]\dc [\id,\inr])^\istar                  \\*
&&=\; & \inr\dc [q\dc\inll, p\dc (\inlr\cpp\inr)]^{\istar\istar}                          &\by{\COD}\\
&&=\; & \inr\dc [q\dc\inll,p\dc(\inlr\cpp \inr)]^{\istar}\dc [\id, (q\cpp p)^\istar]      &\by{\FIX}\\
&&=\; & (p\dc (\inlr\cpp\id))^{\istar}\dc [\id, (q\cpp p)^\istar]                         &\by{\UNI}\\
&&=\; & (p\dc (\inr\cpp \id))^\istar \dc (\id\cpp \inl)\dc [\id, (q\cpp p)^\istar]      &\by{\NAT}\\
&&=\; & p^\istar\dc\inr\dc (\id\cpp \inl)\dc [\id, (q\cpp p)^\istar]                      &\by{\NAT}\\
&&=\; & p^\istar\dc\inl\dc (q\cpp p)^\istar                                                 \\
&&=\; & p^\istar\dc\inl\dc (q\cpp p)\dc [\id,(q\cpp p)^\istar]                            &\by{\FIX}\\
&&=\; & p^\istar\dc q\\[1.5ex]
&&\inr\dc (q\cpp p)^{\istar}
=\; & \inr\dc ((q\cpp \id)\dc [\inl,p\dc\inr])^{\istar}\\
&&=\; & \inr\dc (q\cpp \id)\dc [\id,([\inl, p\dc\inr\dc (q\cpp \id)])^{\istar}]&\by{\DIN}\\
&&=\; & \inr\dc[q,(p\dc (q\cpp \id))^{\istar}]\\
&&=\; & (p\dc (q\cpp \id))^{\istar}.
\end{flalign*}
That \FIX and $\COD$ and the above mentioned relevant instances of \UNI and \NAT 
indeed follow has already been shown previously~\cite[Theorem~18]{Goncharov23}

\noindent
(iv)~ That \axname{DW-Fix},\axname{DW-Or},\axname{DW-Uni} follow from the axioms 
of uniform Conway iteration has been proven previously~\cite[Theorem 18]{Goncharov23}. Let us
show \axname{DW-And}:
\begin{flalign*}
&&\while{d\booland (e\boolor\tt)}{p}=&\;  (\ifd{d\booland (e\boolor\tt)}{p\dc \tt}{\ff})^\istar&\\
&&=&\;  (d\dc [\inl,e\dc\nabla\dc\inr]\dc [\inl,p\dc\inr])^\istar\\
&&=&\;  (d\dc (\id\cpp e\dc\nabla\dc p))^\istar\\
&&=&\;  (d\dc [\inl, e\dc [p\dc\inr, p\dc\inr]])^\istar\\
&&=&\;  (\ifd{d}{(\ifd{e}{p}{p})\dc \tt}{\ff})^\istar\\
&&=&\; \while{d}{(\ifd{e}{p}{p})}.&\text{\qed}
\end{flalign*}

\subsection{Proof of \autoref{thm:while-kleene}}
\begin{lemma}\label{lem:while-eqs}
Let $\BC$, $\BD$ and $\Tst$ be as in \autoref{pro:test_while}.
Then the following identities are derivable:
\begin{align}
\label{eq:test_while}
\bar b\dc \wt{(b\land c)}{p} =&\; \bar b\\* 
\label{eq:while_test}
\wt{(b\lor c)}{p} =&\; (\wt{(b\lor c)}{p})\dc \bar b\\  
\label{eq:test_in_while}
\wt{(b\land c)}{p} =&\; \wt{(b\land c)}{(b\dc p)}\\  
\wt{b}{(\ift{c}{p}{q})} =&\;\notag\\ 
(\wt{(b\,\land c)}{p})\dc \,&\wt{b}{(q\dc \wt{(b\land c)}{p})}
\label{eq:wandu}
\end{align}
where the if-then-else operator is defined as in~\eqref{eq:if-while}.
\end{lemma}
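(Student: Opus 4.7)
The plan is to unfold each while loop via the definition $\wt{b}{p}=(b\dc p)^\rstar\dc\bar b$ and reduce every claim to an identity about Kleene iteration, to be dispatched using the KiC axioms \axname{$\rstar$-Fix}, \axname{$\rstar$-Sum} and \axname{$\rstar$-Uni}, together with the Boolean-algebra identities on tests: $b\dc\bar b=\zero$, $b\plus\bar b=\id$, and commutativity and idempotence. In particular, $\overline{b\land c}=\bar b\plus\bar c$ by de Morgan. Since tests are tame and linear by definition, every test qualifies as the uniformity morphism $u$ in \axname{$\rstar$-Uni}.

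The first three identities will fall out with minimal effort. For~\eqref{eq:test_while}, I will apply \axname{$\rstar$-Uni} with $u=\bar b$ to the evident commutation $\bar b\dc(b\dc c\dc p)=\zero=\zero\dc\bar b$, obtaining $\bar b\dc((b\land c)\dc p)^\rstar=\zero^\rstar\dc\bar b=\bar b$; postcomposing with $\overline{b\land c}$ and using $\bar b\dc\overline{b\land c}=\bar b$ (by idempotence together with the Boolean-algebra absorption $\bar b\plus\bar b\dc\bar c=\bar b$) closes the case. For~\eqref{eq:while_test}, unfolding the LHS yields $((b\lor c)\dc p)^\rstar\dc\overline{b\lor c}$, and since $\overline{b\lor c}\dc\bar b=\bar b\dc\bar c\dc\bar b=\bar b\dc\bar c=\overline{b\lor c}$ by commutativity and idempotence of tests, postcomposing with $\bar b$ is inert. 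For~\eqref{eq:test_in_while}, $(b\land c)\dc b=b\dc c\dc b=b\dc c=b\land c$ by commutativity and idempotence, so $((b\land c)\dc p)^\rstar=((b\land c)\dc b\dc p)^\rstar$ and the common tail $\overline{b\land c}$ finishes the case.

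The real work lies in~\eqref{eq:wandu}. Setting $A=((b\land c)\dc p)^\rstar$ and $X=\overline{b\land c}$, I will expand $\ift{c}{p}{q}=c\dc p\plus\bar c\dc q$ and apply \axname{$\rstar$-Sum} to rewrite the LHS as $A\dc(b\dc\bar c\dc q\dc A)^\rstar\dc\bar b$. The RHS unfolds directly to $A\dc X\dc(b\dc q\dc A\dc X)^\rstar\dc\bar b$. The pivotal identity will be $X\dc b=(\bar b\plus\bar c)\dc b=\bar c\dc b=b\dc\bar c$ (using $\bar b\dc b=\zero$ and test commutativity), which delivers the commutation $X\dc(b\dc q\dc A\dc X)=(b\dc\bar c\dc q\dc A)\dc X$. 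Since the test $X$ is tame, \axname{$\rstar$-Uni} upgrades this to $X\dc(b\dc q\dc A\dc X)^\rstar=(b\dc\bar c\dc q\dc A)^\rstar\dc X$, and the final cancellation $X\dc\bar b=\bar b$ (again Boolean-algebra absorption) matches the two sides.

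The principal obstacle is precisely~\eqref{eq:wandu}: coordinating the \axname{$\rstar$-Sum} decomposition of the LHS with the \axname{$\rstar$-Uni}-based pull-through of $X$ on the RHS, while correctly tracking the interaction of $\overline{b\land c}$ with $b$ and $\bar b$ via test-level Boolean algebra, and ensuring at each use of \axname{$\rstar$-Uni} that the chosen uniformity morphism is a test (hence tame). Once this interplay is pinpointed, the remaining three identities amount to bookkeeping around \axname{$\rstar$-Uni} and test commutativity.
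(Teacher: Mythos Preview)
Your calculations are all correct, but you are proving the lemma from a different axiom set than the paper does, and this difference is not cosmetic. You unfold $\wt{b}{p}$ via~\eqref{eq:if-while} as $(b\dc p)^\rstar\dc\bar b$ and argue with the Kleene-iteration axioms \axname{$\rstar$-Fix}, \axname{$\rstar$-Sum}, \axname{$\rstar$-Uni}. The paper instead works purely from the while-operator axioms \axname{TW-Fix}, \axname{TW-Or}, \axname{TW-Uni} of \autoref{fig:twhile}: \eqref{eq:test_while} is a single \axname{TW-Fix} unfolding; \eqref{eq:while_test} and~\eqref{eq:test_in_while} are one-line applications of \axname{TW-Uni}; and for~\eqref{eq:wandu} the paper rewrites $b=(b\land c)\lor(b\land\bar c)$, applies \axname{TW-Or} to split the loop, uses~\eqref{eq:test_in_while} to strip the inner guards, and then uses \axname{TW-Uni} together with~\eqref{eq:while_test} to match the two sides.

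The reason the paper argues this way is where the lemma is consumed: step~(ii) in the proof of \autoref{thm:while-kleene}, the round-trip $\oname{while}\to(\argument)^\rstar\to\oname{while}$. There one starts from an abstract while-operator satisfying only \axname{TW-Fix}, \axname{TW-Or}, \axname{TW-Uni}; the Kleene star is \emph{defined} from it by~\eqref{eq:rstar_from_wh}, and the very goal is to show that $(b\dc p)^\rstar\dc\bar b$ recovers the original $\wt{b}{p}$. Invoking the unfolding~\eqref{eq:if-while} and the $\rstar$-axioms for the original while, as you do, would beg that question. Your route is a perfectly good alternative proof \emph{inside a KiCT} (and your handling of~\eqref{eq:wandu} via \axname{$\rstar$-Sum} followed by the commutation $X\dc b=b\dc\bar c$ and \axname{$\rstar$-Uni} is pleasantly direct), but it does not establish the lemma in the setting in which the paper needs it.
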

\begin{proof}
\eqref{eq:test_while} is shown directly:
\begin{flalign*}
&&\bar b\dc \wt{&(b\land c)}{p}\\*
&& =&\; \bar b\dc \ift{(b\land c)}{p\dc (\wt{(b\land c)}{p})}{\id}&\by{\axname{TW-Fix}}\\
&& =&\; \bar b\dc ((\bar b\plus\bar c)\plus (b\dc c)\dc p\dc \wt{(b\land c)}{p})\\
&& =&\; \bar b.
\end{flalign*}
To show~\eqref{eq:test_while}, note that $\id\dc (\bar b\land\bar c) = (\bar b\land\bar c)\dc \bar b$, 
$\id\dc (b\lor c) = (b\lor c)\dc \id$, $\id\dc p = p\dc \id$, which entail~\eqref{eq:test_while} by~\axname{TW-Uni}.
Equation \eqref{eq:test_in_while} is shown analogously.

Let us proceed with the proof of~\eqref{eq:wandu}. Let us first reduce the 
left-hand side as follows:
\begin{flalign*}
&&\wt{b}{&(\ift{c}{p}{q})} \\*
&&=&\; (\wt{b}{(\ift{c}{p}{q})})\dc \bar b &\by{\eqref{eq:while_test}}\\
&&=&\; (\wt{(b\land c)\lor (b\land\bar c)}{(\ift{c}{p}{q})})\dc \bar b&\\
&&=&\; (\wt{(b\land c)}{(\ift{c}{p}{q})})\dc \\*
&&&\;  (\wt{(b\land\bar c)}{((\ift{c}{p}{q})\dc \\&&&\hspace{7.35em}\wt{(b\land c)}{(\ift{c}{p}{q})})})\dc \bar b&\by{\axname{TW-Or}}\\
&&=&\; (\wt{(b\land c)}{p})\dc\\
&&&\;  \wt{(b\land \bar c)}{(q\dc \wt{(b\land c)}{p})}\dc \bar b.&\by{\eqref{eq:test_in_while}}
\end{flalign*}
Next, observe that $(\bar b\lor\bar c)\dc \bar b = \bar b\dc \bar b$,
$(\bar b\lor\bar c)\dc b = (b\land\bar c)\dc \id$, and hence, by~\axname{TW-Uni}, 
\begin{align*}
(\bar b\lor\bar c)\dc \wt{&b}{(q\dc \wt{(b\land c)}{p})}\\
=&\;
\wt{(b\land\bar c)}{(q\dc \wt{(b\land c)}{p})}\dc \bar b.
\end{align*}
Hence, we can reduce the right-hand side of~\eqref{eq:wandu} to the same expression:
\begin{flalign*}
&&(\wt{(b\,&\land c)}{p})\dc \wt{b}{(q\dc \wt{(b\land c)}{p})} \\*
&&=&\; (\wt{(b\land c)}{p})\dc (\bar b\lor\bar c)\dc \wt{b}{(q\dc \wt{(b\land c)}{p})} &\by{\eqref{eq:while_test}}\\
&&=&\; (\wt{(b\land c)}{p})\dc \wt{(b\land\bar c)}{(q\dc \wt{(b\land c)}{p})}\dc \bar b.&\text{\qed}
\end{flalign*}
\noqed\end{proof}

We use the conventional encoding of while-loops from~\eqref{eq:if-while} together 
with the following one, going in the opposite direction:
\begin{align}\label{eq:rstar_from_wh}
p^\rstar = \inr\dc (\wt{[\zero,\inr]}{[\inl,\inl\plus p\dc \inr]})\dc [\id,\div].
\end{align}
We proceed to show that this yields the desired equivalence. 

\medskip
\noindent
(i)~  $(\argument)^\rstar\to\oname{while}\to (\argument)^\rstar$;  we need to show
\begin{flalign*}
 &&p^\rstar =&\;  \inr\dc ([\zero,\inr]\dc [\inl,\inl\plus p\dc \inr])^\rstar\dc [\inl,\zero]\dc [\id,\div]
 \intertext{Note that, by~\axname{$\rstar$-Fix}, $\inl\dc [\zero, p\dc \inr]^\rstar = \inl$, and hence,}
 &&p^\rstar 
 =&\; p^\rstar\dc \inr\dc [\inl,\inl\plus\inr]\dc [\id,\zero]\\
 &&=&\; p^\rstar\dc \inr\dc [\zero,\inl]^\rstar\dc [\id,\zero]\\
 &&=&\; \inr\dc [\zero, p\dc \inr]^\rstar\dc [\zero,\inl]^\rstar\dc [\id,\zero]&\by{\axname{$\rstar$-Uni}}\\
 &&=&\; \inr\dc [\zero, p\dc \inr]^\rstar\dc [\zero,\inl\dc [\zero, p\dc \inr]^\rstar]^\rstar\dc [\id,\zero]\\
 &&=&\; \inr\dc [\zero, p\dc \inr]^\rstar\dc ([\zero,\inl]\dc [\zero, p\dc \inr]^\rstar)^\rstar\dc [\id,\zero]\\
 &&=&\; \inr\dc ([\zero,\inl]\plus[\zero, p\dc \inr])^\rstar\dc [\id,\zero]&\by{\axname{$\rstar$-Sum}}\\
 &&=&\; \inr\dc [\zero,\inl\plus p\dc \inr]^\rstar\dc [\id,\zero]\\
 &&=&\; \inr\dc ([\zero,\inr]\dc [\inl,\inl\plus p\dc \inr])^\rstar\dc [\inl,\zero]\dc [\id,\div].
\end{flalign*}

\medskip
\noindent
(ii)~  $\oname{while}\to (\argument)^\rstar\to\oname{while}$;  the goal here is
\begin{align*}
\wt{b}{p} = \inr\dc (\wt{[\zero,\inr]}{[\inl,\inl\plus b\dc p\dc \inr]})\dc [\id,\div]\dc \bar b. 
\end{align*}
Note that $(\bar b\plus\id)\dc [\inl,\zero] = [\inl,\zero]\dc (\bar b\plus\id)$, $(\bar b\plus\id)\dc [\zero,\inr] = [\zero,\inr]\dc \id$, 
and $\id\dc [\bar b\dc \inl,\bar b\dc \inl\plus b\dc p\dc \inr] = [\inl,\inl\plus b\dc p\dc \inr]\dc (\bar b\plus\id)$,
and hence, using~\axname{TW-Uni},
\begin{flalign*}
&& \inr\dc (&\wt{[\zero,\inr]}{[\bar b\dc \inl,\bar b\dc \inl\plus b\dc p\dc \inr]})\dc [\id,\div]\dc \bar b&\\
&&  &\;= \inr\dc (\bar b\cpp\id)\dc (\wt{[\zero,\inr]}{[\bar b\dc \inl,\bar b\dc \inl\plus b\dc p\dc \inr]})\dc [\id,\div]\dc \bar b\\
&&  &\;= \inr\dc (\wt{[\zero,\inr]}{[\inl,\inl\plus b\dc p\dc \inr]})\dc (\bar b\plus\id)\dc [\id,\div]\dc \bar b\\
&&  &\;= \inr\dc (\wt{[\zero,\inr]}{[\inl,\inl\plus b\dc p\dc \inr]})\dc [\id,\div]\dc \bar b.
\end{flalign*}
We thus replace the goal with
\begin{align}\label{eq:while_tr}
\wt{b}{p} = \inr\dc (\wt{[\zero,\inr]}{[\bar b\dc \inl,\bar b\dc \inl\plus b\dc p\dc \inr]})\dc [\id,\div]\dc \bar b. 
\end{align}
Next, let $s=[\inr,\inl]$, and observe that
\begin{flalign*}
&&\wt{[\zero,\inr]&}{[\bar b\dc \inl,\bar b\dc \inl\plus b\dc p\dc \inr]}\\
&&=&\; \wt{[\zero,\inr]}{((\zero\cpp\bar b)\dc s\plus (\id\cpp b)\dc (\bar b\cpp p))}\\
&&=&\; \wt{[\zero,\inr]}{(\ift{(\id\cpp b)}{(\bar b\cpp p)}{s})}\\
&&=&\; (\wt{([\zero,\inr]\land (\id\cpp b))}{(\bar b\cpp p)})\dc \\
&&&\; \wt{[\zero,\inr]}{(s\dc \wt{([\zero,\inr]\land (\id\cpp b))}{(\bar b\cpp p)})}&\by{\eqref{eq:wandu}}\\
&&=&\; (\wt{(\zero\cpp b)}{(\bar b\cpp p)})\dc \\*
&&&\; \wt{[\zero,\inr]}{(s\dc \wt{(\zero\cpp b)}{(\bar b\cpp p)})}.&
\end{flalign*}
To finish the proof of~\eqref{eq:while_tr}, it thus suffices to obtain
\begin{align*}
\inr\dc \wt{(\zero\cpp b)}{(\bar b\cpp p)} =&\; (\wt{b}{p})\dc \inr\\
\inr\dc \wt{[\zero,\inr]}{(s\dc \wt{(\zero\cpp b)}{(\bar b\cpp p)})} =&\;\inl
\end{align*}
and then use~\eqref{eq:while_test}. The first identity easily follows from~\axname{TW-Uni}.
The second one follows from~\axname{TW-Fix}.

\medskip
\noindent
(iii) KiCT $\to$ uniform Conway;  let us show \axname{TW-Fix},\axname{TW-Or}
directly:
\begin{flalign*}
&&\wt{b}{&p}\\
&& =&\; (b\dc p)^\rstar\dc \bar b\\
&&=&\; (\id\plus (b\dc p)\dc (b\dc p)^\rstar)\dc \bar b&\by{\axname{$\rstar$-Fix}}\\
&&=&\; \bar b\plus b\dc p\dc (b\dc p)^\rstar\dc \bar b\\
&&=&\; \ift{b}{p\dc (\wt{b}{p})}{\id},\\[2ex]
&&\wt{(b\plus& c)}{p} \\
&&=&\; ((b\plus c)\dc p)^\rstar\dc \bar b\dc \bar c\\
&&=&\; ((b\plus\bar b\dc c)\dc p)^\rstar\dc \bar b\dc \bar c\\
&&=&\; (b\dc p\plus\bar b\dc c\dc p)^\rstar\dc \bar b\dc \bar c\\
&&=&\; (b\dc p)^\rstar\dc (\bar b\dc c\dc p\dc (b\dc p)^\rstar)^\rstar\dc \bar b\dc \bar c&\by{\axname{$\rstar$-Sum}}\\
&&=&\; (b\dc p)^\rstar\dc (\id\plus\bar b\dc c\dc p\dc (b\dc p)^\rstar\dc (\bar b\dc c\dc p\dc (b\dc p)^\rstar)^\rstar)\dc \bar b\dc \bar c&\by{\axname{$\rstar$-Fix}}\\
&&=&\; (b\dc p)^\rstar\dc \bar b\dc (\id\plus c\dc p\dc (b\dc p)^\rstar\dc (\bar b\dc c\dc p\dc (b\dc p)^\rstar)^\rstar\dc \bar b)\dc \bar c\\
&&=&\; (b\dc p)^\rstar\dc \bar b\dc (\id\plus c\dc p\dc (b\dc p)^\rstar\dc \bar b\dc (c\dc p\dc (b\dc p)^\rstar\dc \bar b)^\rstar)\dc \bar c&\by{\axname{$\rstar$-Uni}}\\
&&=&\; (b\dc p)^\rstar\dc \bar b\dc (c\dc p\dc (b\dc p)^\rstar\dc \bar b)^\rstar\dc \bar c&\by{\axname{$\rstar$-Fix}}\\
&&=&\; (\wt{b}{p})\dc \wt{c}{(p\dc \wt{b}{p})}.
\end{flalign*}
To show~\axname{TW-Uni}, assume that $u\dc \bar b = \bar c\dc v$ and 
$u\dc b\dc p = c\dc q\dc u$ for suitable $b,c,u,v,p$ and $q$. 
and therefore
\begin{flalign*}
&& u\dc \wt{b}{p} &\;= u\dc (b\dc p)^\rstar\dc \bar b&\\
&& &\;= (c\dc q)^\rstar\dc u\dc \bar b&\by{\axname{$\rstar$-Uni}}\\
&& &\;= (c\dc q)^\rstar\dc \bar c\dc v&\\
&&  &\;= (\wt{c}{q})\dc v.
\end{flalign*}

\medskip
\noindent
(vi) %
uniform Conway $\to$ KiCT. 
By combining~\eqref{eq:rstar_from_wh} and~\eqref{eq:while-as-it}, we express
$(\argument)^\rstar$ through $(\argument)^\istar$:
\begin{flalign*}
&& p^\rstar &\;= \inr\dc (\ifd{[\zero,\inr]}{[\inl,\inl\plus p\dc \inr]\dc \tt}{\ff})^\istar\dc [\id,\div].&
\end{flalign*}
Since, $\inr\dc \ifd{[\zero,\inr]}{[\inl,\inl\plus p\dc \inr]\dc \tt}{\ff} = (\inl\plus p\dc \inr)\dc (\id\cpp\inr)$,
using \UNI, 
\begin{align*}
p^\rstar &\;= (\inl\plus p\dc\inr)^\istar\dc (\id\cpp\inr)\dc [\id,\div] = (\inl\plus p\dc\inr)^\istar.
\end{align*}
Let us show~\axname{$\rstar$-Fix} and~\axname{$\rstar$-Sum}: 
\begin{flalign*}
&& p^\rstar &\;= (\inl\plus p\dc\inr)^\istar&\\*
&&  &\;= (\inl\plus p\dc\inr)\dc [\id,p^\rstar]&\by{\FIX}\\
&&  &\;= \id\plus p\dc p^\rstar,\\[2ex]
&& \!\!(p\plus q)^\rstar &\;= (\inl\plus p\dc\inr\plus q\dc\inr)^\istar&\\
&&  &\;= ((\inll\plus p\dc\inr\plus q\dc\inrl)\dc [\id,\inr])^\istar&\\
&&  &\;= (\inll\plus p\dc\inr\plus q\dc\inrl)^{\istar\istar}&\by{\COD}\\
&&  &\;= ((\inl\plus p\dc\inr)\dc ((\inl\plus q\dc\inr)\cpp\id))^{\istar\istar}&\\
&&  &\;= ((\inl\plus p\dc\inr)^\istar\dc (\inl\plus q\dc\inr))^{\istar}&\by{\NAT}\\
&&  &\;= (p^\rstar\dc (\inl\plus q\dc\inr))^{\istar}&\\
&&  &\;= p^\rstar\dc (\inl\plus q\dc\inr)\dc [\id,(p^\rstar\dc (\inl\plus q\dc\inr))^{\istar}]&\by{\FIX}\\
&&  &\;= p^\rstar\dc (\inl\plus q\dc\inr)\dc [\id,(p^\rstar\dc \inr\dc [\inl, \inl\plus q\dc\inr])^{\istar}]&\\
&&  &\;= p^\rstar\dc ((\inl\plus q\dc\inr)\dc [\inl, p^\rstar\dc\inr])^{\istar}&\by{\DIN}\\
&&  &\;= p^\rstar\dc (\inl\plus q\dc p^\rstar\dc\inr)^{\istar}&\\
&&  &\;= p^\rstar\dc (q\dc p^\rstar)^{\rstar}.&
\intertext{
Finally, let us show~\axname{$\rstar$-Uni}. Assume that $u\dc p = q\dc u$, which entails 
$u\dc (\inl\plus p\dc\inr) = (q\dc\inr\plus u\dc\inl)\dc (\id\cpp u)$}
&& u\dc p^\rstar &\;= u\dc (\inl\plus p\dc\inr)^\istar&\\
&&  &\;= (q\dc\inr\plus u\dc\inl)^\istar&\by{\UNI}\\
&&  &\;= ((\inl\plus q\dc\inr)\dc (u\cpp\id))^\istar\\
&&  &\;= (\inl\plus q\dc\inr)^\istar\dc u&\by{\NAT}\\
&&  &\;= q^\rstar\dc u&\text{\qed}
\end{flalign*}
\subsection{Proof of \autoref{the:resmp}}
First, observe that $\BV_{\BBT_H}$ is an idempotent grove category under the following 
semilattice structure on $\BV(X,T_HY)$: the bottom element is $\zero\dc\out^\mone$,
and the join of $f,g\c {X\to T_HY}$ is $(f\dc\out\plus g\dc\out)\dc \out^\mone$.

For every $f\in\BV(X,TY)$, $f\dc T\inl\dc\out^\mone\in\BV(X,T_HY)$, which yields 
a functor $J\c\BV_{\BBT}\to\BV_{\BBT_H}$. The action of $J$ on morphisms is monomorphic 
-- moreover, it is a section: for every $g\in\BV(X,T_HY)$, let $Rg = g\dc\out\dc [\eta,\zero]^\klstar\in\BV(X,TY)$; then
\begin{align*}
R(Jf) 
 =&\; f\dc T\inl\dc\out^\mone\dc\out\dc [\eta,\zero]^\klstar\\*
 =&\; f\dc T\inl\dc [\eta,\zero]^\klstar\\
 =&\; f\dc (\inl\dc [\eta,\zero])^\klstar\\
 =&\; f.
\end{align*}
Thus, $\BV_{\BBT}$ is a wide subcategory of $\BV_{\BBT_H}$. Since $\Tame$ is a 
wide subcategory of $\BV_{\BBT}$, by transitivity, $\Tame$ is a 
wide subcategory of $\BV_{\BBT_H}$. We proceed to prove that ${(\Tame, \BV_{\BBT_H})}$
is a KiC. By \autoref{thm:elgot-grove}, we equivalently prove that
$\BV_{\BBT_H}$ supports $\Tame$-uniform Conway iteration.

It is already known~\cite[Lemma 7.2]{GoncharovSchroderEtAl18} that if $\BBT$ supports Conway iteration 
then so does~$\BBT_H$. 
By \autoref{thm:elgot-grove}, we 
are left to check that $\BBT_H$ satisfies $\UNI$. To that end, we recall how 
Elgot iteration~$(\argument)^\iistar$ for $\BBT_H$ is defined via Elgot 
iteration~$(\argument)^\iistar$ for $\BBT$. Given $f\c X\to T_H(Y\cpp X)$, 
$f^\iistar\c X\to T_HY$ is the unique solution of the equation 
\begin{align}\label{eq:uni-fix}
f^\iistar\dc\out =&\; (f\dc\out\dc T\pi)^\istar\dc T(\id\cpp H[\hat\eta,f^\iistar]^{\widehat\klstar})
\end{align}
where $\pi$ is the natural isomorphism $[\inl\cpp\id,\inrl]\c (A\cpp B)\cpp C\to (A\cpp C)\cpp B$,
and~$\hat\eta$ and~$(\argument)^{\widehat\klstar}$ are unit and Kleisli lifting 
of $\BBT_H$. Suppose that for some $f\c X\to T_H(Y\cpp X)$, $f\c Z\to T_H(Y\cpp Z)$
and for tame $u\c X\to T_H Z$,
\begin{align}\label{eq:uni-assm}
u\dc g^{\widehat\klstar} =&\; f\dc [\inl\dc\hat\eta, u\dc T_H\inr]^{\widehat\klstar},
\intertext{
and show that 
}
u\dc (g^{\iistar})^{\widehat\klstar} =&\; f^{\iistar}.\notag
\end{align}
We do this essentially by modifying the existing argument~\cite[Lemma 7.2]{GoncharovSchroderEtAl18}. 
That~$u$ is tame entails that $u\dc\out=u'\dc T\inl$ for some $u'$.
We obtain
\begin{flalign*}
&& u'\dc (g\dc&\out\dc T\pi)^\klstar\\*
&&  =&\; u'\dc T\inl\dc [g\dc\out,Hg^{\widehat\klstar}\dc\inr\dc\eta]^\klstar\dc T\pi\\* 
&&  =&\; u\dc\out\dc [g\dc\out,Hg^{\widehat\klstar}\dc\inr\dc\eta]^\klstar\dc T\pi\\
&&  =&\; u\dc g^{\widehat\klstar}\dc\out\dc T\pi\\
&&  =&\; f\dc [\inl\dc\hat\eta, u\dc T_H\inr]^{\widehat\klstar}\dc\out\dc T\pi&\by{\eqref{eq:uni-assm}}\\
&&  =&\; f\dc\out\dc [[\inl\dc\hat\eta, u\dc T_H\inr]\dc\out,\\&&&\qquad\qquad H[\inl\dc\hat\eta, u\dc T_H\inr]^{\widehat\klstar}\dc\inr\dc\eta]^\klstar\dc T\pi\\
&&  =&\; f\dc\out\dc [[\inll\dc\eta, u'\dc T\inrl],\\&&&\qquad\qquad H[\inl\dc\hat\eta, u\dc T_H\inr]^{\widehat\klstar}\dc\inr\dc\eta]^\klstar\dc T\pi\\
&&  =&\; f\dc\out\dc [[\inll\dc\eta, u'\dc T\inr], H[\inl\dc\hat\eta, u\dc T_H\inr]^{\widehat\klstar}\dc\inrl\dc\eta]^\klstar\\
&&  =&\; f\dc\out\dc T\pi\dc T[[\inll, H[\inl\dc\hat\eta, u\dc T_H\inr]^{\widehat\klstar}\dc\inrl], \inr]\dc\\&&&\qquad\qquad [\inl\dc\eta, u'\dc T\inr]^\klstar\\
&&  =&\; f\dc\out\dc T\pi\dc T((\id\cpp H[\inl\dc\hat\eta, u\dc T_H\inr]^{\widehat\klstar})\cpp\id)\dc\\&&&\qquad\qquad[\inl\dc\eta, u'\dc T\inr]^\klstar.
\end{flalign*}
By \UNI, this entails
\begin{align*}
u'\dc ((g\dc\out\dc T\pi)^\istar)^\klstar = (f\dc\out\dc T\pi\dc T((\id\cpp H[\inl\dc\hat\eta, u\dc T_H\inr]^{\widehat\klstar})\cpp\id))^\istar.
\end{align*}
By \NAT, furthermore:
\begin{align}\label{eq:uni-aux}
u'\dc ((g\dc\out\dc T\pi)^\istar)^\klstar = (f\dc\out\dc T\pi)^\istar\dc T(\id\cpp H[\inl\dc\hat\eta, u\dc T_H\inr]^{\widehat\klstar}).
\end{align}
Now
\begin{flalign*}
&&u\dc (g^{\iistar})^{\widehat\klstar}\dc\out 
 =&\; u'\dc T\inl\dc [g^{\iistar}\dc\out,H(g^{\iistar})^{\widehat\klstar} \dc\inr\dc\eta]^\klstar\\
&& =&\; u'\dc (g^{\iistar}\dc\out)^{\klstar}\\
&& =&\; u'\dc ((g\dc\out\dc T\pi)^\istar)^\klstar\dc T(\id\cpp H[\hat\eta,g^\iistar]^{\widehat\klstar})\\
&& =&\; (f\dc\out\dc T\pi)^\istar\dc T(\id\cpp H[\inl\dc\hat\eta, u\dc T_H\inr]^{\widehat\klstar}) \dc T(\id\cpp H[\hat\eta,g^\iistar]^{\widehat\klstar})&\by{\eqref{eq:uni-aux}}\\
&& =&\; (f\dc\out\dc T\pi)^\istar\dc T(\id\cpp H[\hat\eta,u\dc (g^{\iistar})^{\widehat\klstar}]).
\end{flalign*}
This entails $u\dc (g^{\iistar})^{\widehat\klstar} = f^{\iistar}$, for, by definition, 
$f^{\iistar}$ is uniquely determined by equation~\eqref{eq:uni-fix}.
\qed

\subsection{Proof of \autoref{lem:fix-rat}}
The dependency condition is obvious in all three cases. We will prove finiteness
of sets of derivatives only.

\textit{(Sum)} Let $t,s\in T_\nu n$ be rational. Then $\Der(t\plus s) = \{t\plus s\}\cup\Der(t)\cup\Der(s)$,
which is finite, since $\Der(t)$ and $\Der(s)$ are so. 

\textit{(Composition)} It suffices to stick to the following instance: given  
$n,k\in\nat$, a rational element $t\in T_\nu n$ and a rational map $s\c n\to T_\nu k$, show that $t\dc s^\klstar\in T_\nu k$
is rational. 
Consider the set $P$ of sums of the form 
\begin{align*}
t'\dc s^\klstar\plus \sum_{s'\in\Der(s)} r_{s'}\dc (s')^\klstar
\end{align*}
where $t'$ ranges over $\Der(t)$ and $r_{s'}$ range over 
non-guarded elements of $T_\nu n$. Then $P$ is finite.
Moreover, $P$ is closed under derivatives: using~\autoref{lem:D-comp}, 
\begin{flalign*}
&& \partial_{b,\us}(t'&\dc s^\klstar\plus \sum_{s'\in\Der(s)} r_{s'}\dc (s')^\klstar) \\
&&&\;= \partial_{b,\us}(t')\dc s^\klstar\plus o(t')\dc (\partial_{b,\us}(s))^\klstar\plus \sum_{s'\in\Der(s)}\partial_{b,\us}(r_{s'})\dc (s')^\klstar\\*
&&&\qquad \plus \sum_{s'\in\Der(s)} o(r_{s'})\dc (\partial_{b,\us}(s'))^\klstar  &\\
&&&\;= \partial_{b,\us}(t')\dc s^\klstar\plus o(t')\dc (\partial_{b,\us}(s))^\klstar\plus \sum_{s'\in\Der(s)} o(r_{s'})\dc (\partial_{b,\us}(s'))^\klstar  &\\
&&&\;= \partial_{b,\us}(t')\dc s^\klstar\plus (o(t')\plus o(r_s))\dc (\partial_{b,\us}(s))^\klstar\\*
&&&\qquad \plus \sum_{s'\in\Der(s)\smin\{s\}} o(r_{s'})\dc (\partial_{b,\us}(s'))^\klstar,  &
\end{flalign*}
and analogously for $\partial_{b,\fs}^k$. Note that $t\dc s^\klstar\in P$. Therefore 
$\Der(t\dc s^\klstar)\subseteq P$. Since $P$ is finite, so is~$\Der(t\dc s^\klstar)$.

\textit{(Iteration)} Let $t=[t_0,\ldots,t_{n-1}]\c n\to T_\nu n$, and $\Der(t_i)$ 
be finite for $i=0,\ldots,n-1$. Analogously to the previous clause, consider the set 
$P$ of sums of the form 
\begin{align*}
\sum_{t'\in\Der(t)} r_{t'}\dc (t')^\klstar\dc (t^\rstar)^\klstar + r
\end{align*}
where $t'$ ranges over $\Der(t)$ and $r_{s'}$, $r$ range over those
non-guarded elements of~$T_\nu n$. In the same manner as in
the previous clause: $P$ is finite, contains $t^\rstar$ and is closed under derivatives, 
hence $\Der(t^\rstar)$ is finite.\qed

\subsection{Proof of \autoref{pro:kleene}}

We assume that $n>0$.

($\Rightarrow$) Suppose that $[t_0,\ldots,t_{n-1}]\c n\to T_\nu k$ is rational, hence every~$\Der(t_i)$ is finite.
Let $m=\max\bigl\{n, |\bigcup_i\Der(t_i)|\bigr\}$ and if $m>n$, let us name $t_n,\ldots,t_{m-1}$
the elements in $\bigcup_i\Der(t_i)\smin\{t_0,\ldots,t_{n-1}\}$, thus $\bigcup_i\Der(t_i)=\{t_0,\ldots,t_{m-1}\}$.
Let $\hash\c\bigcup_i\Der(t_i)\to m$ send every $t_i$ to any index $j$, for which 
$t_i=t_j$.
Then we define $s_0,\ldots,s_{m-1}\in T_\nu m$ as follows:
\begin{align*}
s_{i} = \sum_{b\in\hat\Theta,\us\in\Gamma} b.\,\us.\,\hash\partial_{b,\us}(t_i)+\sum_{b\in\hat\Theta,\fs\in\Sigma} b.\,\fs(\hash\partial_{b,\fs}^1(t_i),\ldots,\hash\partial_{b,\fs}^{n_\fs}(t_i)) %
\end{align*}
where $n_\fs$ is the arity of $\fs$. Note that the involved indexed 
sums contain finitely many distinct elements, since~$t_i$ are assumed to depend only on a finite subset of $\Sigma\cup\Gamma$.
Hence, every $s_i$ is prefinite. Moreover, by definition, every $s_i$ is flat and 
guarded. It suffices therefore to prove that $t_i = \inj_i\dc s^\rstar\dc (o(t))^\klstar$
for $i=0,\ldots,n-1$ where $s=[s_0,\ldots,s_{m-1}]$ and $t=[t_0,\ldots,t_{m-1}]$.
Let 
\begin{align*}
	\Bis=\{(\inj_{i}\dc s^\rstar\dc (o(t))^\klstar,~t_i)\mid i=0,\ldots,m-1\}.
\end{align*}
This relation satisfies the conditions of \autoref{lem:bisim}. Let us verify~(1)
and (3) -- the remaining condition (2) are handled analogously. We have
\begin{flalign*}
&&\partial_{b,\us}(\inj_{i}\dc s^\rstar&\dc (o(t))^\klstar) \\*
&&=&\;\partial_{b,\us}(\inj_{i}\dc (\eta\plus s\dc (s^\rstar)^\klstar)\dc (o(t))^\klstar)&\by{\axname{$\rstar$-Fix}}\\
&&=&\;\partial_{b,\us}(\inj_{i}\dc o(t)\plus \inj_{i}\dc s\dc (s^\rstar)^\klstar\dc (o(t))^\klstar)\\
&&=&\;\partial_{b,\us}(s_i\dc (s^\rstar\dc (o(t))^\klstar)^\klstar)\\
&&=&\;\partial_{b,\us}(s_i)\dc (s^\rstar\dc (o(t))^\klstar)^\klstar\plus o(s_i)\dc (\partial_{b,\us}(s^\rstar\dc (o(t))^\klstar))^\klstar&\by{\autoref{lem:D-comp}}\\
&&=&\;\partial_{b,\us}(s_i)\dc (s^\rstar)^\klstar\dc (o(t))^\klstar\\
&&=&\; \inj_{\hash\partial_{b,\us}(t_i)}\dc s^\rstar\dc (o(t))^\klstar
\intertext{and $(\inj_{\hash\partial_{b,\us}(t_i)}\dc s^\rstar\dc (o(t))^\klstar,~\partial_{b,\us}(t_i))\in\Bis$, since $\partial_{b,\us}(t_i)=t_{\hash\partial_{b,\us}(t_i)}$.}
&&o(\inj_{i}\dc s^\rstar\dc (&o(t))^\klstar) \\*
&&=&\;o(\inj_{i}\dc (\eta\plus s\dc (s^\rstar)^\klstar)\dc (o(t))^\klstar)&\by{\axname{$\rstar$-Fix}}\\
&&=&\;o(\inj_{i}\dc o(t)\plus \inj_{i}\dc s\dc (s^\rstar)^\klstar\dc (o(t))^\klstar)\\
&&=&\;o(o(t_i)\plus s_i\dc (s^\rstar)^\klstar\dc (o(t))^\klstar)\\
&&=&\;o(t_i).
\end{flalign*}
Now, by \autoref{lem:bisim}, $t = \inl\dc s^\rstar\dc (o(t))^\klstar$, as desired.

($\Leftarrow$) Suppose that $t\c n\to T_\nu k$ is definable, hence it has the form
$\inj_{n,m}\dc s^\rstar\dc r^\klstar$ for some rational~$s$ and $r$. The claim then 
follows from \autoref{lem:fix-rat}.\qed

\subsection{Proof of \autoref{the:free}}
We rely on the following observation.
\begin{lemma}\label{lem:funct}
Let $(\BC,\Tame)$ and $(\BD,\Tame[\BD])$ be two KiCs, and let $F$ be the following 
map, acting on objects and on morphisms: $FX\in |\BD|$ for every $X\in |\BC|$, 
$Fp\in\BD(FX,FY)$ for every $p\in\BC(X,Y)$. Suppose that $F$ preserves coproducts, $Fp\in\Tame[\BD](FX,FY)$ for all $p\in\Tame(X,Y)$,
and the following further preservation properties hold
\begin{align*}
Fp^\rstar = (Fp)^\rstar,\quad F(p\dc\inl) = Fp\dc\inl,\quad F(p\dc\inr) = Fp\dc\inr,\quad F(p\dc [\zero,\id]) = Fp\dc [\zero,\id].
\end{align*}
Then $F$ is a KiC-functor.
\end{lemma}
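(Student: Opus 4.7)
The plan is to verify the four KiC-functor properties not already contained in the hypotheses: preservation of identity, zero, composition, and $\plus$. Coproduct preservation, tameness preservation, and $Fp^\rstar = (Fp)^\rstar$ are taken as given.

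Preservation of $\id$ and $\zero$ is immediate. Applying the hypothesis $F(p\dc[\zero,\id]) = Fp\dc[\zero,\id]$ to $p = \inr\c X\to Y\cpp X$ and using $\inr\dc[\zero,\id] = \id_X$ together with $F\inr = \inr$ (from coproduct preservation) gives $F\id_X = \id_{FX}$. The symmetric choice $p = \inl$ with $\inl\dc[\zero,\id] = \zero$ delivers $F\zero = \zero$. Iterating the three post-composition hypotheses then extends preservation of post-composition $F(p\dc w) = Fp\dc w$ from the atoms $\inl$, $\inr$, $[\zero,\id]$ to every $w$ built from them by composition; this covers, for example, $[\id,\zero] = [\inr,\inl]\dc[\zero,\id]$ and its dual. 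Combined with coproduct preservation of cotupling $F[g,h] = [Fg,Fh]$, this gives a large stock of structural morphisms along which $F$ respects composition.

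The substantive work is preservation of $\plus$ and of general composition. For $\plus$, I use the identity $p\plus q = (\inl\plus\inr)\dc[p,q]$, which holds in any idempotent grove category with linear coproduct injections (by right-distributivity of $\dc$ over $\plus$). Applying $F$ and invoking cotuple preservation reduces additivity to the single fact $F(\inl\plus\inr) = \inl\plus\inr$; this I aim to derive by characterizing $\inl\plus\inr$ uniquely via equations such as $(\inl\plus\inr)\dc[\zero,\id] = \id$ together with a dual equation extracted using atom-iteration, all of which $F$ preserves. For composition, I plan to encode $p\dc q$ via a Kleene iterate of a morphism built from $p$, $q$, and the atoms (using $\axname{$\rstar$-Fix}$ and $\axname{$\rstar$-Sum}$ to recover the bare composite after iteration), and then apply $Fp^\rstar = (Fp)^\rstar$ to transport preservation through $\rstar$. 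The uniformity axiom $\axname{$\rstar$-Uni}$ is invoked, using the tame assumption, to commute $F$ past iteration in the presence of tame context morphisms.

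I expect the main obstacle to be the mutual entanglement of the $\plus$ and $\dc$ arguments: additivity of $F$ is needed when unfolding the $\axname{$\rstar$-Fix}$ expansion of iteration, while composition preservation along the atoms is needed to handle cotuple post-composition inside the $p\plus q$ decomposition. The proof must therefore proceed by a coupled bootstrap in which the two properties are established in lockstep, tracking the complexity of the iteration encoding, in the same spirit as the bookkeeping that underlies the proof of \autoref{the:grove-monad}.
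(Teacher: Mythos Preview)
Your overall strategy matches the paper's: encode $p\dc q$ as a Kleene iterate built from atoms, then reduce $p\plus q$ to the instance $\inl\plus\inr$. But two points deserve correction.

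First, the ``mutual entanglement'' you anticipate does not arise. The paper establishes composition preservation \emph{first}, and its argument uses no additivity of $F$ at all. With $w = [[p\dc\inrl,\,q\dc\inr],\,\zero]$, one has $w^\rstar\dc[\zero,\id] = [[p\dc q,\,q],\,\id]$ in any KiC (three unfoldings of \axname{$\rstar$-Fix}, using only linearity of injections). Now apply $F$: the hypotheses give $F(w^\rstar\dc[\zero,\id]) = (Fw)^\rstar\dc[\zero,\id]$ directly, and $Fw = [[Fp\dc\inrl,\,Fq\dc\inr],\,\zero]$ by coproduct preservation and the $\inl/\inr$ post-composition hypotheses. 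Running the same \axname{$\rstar$-Fix} unfolding in $\BD$ yields $(Fw)^\rstar\dc[\zero,\id] = [[Fp\dc Fq,\,Fq],\,\id]$. Comparing cotuple components gives $F(p\dc q) = Fp\dc Fq$. No appeal to $F(a\plus b) = Fa\plus Fb$ is needed, so there is nothing to bootstrap; \axname{$\rstar$-Sum} and \axname{$\rstar$-Uni} are likewise unnecessary.

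Second, your plan to obtain $F(\inl\plus\inr) = \inl\plus\inr$ by ``characterizing $\inl\plus\inr$ uniquely via equations such as $(\inl\plus\inr)\dc[\zero,\id] = \id$'' will not work: that equation (even together with its dual) does not pin down a unique morphism into $X\cpp X$. The paper instead gives an explicit formula,
\[
\inl\plus\inr \;=\; \inl\dc[\inr,\zero]^\rstar,
\]
verified by two applications of \axname{$\rstar$-Fix}. Once composition preservation is in hand, $F(p\plus q) = F(\inl\dc[\inr,\zero]^\rstar\dc[p,q]) = \inl\dc[\inr,\zero]^\rstar\dc[Fp,Fq] = Fp\plus Fq$ follows immediately. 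So: drop the uniqueness argument, prove composition first, and use the explicit iterate for $\inl\plus\inr$.
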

\begin{proof}
Note that the assumptions entail that $\id$ and $\zero$ are also preserved 
by $F$: $F\id = F(\inl\dc [\id,\zero]) = \inl\dc [\id,\zero] = \id$, $F\zero = F(\inr\dc [\id,\zero]) = \inr\dc [\id,\zero] = \zero$.
We need to show that the assumptions entail that $F$ preserves nondeterministic sums and morphism 
compositions. 

\textit{(Compositions)} Let $p\in\BC(X,Y)$, $q\in\BC(Y,Z)$, and show that $F(p\dc q) = Fp\dc Fq$.
Note that, essentially by \axname{$\rstar$-Fix},  
\begin{align*}
\inll\dc [[p\dc&\,\inrl, q\dc\inr],\zero]^\rstar\dc [\zero,\id] \\
=\;& \inll\dc [\zero,\id]\plus \inll\dc [[p\dc\inrl, q\dc\inr],\zero]\dc [[p\dc\inrl, q\dc\inr],\zero]^\rstar\dc [\zero,\id]\\
=\;& p\dc\inrl\dc [[p\dc\inrl, q\dc\inr],\zero]^\rstar\dc [\zero,\id]\\
=\;& p\dc(\inrl\dc [\zero,\id]\plus \inrl\dc[[p\dc\inrl, q\dc\inr],\zero]\dc [[p\dc\inrl, q\dc\inr],\zero]^\rstar\dc [\zero,\id])\\
=\;& p\dc q\dc\inr\dc [[p\dc\inrl, q\dc\inr],\zero]^\rstar\dc [\zero,\id]\\
=\;& p\dc q\dc(\inr\dc [\zero,\id]\plus \inr\dc[[p\dc\inrl, q\dc\inr],\zero]\dc [[p\dc\inrl, q\dc\inr],\zero]^\rstar\dc [\zero,\id])\\
=\;& p\dc q,
\intertext{and hence}
F(p\dc q)
=\;& F(\inll\dc [[p\dc\inrl, q\dc\inr],\zero]^\rstar\dc [\zero,\id])\\
=\;& \inll\dc [[Fp\dc\inrl, Fq\dc\inr],\zero]^\rstar\dc [\zero,\id])\\
=\;& Fp\dc Fq.
\end{align*}

\textit{(Sums)} Note that, essentially by \axname{$\rstar$-Fix}, $\inl\dc [\inr,\zero]^\rstar = \inl\dc(\id\plus [\inr,\zero]\dc [\inr,\zero]^\rstar) = \inl\plus \inr\dc [\inr,\zero]^\rstar 
= \inl\plus \inr\dc (\id\plus [\inr,\zero]\dc [\inr,\zero]^\rstar) = \inl\plus \inr$. Thus, 
for any $p,q\in\BC(X,Y)$, using the previous clause, 
\begin{align*}
F(p\plus q) 
=\;& F((\inl\plus\inr)\dc [p, q])\\  
=\;& F(\inl\dc [\inr,\zero]^\rstar\dc [p, q])\\  
=\;& \inl\dc [\inr,\zero]^\rstar\dc [Fp, Fq]\\  
=\;& (\inl\plus\inr)\dc [Fp, Fq]\\
=\;& Fp\plus Fq,
\end{align*}
and we are done.
\end{proof}
Let us return to the proof of \autoref{the:free}.
The key observation is that $\sem{\inj_{n,m}\dc s^\rstar\dc r^\klstar}_\shortuparrow$
does not depend on the choice of $s$ and $r$. This is argued as follows. Using 
the construction in \autoref{pro:kleene}, for a given $t=\inj_{n,m}\dc s^\rstar\dc r^\klstar$, 
we obtain a canonical representation $t=\inj_{n,l}\dc \hat s^\rstar\dc\hat r^\klstar$,
with $\hat s\c l\to T_\nu l$, $\hat r\c l\to T_\nu k$, and this representation only depends on $t$, hence, it suffices to show that 
\begin{align}\label{eq:uni-rep}
	\sem{\inj_{n,m}\dc s^\rstar\dc r^\klstar}_\shortuparrow = \sem{\inj_{n,l}\dc \hat s^\rstar\dc\hat r^\klstar}_\shortuparrow.
\end{align}
Because of the restrictions on $s$ and $r$, there is an epimorphism $u\c m\to l$,
such that $s\dc T_\nu u = u\dc\hat s$ and $u\dc \hat r = r$. W.l.o.g.\ assume that 
$\inj_{l,m}$ is a left inverse of $u$. 
Now, \eqref{eq:uni-rep} is obtained as follows:
\begin{flalign*}
&& 	 \sem{\inj_{n,m}\dc s^\rstar\dc r^\klstar}_\shortuparrow 
=\;& \inj_{n,m}\dc \sem{s}^\rstar\dc \sem{r}\\
&&=\;& \inj_{n,m}\dc \sem{s}^\rstar\dc\sem{u} \dc\sem{\hat r}\\
&&=\;& \inj_{n,m}\dc u\dc\sem{\hat s}^\rstar\dc\sem{\hat r}&\by{\axname{$\rstar$-Uni}}\\
&&=\;& \inj_{n,l}\dc\inj_{l,m}\dc u\dc\sem{\hat s}^\rstar\dc\sem{\hat r}\\
&&=\;& \inj_{n,l}\dc \sem{\hat s}^\rstar\dc\sem{\hat r}\\
&&=\;& \sem{\inj_{n,l}\dc \hat s^\rstar\dc\hat r^\klstar}_\shortuparrow
\end{flalign*}
The defined lifting $\sem{\argument}_\shortuparrow\c \Frep\to\BC$ is easily seen to 
make~\eqref{eq:free-cpp} commute. Also, note that there is no more than one structure-preserving 
candidate for $\sem{\argument}_\shortuparrow$, to make~\eqref{eq:free-cpp} commute: indeed, 
since every morphism in $\Frep$ is representable as $t=\inj_{n,m}\dc s^\rstar\dc r^\klstar$,
$\sem{t}_\shortuparrow$ must only be defined as $\inj_{n,m}\dc \sem{s}^\rstar\dc \sem{r}$.

We are left to check that $\sem{\argument}_\shortuparrow$ is a KiCT-functor, which is 
facilitated by \autoref{lem:funct}. The only non-trivial clause is 
preservation of Kleene star. As an auxiliary step, we show that 
\begin{align}\label{eq:1plus}
\sem{\eta\plus\inj_{n,m}\dc s^\rstar\dc r^\klstar}_\shortuparrow = \sem{\eta}_\shortuparrow\plus \sem{\inj_{n,m}\dc s^\rstar\dc r^\klstar}_\shortuparrow
\end{align}
for any guarded flat $s\c m\to T_\nu m$, and a non-guarded $r\c m\to T_\nu n$.
In order to calculate the left-hand side of~\eqref{eq:1plus}, we need to 
find a suitable representation for $\id\plus\inj_{n,m}\dc s^\rstar\dc r^\klstar$.
Concretely, we show that
\begin{align*}
\eta\plus\inj_{n,m}\dc s^\rstar\dc r^\klstar = \inj_{n,m+m}\dc [\inr\dc\eta,s\dc T_\nu\inr]^\rstar\dc [[\eta_n,\zero],r]^\klstar
\end{align*}
Indeed, using \axname{$\rstar$-Fix} and \axname{$\rstar$-Uni},
\begin{align*}
\inj_{n,m+m}\dc &[\inr\dc\eta,s\dc T_\nu\inr]^\rstar\dc [[\eta_n,\zero],r]^\klstar\\
=\;& \inj_{n,m+m}\dc (\eta\plus [\inr\dc\eta,s\dc T_\nu\inr]\dc ([\inr\dc\eta,s\dc T_\nu\inr]^\rstar)^\klstar)\dc [[\eta_n,\zero],r]^\klstar\\
=\;& \eta\plus \inj_{n,m}\dc \inr\dc\eta\dc ([\inr\dc\eta,s\dc T_\nu\inr]^\rstar)^\klstar\dc [[\eta_n,\zero],r]^\klstar\\
=\;& \eta\plus \inj_{n,m}\dc s^\rstar\dc T_\nu\inr\dc [[\eta_n,\zero],r]^\klstar\\
=\;& \eta\plus \inj_{n,m}\dc s^\rstar\dc r^\klstar.
\end{align*}
Now, \eqref{eq:1plus} turns into 
\begin{align*}
\inj_{n,m+m}\dc [\inr,\sem{s}\dc\inr]^\rstar\dc [[\id_n,\zero],\sem{r}] = \id\plus \inj_{n,m}\dc \sem{s}^\rstar\dc \sem{r}.
\end{align*}
This equation is shown as above, since \axname{$\rstar$-Fix} and \axname{$\rstar$-Uni}
are sound for $\BC$.

An analogous method is used to show that $\sem{\argument}_\shortuparrow$ preserves Kleene star.
Let $s\c m\to T_\nu m$ be guarded flat, and let $r\c m\to T_\nu n$ be non-guarded,
and prove that:
\begin{align}\label{eq:star_sem}
\sem{(\inj_{n,m}\dc s^\rstar\dc r^\klstar)^\rstar}_\shortuparrow = \sem{\inj_{n,m}\dc s^\rstar\dc r^\klstar}_\shortuparrow^\rstar.
\end{align}
The following equation is provable using the axioms of KiC 
\begin{align*}
(\inj_{n,m}\dc s^\rstar\dc r^\klstar)^\rstar = \eta\plus\inj_{n,m}\dc ((r\dc T_\nu\inj_{n,m})^\rstar\dc s^\klstar)^\rstar\dc ((r\dc T_\nu\inj_{n,m})^\rstar\dc r^\klstar)^\klstar,
\end{align*}
hence, the equation
\begin{align*}
(\inj_{n,m}\dc \sem{s}^\rstar\dc\sem{r}^\klstar)^\rstar = \id\plus\inj_{n,m}\dc ((\sem{r}\dc\inj_{n,m})^\rstar\dc \sem{s})^\rstar\dc (\sem{r}\dc\inj_{n,m})^\rstar\dc \sem{r}
\end{align*}
is provable as well. Now, the proof of \eqref{eq:star_sem} is as follows:
\begin{flalign*}
&&\!\!\!\!\sem{(\inj_{n,m}\dc s^\rstar\dc r^\klstar)^\rstar}_\shortuparrow
=\;& \sem{\eta\plus\inj_{n,m}\dc ((r\dc T_\nu\inj_{n,m})^\rstar\dc s^\klstar)^\rstar\dc ((r\dc T_\nu\inj_{n,m})^\rstar\dc r^\klstar)^\klstar}_\shortuparrow\\
&&=\;&\id\plus\inj_{n,m}\dc ((\sem{r}\dc\inj_{n,m})^\rstar\dc \sem{s}^\klstar)^\rstar\dc (\sem{r}\dc\inj_{n,m})^\rstar\dc \sem{r}&\by{\eqref{eq:1plus}}\\
&&=\;& (\inj_{n,m}\dc \sem{s}^\rstar\dc\sem{r}^\klstar)^\rstar\\
&&=\;& \sem{\inj_{n,m}\dc s^\rstar\dc r^\klstar}_\shortuparrow^\rstar.&\qed
\end{flalign*}

\end{document}